\documentclass[11pt]{article}
\usepackage[a4paper,margin=1.25in]{geometry} 
\usepackage{nccmath}

\usepackage[title,toc,titletoc]{appendix}
\usepackage{etoolbox}
\usepackage{pdflscape}
\patchcmd{\appendices}{\quad}{. }{}{}
\usepackage{graphicx}
\usepackage{subfiles}
\usepackage{booktabs}
\usepackage{float}
\usepackage{xcolor,soul}
\usepackage{amsmath,amssymb,amsfonts}
\usepackage{amsthm}
\usepackage{tabu}
\usepackage{enumitem}
\usepackage[english]{babel}
\usepackage{mathrsfs}
\usepackage{mathtools}
\usepackage[mathscr]{euscript}
\usepackage{subfig}
\usepackage{adjustbox}
\usepackage{multirow}
\usepackage{xr}
\usepackage{tcolorbox}
\usepackage{rotating}
\usepackage{float}
\usepackage{setspace}
\selectfont

\usepackage{footmisc}
\usepackage{natbib}
\usepackage[resetlabels]{multibib}
\newcites{web}{Useful websites}

\usepackage{thmtools}
\usepackage[colorlinks,citecolor=blue,urlcolor=blue]{hyperref}
\usepackage[capitalize]{cleveref}

\numberwithin{equation}{section}
\renewcommand{\arraystretch}{1.2}

\def\bs{\ensuremath\boldsymbol}

\DeclareMathOperator*{\argmin}{arg\,min}

\newcommand\smallO{
	\mathchoice
	{{\scriptstyle\mathcal{O}}}%
	{{\scriptstyle\mathcal{O}}}
	{{\scriptscriptstyle\mathcal{O}}}%
	{\scalebox{.7}{$\scriptscriptstyle\mathcal{O}$}}%
}

\makeatletter
\def\blfootnote{\xdef\@thefnmark{}\@footnotetext}
\makeatother

\theoremstyle{plain}
\newtheorem{theorem}{Theorem}[section]

\theoremstyle{definition}
\newtheorem{assumption}{Assumption}
\newtheorem{remark}{Remark}

\theoremstyle{definition}

\newtheorem*{AssumptionER*}{Assumption ER*}
\newtheorem*{AssumptionER**}{Assumption ER**}
\newtheorem*{AssumptionER*(i)}{Assumption ER*(i)}
\newtheorem*{AssumptionER*(ii)}{Assumption ER*(ii)}

\newtheorem*{AssumptionCM*}{Assumption CM*}

\newenvironment{AssumptionAD}{%
	\begingroup
	\begin{assumption}%
	}{%
	\end{assumption}%
	\endgroup
}

\newenvironment{AssumptionADD}{%
	\begingroup
	\begin{assumption}%
	}{%
	\end{assumption}%
	\endgroup
}

\newenvironment{AssumptionJKD}{%
	\begingroup
	\begin{assumption}%
	}{%
	\end{assumption}%
	\endgroup
}

\newenvironment{AssumptionAD*}{%
	\begingroup
	\begin{assumption}%
	}{%
	\end{assumption}%
	\endgroup
}

\newenvironment{AssumptionJK*}{%
	\begingroup
	\begin{assumption}%
	}{%
	\end{assumption}%
	\endgroup
}

\newenvironment{AssumptionAD**}{%
	\begingroup
	\begin{assumption}%
	}{%
	\end{assumption}%
	\endgroup
}

\newenvironment{AssumptionJK}{%
	\begingroup
	\begin{assumption}%
	}{%
	\end{assumption}%
	\endgroup
}

\newenvironment{AssumptionJK**}{%
	\begingroup
	\begin{assumption}%
	}{%
	\end{assumption}%
	\endgroup
}

\declaretheoremstyle[bodyfont=\slshape]{slshape}

\newlist{thmlist}{enumerate}{1}
\setlist[thmlist]{label=\textup{(\roman{thmlisti})},
	ref=\thethm\textup{(\roman{thmlisti})}}

\Crefname{theorem}{Theorem}{Theorems}
\Crefname{lemma}{Lemma}{Lemmas}
\Crefname{proposition}{Proposition}{Propositions}
\Crefname{corollary}{Corollary}{Corollaries}
\Crefname{assumption}{Assumption}{Assumptions}
\Crefname{remark}{Remark}{Remarks}
\Crefname{thm}{Theorem}{Theorems}
\Crefname{lem}{Lemma}{Lemmas}

\newtheoremstyle{exnum}
{0pt}{0pt}
{\itshape}
{}
{\bfseries}
{.}
{ }
{\thmname{#1}\thmnumber{ {#2}}\thmnote{ (#3)}}

\theoremstyle{exnumred}
\newtheorem{example}{Example}

\newenvironment{exampleagain}[2][]{%
	\begin{trivlist}
		\setlength{\topsep}{0pt}%
		\setlength{\partopsep}{0pt}%
		\setlength{\itemsep}{0pt}%
		\setlength{\parsep}{0pt}%
		\item[\hskip\labelsep\bfseries Example~{\ref{#2}}%
		\if\relax\detokenize{#1}\relax\else\ (#1)\fi.]%
		\itshape
	}{%
	\end{trivlist}
}

\newlength{\exblocksep}   
\newlength{\exrulesep}    
\newlength{\exrulethick}  

\newif\ifexamplehrprev
\examplehrprevfalse
\newif\ifinexamplehr
\inexamplehrfalse

\makeatletter
\@ifundefined{AddToHook}{%
	\newtoks\examplehreverypar
	\examplehreverypar=\everypar
	\everypar=\expandafter{\the\examplehreverypar
		\ifinexamplehr\else\global\examplehrprevfalse\fi
	}%
}{%
	\AddToHook{para/begin}{%
		\ifinexamplehr\else\global\examplehrprevfalse\fi
	}%
}
\makeatother

\pretocmd{\section}{\global\examplehrprevfalse}{}{}
\pretocmd{\subsection}{\global\examplehrprevfalse}{}{}
\pretocmd{\subsubsection}{\global\examplehrprevfalse}{}{}

\newcommand{\examplehrstart}{%
	\global\inexamplehrtrue
	
	\par\addvspace{\ifexamplehrprev 0pt\else \exblocksep\fi}%
	\begingroup
	\setlength{\parskip}{0pt}%

	\ifexamplehrprev\else
	\noindent\rule{\linewidth}{\exrulethick}\par
	\fi
	\vspace{\exrulesep}%
}

\newcommand{\examplehrend}{%
	\par\vspace{\exrulesep}%
	\noindent\rule{\linewidth}{\exrulethick}\par
	\endgroup
	\par\addvspace{\exblocksep}%
	\global\inexamplehrfalse
	\global\examplehrprevtrue
	\ignorespacesafterend
}

\newenvironment{examplehr}[1][]{%
	\examplehrstart
	\begin{example}[#1]\normalfont
	}{%
	\end{example}%
	\examplehrend
}

\newenvironment{exampleagainhr}[2][]{%
	\examplehrstart
	\begin{exampleagain}[#1]{#2}%
	}{%
	\end{exampleagain}%
	\examplehrend
}

\usepackage{tikz}
\usetikzlibrary{calc,decorations.pathreplacing}

\begin{document}
	
\title{Jackknife Inference for Fixed Effects Models}
\vspace{-0.2cm}
\author{Ayden Higgins$^{\dagger}$\\University of Exeter}
\date{\today$^\ddagger$}
\maketitle
\vspace{-1cm}

\begin{abstract}
This paper develops a general method of inference for fixed effects models which is (i) automatic, (ii) computationally inexpensive, (iii) tuning parameter-free, and (iv) highly model agnostic. Specifically, we show how to combine a collection of subsample estimators into a jackknife $t$-statistic, from which hypothesis tests, confidence intervals, and $p$-values are readily obtained. \\

\noindent \textbf{Keywords:} Fixed effects, Jackknife bias correction, Jackknife $t$-statistics.\\
\textbf{JEL classification:} C12, C18, C23.\\	
\end{abstract}

\blfootnote{\hspace{-0.2in}$^\dagger$Address: University of Exeter Business School, Rennes Drive, Exeter, Devon, EX4 4PU, United Kingdom. Email: \texttt{a.higgins@exeter.ac.uk}.}

\blfootnote{\hspace{-0.2in}$^\ddagger$First Version: February 25, 2026. }

\newpage
\section{Introduction}
Panel models with fixed effects are ubiquitous in empirical research. Such models allow researchers to account for unobserved heterogeneity by introducing unit-specific parameters. For example, individual fixed effects may absorb time-invariant differences across individuals, such as ability, risk preferences, or innate productivity, while time fixed effects may capture aggregate shocks and common trends, such as business cycle fluctuations and policy changes that affect all units. Over time, the fixed effects models researchers wish to estimate have become increasingly complex, including more elaborate fixed effects specifications, such as time-varying group effects and interactive effects, models which feature lagged outcomes and other predetermined regressors, nonlinear and nonstandard models such as binary outcome models and quantile regression, and extensions to panels with three or more dimensions. While it is frequently possible to adapt existing estimation methods to encompass such models, statistical inference has become increasingly challenging, as such estimators typically exhibit highly complex statistical properties. 

This notwithstanding, it is often reasonable to expect that such estimators, once suitably centred and scaled, are asymptotically normal with a number of leading order bias terms. Indeed, the accumulated work of several decades has given us a good sense of the likely structure of such results, even when explicit formulae are unavailable. For example, in the context of a generic panel model with one-way fixed effects, we typically expect a standard estimator $\hat{\varphi}$ of a (scalar) common parameter $\varphi$ to be asymptotically distributed 
\begin{align}
	\sqrt{NT}( \hat{\varphi} - \varphi )
	\xrightarrow{d}
	\mathcal{N}( \mu, \sigma^2 ), \label{oneway}
\end{align}
as both dimensions of the panel $N,T \rightarrow \infty$ and $N/T \rightarrow \gamma^2 \in (0, \infty)$. Similarly, for a generic panel model with two-way fixed effects, we typically expect a standard estimator to be asymptotically distributed 
\begin{align}
	\sqrt{NT}( \hat{\varphi} - \varphi )
	\xrightarrow{d}
	\mathcal{N}( \mu_1 + \mu_2, \sigma^2 ), \label{twoway}
\end{align}
under the same asymptotic regime, where we distinguish $\mu_{1}$ and $\mu_{2}$ by their origin in either cross-sectional or time fixed effects. This paper develops a method of inference designed for such situations where, rather than working with model and estimator specific expressions, our method rests upon two high-level assumptions. First, we assume the existence of an estimator of a parameter of interest which, once centred and scaled, is asymptotically normally distributed with a number of \textit{unknown} bias terms and an \textit{unknown} variance. Second, we assume the existence of a collection of subsample estimators whose joint distribution (along with the full sample estimator) can be characterised through a bias matrix and a covariance matrix, both of which are implied by the known subsampling design. Under these conditions we derive a jackknife $t$-statistic which provides a general method of inference for fixed effects models which is automatic, computationally inexpensive, tuning parameter-free, and highly model agnostic.

To illustrate the simplicity of our statistics, at a most basic level, we expect for a generic panel model with one-way (cross-sectional) fixed effects that
\begin{align}
	\mathcal{J}
	\coloneqq
	\frac{ 4 \hat{\varphi}^{(0)} - \hat{\varphi}^{(1)} - \hat{\varphi}^{(2)} - 2 \varphi }{ | \hat{\varphi}^{(1)} - \hat{\varphi}^{(2)} | }
	\xrightarrow{d}
	t_1, \label{eq1}
\end{align}
as $N,T \rightarrow \infty$ and $N/T \rightarrow \gamma^2 \in (0, \infty)$, where $\mathcal{J}$ denotes our jackknife $t$-statistic, $\hat{\varphi}^{(0)}$ denotes an estimator obtained using the full sample, $\hat{\varphi}^{(1)}$ and $\hat{\varphi}^{(2)}$ denote subsample estimators obtained by partitioning the data into the first $1, \dots, T/2$ time periods, and the last $T/2+1, \ldots, T$ time periods, respectively, and $t_1$ denotes the $t$-distribution with one degree of freedom.\footnote{We presume that $T$ is even. The case of time fixed effects is symmetric and we instead partition the cross-section.} Corresponding to \eqref{eq1}, a $( 1 - \alpha )$ two-sided confidence interval for $\varphi$ is obtained as
\begin{align}
	\left[
	 2 \hat{\varphi}^{(0)}
	- \frac{1}{2}\hat{\varphi}^{(1)}
	- \frac{1}{2}\hat{\varphi}^{(2)}
	\pm
	t_{ 1, 1 - \alpha / 2 }
	\frac{1}{2}
	|
	\hat{\varphi}^{(1)} - \hat{\varphi}^{(2)}
	|
	\right], \notag
\end{align}
where $t_{ 1, 1 - \alpha / 2 }$ denotes the $( 1 - \alpha / 2 )$ quantile of the $t$-distribution with one degree of freedom. In parallel, for a generic panel model with two-way (cross-sectional and time) fixed effects we expect that
\begin{align}
	\mathcal{J}
	\coloneqq
	\frac{ 6 \hat{\varphi}^{(0)}
		- \hat{\varphi}^{(1)}
		- \hat{\varphi}^{(2)}
		- \hat{\varphi}^{(3)}
		- \hat{\varphi}^{(4)}
		- 2 \varphi
	}{
		\left|
		\hat{\varphi}^{(2)} - \hat{\varphi}^{(1)}
		\right|
	}
	\xrightarrow{d}
	t_1, \label{eq2}
\end{align}
as $N, T \rightarrow \infty$ and $N / T \rightarrow \gamma^2 \in ( 0, \infty )$, where $\hat{\varphi}^{(0)},\hat{\varphi}^{(1)}$, and $\hat{\varphi}^{(2)}$ are as described previously, and $\hat{\varphi}^{(3)}$, $\hat{\varphi}^{(4)}$ denote subsample estimators obtained by partitioning the data into the first $1, \ldots, N / 2$ cross-sectional units, and the last $N / 2 + 1, \ldots, N$ cross-sectional units, respectively.\footnote{We presume that both $N$ and $T$ are even.} Corresponding to \eqref{eq2}, a $( 1 - \alpha )$ two-sided confidence interval for $\varphi$ is obtained as
\begin{align}
	\left[
	3 \hat{\varphi}^{(0)}
	- \frac{1}{2} \hat{\varphi}^{(1)}
	- \frac{1}{2} \hat{\varphi}^{(2)}
	- \frac{1}{2} \hat{\varphi}^{(3)}
	- \frac{1}{2} \hat{\varphi}^{(4)}
	\pm
	t_{1,1-\alpha/2}
	\frac{1}{2}
	\left|
	\hat{\varphi}^{(2)} - \hat{\varphi}^{(1)}
	\right|
	\right].
	\notag 
\end{align}

\subsection{Related literature}
The incidental parameter problem (IPP), first articulated by \citet{neyman}, arises in models where the number of parameters grows with sample size. For panel models with fixed effects this growth is inherent, since the number of fixed effects increases with $N$ and/or $T$.\footnote{For ease of exposition we focus initially on two-dimensional panels. See Section \ref{KWM} for discussion of multi-dimensional panels.} The central consequence of the IPP is that estimators of low-dimensional objects of interest, such as slope coefficients and average partial effects, can exhibit nonnegligible bias. Under large $N$, fixed $T$ asymptotics, this bias can be so severe that it results in inconsistency. Under large $N$, large $T$ asymptotics, standard estimators are typically consistent, but their limiting distributions are not centred at zero, referred to as asymptotic bias. As a result, conventional methods of inference built upon known, zero-centred limiting distributions are no longer valid; in particular standard Wald, likelihood ratio, and Lagrange multiplier tests, as well as confidence intervals and $p$-values, will generally be incorrect unless the bias is explicitly accounted for; see, e.g., \citet{fernandez2018fixed}. This has motivated a large literature which seeks to understand the consequences of the IPP in specific settings and to develop valid methods of inference.

The most standard route to valid inference is to eliminate asymptotic bias so that conventional large sample approximations apply and, in particular, to do so by analytic means. Classically, this is achieved by characterising (i.e., deriving a formula for) the bias, and using this to construct a ``bias-corrected'' estimator.\footnote{More broadly, related analytic strategies modify the objective function or score; see, e.g., \cite{arellano2016likelihood}, \cite{jochmans2019likelihood}, and \cite{bonhomme2024neyman}.} The main limitation of such approaches is, however, that the appropriate modifications needed to bias-correct, construct a debiased estimator, or otherwise eliminate asymptotic bias are model and estimator-specific, meaning that these need to be derived on a case by case basis. For sufficiently standard problems it is possible to appeal to quite broad results which have been obtained for one-way \citep{hahn2004jackknife,hahn2011bias}, two-way  \citep{fernandez2016individual}, and three-way fixed effects models \citep{czarnowske2025debiased}. However, the resulting expressions are very cumbersome, and require a researcher to adapt these to their specific setting, usually by computing layers of derivatives and/or application of the delta method.\footnote{For example, when the parameter of interest is a function of both the fixed effects and common parameters.} Moreover, their implementation is often dependent on tuning parameters to which their performance can be highly sensitive. Beyond these concerns, as fixed effects models have become more complex, results have become increasingly challenging to derive, and, for that matter, increasingly tedious, leading to an absence of results in empirically relevant cases. As a consequence, recent years have seen the widespread adoption of automatic methods of inference for fixed effects models.

The jackknife provides one such approach. \citet{hahn2004jackknife} proposed leave-one-out jackknife bias correction for static panel models with one-way fixed effects. This has been extended to models with two-way fixed effects in \cite{hughes2026jackknife}. While such constructions may exhibit favourable higher-order properties (see, e.g. \cite{hahn2024efficient}), such schemes are not appropriate when data exhibit dependence, which is common in panel data, and, moreover, command a computational cost comparable with the bootstrap due to their large number of auxiliary estimations. As an alternative, \citet{dhaene2015split} propose a philosophically distinct approach called the split-panel jackknife. Their method is particularly attractive because it is both computationally inexpensive (requiring as little as two auxiliary estimations) and highly model agnostic; in particular, it accommodates dependence. The split-panel jackknife has been extended to two-way fixed effects models in \cite{fernandez2016individual} and has become the default method of bias correction in many applications. However, jackknife bias correction, by whichever means, does not deliver a complete method of inference on its own: a bias-corrected estimator must be paired with a separate (typically analytic) variance estimator in order to conduct inference. In this sense, existing jackknife approaches provide only a \textit{semi}-automatic method of inference.

The bootstrap can provide an alternative \textit{fully}-automatic method of inference for fixed effects models. In particular, if it is possible to devise a bootstrap scheme which correctly replicates the asymptotic distribution of an estimator, including any bias, then inference can be based on the bootstrap distribution without the need to make further adjustments; see, e.g, \citet{gonccalves2015bootstrap} and \citet{higgins2024bootstrap}. However, the task of designing such a scheme is nontrivial. For example, \citet{gonccalves2015bootstrap} show that common bootstrap schemes match the asymptotic variance of the least squares estimator, while failing to replicate its asymptotic bias, in the context of a dynamic panel model with individual fixed effects. Although in this case valid bootstrap schemes can be constructed (see further results in \citet{gonccalves2015bootstrap} and \citet{higgins2025inference}), more generally it must be verified that any proposed scheme does indeed yield valid inference for a given application. Moreover, the bootstrap incurs a computational cost which, although sometimes overstated, cannot be dismissed in the context of fixed effects models due to the large sample sizes and the corresponding large number of parameters.

In this paper we return to the jackknife and develop a complete method of inference underpinned by the ``small number of large subsamples'' philosophy of the split-panel jackknife. To achieve our purpose, we first construct a minimum-variance unbiased jackknife (MVUJ) estimator of a parameter of interest, obtained as an unbiased linear combination of the full sample and subsample estimators that minimises asymptotic variance. We then use this to construct a jackknife $t$-statistic, which provides a straightforward means to conduct inference, even when the statistical properties of an estimator are highly complex. Beyond our simplest constructions, we develop jackknife $t_q$-statistics with $q > 1$ which yield sharper inference, at the expense of additional complexity, and ultimately derive a jackknife counterpart to Hotelling's $t^2$-statistic \citep{hotelling1931generalization} which allows us to extend our method of inference to encompass vector parameters and to accommodate joint hypothesis tests. Overall we provide a general framework for the systematic development of jackknife bias-corrected estimators and of jackknife $t$-statistics.  

\section{Jackknife Inference}
Suppose we observe a panel data set with entries indexed by $i = 1, \ldots, N$ and $t = 1, \ldots, T$, and wish to perform inference on a scalar parameter $\varphi$ in the presence of a large number of fixed effects, which are present in one or more dimensions of the panel. The parameter of interest $\varphi$ might be, for example, an individual slope coefficient, a linear combination of the form $\varphi \coloneqq \bs{a}^\top \bs{\beta}$, where $\bs{\beta}$ is a vector of primitive model parameters, or a more complicated function of primitive parameters (including fixed effects), such as an average partial effect. We make the following assumption. 

\begin{AssumptionAD}\label{AAD}
There exists an estimator of $\varphi$, denoted by $\hat{\varphi}^{(0)}$, such that
\begin{align}
		r_{NT}( \hat{\varphi}^{(0)} - \varphi )
		=
		z_{NT}
		+
		\sum_{r=1}^R \mu_{r,NT}
		+
		\smallO_{p}( 1 ), \label{reds}
\end{align}
as $N,T \rightarrow \infty$ and $N/T \rightarrow \gamma^2 \in ( 0, \infty )$, where $\mu_{1,NT}, \ldots, \mu_{R,NT}$ are nonstochastic, but possibly $N$ and $T$ dependent, and 
	\begin{align}
		z_{NT}
		\xrightarrow{d}
		\mathcal{N}( 0, \sigma^2 ). \notag 
	\end{align} 
\end{AssumptionAD}
Assumption \ref{AAD} assumes the existence of an estimator $\hat{\varphi}^{(0)}$ that is asymptotically normal, and which may feature one or more bias terms.\footnote{Note that we do not assume $\mu_{r,NT}$ to be bounded nor convergent.} Expansions of the form \eqref{reds} have been established for a large range of models, estimators, and parameters of interest. Examples include least squares (LS) and generalized method of moments (GMM) estimators of linear dynamic panel models with individual fixed effects \citep[][]{nickell,hahn2002asymptotically,arellano}, extremum estimators of nonlinear models with individual fixed effects such as logit and probit models \citep[][]{hahn2004jackknife,fernandez2009fixed,hahn2011bias}, nonlinear models with individual and time fixed effects \citep{fernandez2016individual}, and LS and maximum likelihood (ML) estimators of linear and nonlinear models with interactive fixed effects \citep[][]{bai2009panel,moon_dynamic_nodate,chen2021nonlinear}, amongst many others. When ${\varphi}$ is a common parameter, or a sufficiently smooth function thereof, one typically has $r_{NT} = \sqrt{NT}$. When $\varphi$ is a function of both common parameters and fixed effects, it is possible that the rate may be slower; see \citet{fernandez2018fixed} for further discussion.

\begin{remark}
	Many standard estimators are asymptotically linear, meaning that the first term in the representation \eqref{reds} is a normalised sum of the form
	\begin{align}
		z_{NT}
		=
		\frac{1}{a_{NT}} \sum_{(i,t) \in S} \nu_{it},
		\notag
	\end{align}
	where $a_{NT}$ denotes an appropriate normalising sequence and $S$ denotes a set of indices. For example, when the object of interest is a common parameter one typically has
	\begin{align}
		z_{NT}
		=
		\frac{1}{\sqrt{NT}} \sum_{(i,t) \in S_0} \nu_{it},
		\notag
	\end{align}
	with $S_0 \coloneqq \{ 1, \ldots, N \} \times \{ 1, \ldots,  T \}$. Asymptotic normality of $z_{NT}$ then follows through a central limit theorem. 
\end{remark}

\begin{remark}
	Though many results of the form \eqref{reds} have been established, there are many more cases of interest where we might expect that such results could, in principle, be obtained. For example, a distributional result may have been established with only strictly exogenous regressors, however, one might reasonably suspect that a similar result also holds when lagged outcomes appear as regressors, with existing theory suggesting the likely structure (in terms of bias) even if it does not deliver the exact expressions.
\end{remark}

We now introduce some examples to illustrate Assumption \ref{AAD} that we shall return to throughout the remainder of the paper.

\begin{examplehr}[One-way Effects]\label{ex:onefe}\small
	Consider a dynamic regression model where outcomes are generated according to
	\begin{align}
		y_{i t}
		=
		\lambda_{i} + \varphi y_{i,t - 1} + \varepsilon_{it},
		\qquad
		i = 1, \ldots, N,
		\quad
		t =	1, \ldots, T,
		\notag
	\end{align}
	where $\varepsilon_{it} \sim \textup{iid}\ \mathcal{N}( 0, 1 )$, $\varepsilon_{it}$ is independent of $\lambda_{i}$, and $| \varphi | < 1$. Let $\hat{\varphi}^{(0)}$ be the LS estimator of $\varphi$. As
	$N, T \rightarrow \infty$ and $N / T \rightarrow \gamma^2 \in ( 0, \infty )$,
	\begin{align}
		\sqrt{ N T }
		\big( \hat{\varphi}^{(0)} - \varphi \big)
		&=
		\frac{ ( 1 - \varphi^2 ) }{ \sqrt{NT} }
		\sum_{ (i,t) \in S_0}
		 y_{i,t - 1} \varepsilon_{it}
		-
		\sqrt{\frac{N}{T}} ( 1 + \varphi )
		+
		\smallO_p( 1 ), \label{firstq}
	\end{align}
	where $S_0 \coloneqq \{ 1, \ldots, N \} \times \{ 1, \ldots, T \}$, and
	\begin{align}
		\frac{( 1 - \varphi^2 )}{\sqrt{NT}}
		\sum_{ (i,t) \in S_0}
		y_{i,t - 1} \varepsilon_{it}
		\xrightarrow{d}
		\mathcal{N} \big( 0, 1 - \varphi^2 \big).
		\notag
	\end{align}
	Hence, Assumption \ref{AAD} holds with
	\begin{align}
		z_{N T}
		=
		\frac{ (1 - \varphi^2) }{ \sqrt{NT} }
		\sum_{ (i,t) \in S_0 }
		 y_{i,t - 1} \varepsilon_{it},
		\notag
	\end{align}
	$r_{NT} = \sqrt{NT}$, $R = 1$, $\mu_{NT} = - \sqrt{\frac{N}{T}}( 1 + \varphi )$, and $\sigma^2 = 1 - \varphi^2$.
	\normalsize
\end{examplehr}

\begin{examplehr}[Two-way Effects]\label{ex:twofe}\small
	Consider a panel probit model where outcomes are generated according to
	\begin{align}
		y_{it}
		&=
		\mathbf{1}\left\{ \lambda_i + \gamma_t + \varphi d_{it} \geq \varepsilon_{it} \right\},
		\qquad
		i = 1, \ldots, N,
		\quad
		t = 1, \ldots, T,
		\notag
	\end{align}
	where $\varepsilon_{it} \sim \textup{iid}\ \mathcal{N}(0,1)$, $\lambda_i \sim \text{iid}$, $\gamma_t  \sim \text{iid}$, and $d_{it} \sim \text{iid}$ are mutually independent. Let $\hat{\varphi}^{(0)}$ be the ML estimator of $\varphi$. Define
	\begin{align}
		\varpi_{it} \coloneqq \lambda_i + \gamma_t + \varphi d_{it}, \quad
		H_{it}      \coloneqq \frac{ \phi( \varpi_{it} ) }{ \Phi( \varpi_{it} ) ( 1 - \Phi( \varpi_{it} ) )}, \quad \text{and} \quad
		\omega_{it} \coloneqq \phi(\varpi_{it}) H_{it},	
		\notag 
	\end{align} 
	where $\Phi( \cdot )$ and $\phi( \cdot )$ denote the CDF and PDF of the standard normal distribution, respectively. 
	Let $\tilde{d}_{it}$ denote the residual from the projection of $d_{it}$ on the space spanned by the individual and time effects weighted by $\omega_{it}$, and define
	\begin{align}
		\sigma^2
		\coloneqq
		\lim_{N,T \rightarrow \infty}
		\mathbb{E} 
		\left[
		\frac{1}{NT} \sum_{i=1}^{N} \sum_{t=1}^T
		\omega_{it} \tilde d_{it}^{2}
		\right]^{-1}. \footnotemark  \notag
	\end{align}	
	Using Theorem 4.1 in \citet{fernandez2016individual} (see also Theorem C.1 for the underlying expansion), as
	$N, T \rightarrow \infty$ with $N / T \rightarrow \gamma^2 \in ( 0, \infty )$, \footnotetext{That is, let
		$( \bs{a}^\ast, \bs{g}^\ast ) \coloneqq \argmin_{ \bs{a}, \bs{g} } \frac{1}{NT} \sum^N_{i=1}\sum^T_{t=1} \mathbb{E}[ \omega_{it}( d_{it} - a_i - g_t )^2 ]$ (under a suitable normalisation for $a_i + g_t$). Then $\tilde{d}_{it} \coloneqq d_{it} - a_i^{\ast} - g_t^{\ast}$; see \citet{fernandez2016individual}.}
	\begin{align}
		\sqrt{NT} \big( \hat{\varphi}^{(0)} - \varphi \big)
		=
		\frac{ \sigma^2 }{ \sqrt{NT} } \sum_{ (i,t) \in S_0 } \tilde{d}_{it} H_{it} ( y_{it} - \Phi( \varpi_{it} ) )
		+ 
		\sqrt{ \frac{N}{T} } \sigma^2 b_1 + \sqrt{ \frac{T}{N} } \sigma^2 b_2
		+ 
		\smallO_p( 1 ), \label{thisth}
	\end{align}
	where $S_0 \coloneqq \{ 1, \ldots, N \} \times \{ 1, \ldots, T \}$,
	\begin{align}
		\frac{ \sigma^2 }{ \sqrt{NT} } \sum_{ (i,t) \in S_0 } \tilde{d}_{it} H_{it} ( y_{it} - \Phi( \varpi_{it} ) )
		\xrightarrow{d}
		\mathcal{N}( 0, \sigma^2 ), \notag
	\end{align}
	\begin{align}
		b_1
		\coloneqq
		\mathbb{E} \left[
		\frac{1}{2N} \sum_{i=1}^{N}
		\frac{ \sum_{t=1}^T \omega_{it} \tilde{d}_{it}^{2} }{ \sum_{t=1}^T \omega_{it} }
		\right] \varphi,
		\qquad \text{and} \qquad
		b_2
		\coloneqq
		\mathbb{E} \left[
		\frac{1}{2T}\sum_{t=1}^T
		\frac{ \sum_{i=1}^{N} \omega_{it} \tilde{d}_{it}^{2} }{ \sum_{i=1}^{N} \omega_{it} }
		\right]\varphi. \notag
	\end{align}
	Hence, Assumption \ref{AAD} holds with
	\begin{align}
		z_{NT}
		=
		\frac{ \sigma^2 }{ \sqrt{NT} } \sum_{ (i,t) \in S_0} \tilde{d}_{it} H_{it} ( y_{it} - \Phi( \varpi_{it} ) ), \notag
	\end{align}
	$r_{NT} = \sqrt{NT}$, $R = 2$, $\mu_{1,NT} = \sqrt{\frac{N}{T}} \sigma^2 b_1$, $\mu_{2,NT} = \sqrt{\frac{T}{N}} \sigma^2 b_2$, and $\sigma^2$ as defined above.
	\normalsize
\end{examplehr}

We now introduce our second assumption which concerns the existence and joint asymptotic distribution of subsample estimators.
\begin{AssumptionJK}\label{AJK}
	Let $m \geq R + 2$ be a fixed integer. There exist $m - 1$ estimators $\hat{\varphi}^{(1)}, \ldots, \hat{\varphi}^{( m - 1 )}$ obtained from subsamples of the data such that 
	\begin{align}\label{eq:AJKk}
		r_{NT}
		\left(
		\bs{\hat{\varphi}}
		-
		\varphi \bs{\iota}_m
		\right)
		&=
		\bs{z}_{NT}
		+
		\bs{A} \bs{\mu}_{NT}
		+
		\smallO_p( 1 ),
	\end{align}
	as $N, T \rightarrow \infty$ and $N / T \rightarrow \gamma^2 \in ( 0, \infty )$, where
	\begin{align}
		\bs{z}_{NT} \xrightarrow{d} \mathcal{N} \left( \bs{0}_m, \sigma^2 \bs{C} \right), \notag
	\end{align}
	 $\bs{\hat{\varphi}} \coloneqq (\hat{\varphi}^{(0)}, \ldots, \hat{\varphi}^{( m - 1 )})^\top$, $\bs{A}$ and $\bs{C}$ are known matrices of dimension $m \times R$ and $m \times m$, respectively, $\textup{rank}(\bs{A}) = R$, $\bs{\iota}_m \notin \textup{col}( \bs{A} )$, $\textup{null}( (\bs{A}, \bs{\iota}_m )^\top) \not\subseteq \textup{null}(\bs{C})$, $\bs{C} \succeq \bs{0}_{m \times m}$, and $\bs{\mu}_{NT} \coloneqq (\mu_{1,NT}, \ldots, \mu_{R,NT})^\top$.
\end{AssumptionJK}

Assumption \ref{AJK} assumes the existence of subsample estimators $\hat{\varphi}^{(1)}, \ldots, \hat{\varphi}^{(m-1)}$ that exhibit different linear combinations of the same bias terms $\mu_{1,NT}, \ldots, \mu_{R,NT}$. These different linear combinations are represented in the matrix $\bs{ A }$. The matrix $\bs{ C }$ encodes the dependence structure amongst the subsamples. Both $\bs{A}$ and $\bs{C}$ are treated as known because they are determined by a known, deterministic subsampling scheme, but otherwise Assumption \ref{AJK} is deliberately agnostic about underlying model and estimation approach. 

To provide some motivation behind Assumption \ref{AJK}, consider a generic two-way fixed effects model. Suppose an estimator admits an expansion of the form
\begin{align}
	\sqrt{NT}
	\big(
	\hat{\varphi}^{(0)}
	-
	\varphi
	\big)
	&=
	\frac{1}{\sqrt{NT}}
	\sum_{ (i,t) \in S_0 }
	\nu_{ it }
	-
	\sqrt{\frac{N}{T}} b_1
	-
	\sqrt{\frac{T}{N}} b_2
	+
	\smallO_p( 1 ),
	\notag
\end{align}
where $b_1$ and $b_2$ are nonstochastic, and $\nu_{it}$ is mean zero. Consider a subsample which takes $T_1$ time-series observations and $N_1$ cross-sectional observations, and let $\hat{\varphi}^{(1)}$ denote the corresponding estimator computed on this subsample. One typically obtains a corresponding expansion
\begin{align}
	\sqrt{N_1 T_1}\big( \hat{\varphi}^{(1)} - \varphi \big)
	&=
	\frac{1}{\sqrt{N_1 T_1}} \sum_{(i,t) \in S_1} \nu_{it}
	-
	\sqrt{ \frac{N_1}{T_1} } b_1
	-
	\sqrt{ \frac{T_1}{N_1} } b_2
	+
	\smallO_{p}( 1 ),
	\notag
\end{align}
as $N_1, T_1 \rightarrow \infty$ and $N_1 / T_1 \rightarrow \gamma_1^2 \in ( 0, \infty )$, where $S_1 \coloneqq \{ 1, \ldots, N_1 \} \times \{ 1, \ldots, T_1 \}$.\footnote{For data which exhibit time-series dependence, consecutive observations may be needed in order to preserve the bias.} Assume, for simplicity, that $\nu_{it}$ are identically and independently distributed with
$\mathbb{E}[ \nu_{it} ] = 0$, $\mathbb{V}[ \nu_{it} ] = \sigma^2 \in ( 0, \infty )$, and $\mathbb{E}\big[ | \nu_{it} |^{ 2 + \delta } \big] < \infty$ for some $\delta > 0$. Moreover, assume $N_1 / N \rightarrow \gamma_N \in ( 0, 1 ]$ and $T_1 / T \rightarrow \gamma_T \in ( 0, 1 ]$ as $N, T \rightarrow \infty$.
Under these conditions,
\begin{align}
	\frac{1}{\sqrt{NT}}
	\begin{pmatrix}
		1 & 0 \\
		0 & \frac{NT}{N_1 T_1}
	\end{pmatrix}
	\begin{pmatrix}
		\sum_{(i,t) \in S_0} \nu_{it} \\[2pt]
		\sum_{(i,t) \in S_1} \nu_{it}
	\end{pmatrix}
	&\xrightarrow{d}
	\mathcal{N}\left(
	\begin{pmatrix}
		0 \\
		0
	\end{pmatrix},
	\sigma^2
	\begin{pmatrix}
		1 & 1 \\
		1 & ( \gamma_N \gamma_T )^{-1}
	\end{pmatrix}
	\right). \notag
\end{align}
Define $r_{NT} \coloneqq \sqrt{NT}$,
\begin{align}
	\bs{\hat{\varphi}}
	\coloneqq
	\begin{pmatrix}
		\hat{\varphi}^{(0)} \\
		\hat{\varphi}^{(1)}
	\end{pmatrix},
	\qquad
	\text{and} 
	\qquad
	\bs{\mu}_{NT}
	\coloneqq
	-
	\begin{pmatrix}
		\sqrt{ \frac{N}{T} } b_1 \\[2pt]
		\sqrt{ \frac{T}{N} } b_2
	\end{pmatrix}.
	\notag
\end{align}
We then obtain
\begin{align}
	r_{NT}\big( \bs{\hat{\varphi}} - \varphi \bs{\iota}_2 \big)
	&=
	\bs{z}_{NT}
	+
	\bs{A}\bs{\mu}_{NT}
	+
	\smallO_{p}( 1 ),
	\notag
\end{align}
where
\begin{align}
	\bs{z}_{NT}
	\coloneqq
	\frac{1}{\sqrt{NT}}
	\begin{pmatrix}
		1 & 0 \\
		0 & \frac{NT}{N_1 T_1}
	\end{pmatrix}
	\begin{pmatrix}
		\sum_{(i,t) \in S_0} \nu_{it} \\[2pt]
		\sum_{(i,t) \in S_1} \nu_{it}
	\end{pmatrix},
	\qquad
	\bs{A}
	\coloneqq
	\begin{pmatrix}
		1 & 1 \\[2pt]
		\frac{T}{T_1} & \frac{N}{N_1}
	\end{pmatrix},
	\notag
\end{align}
and
\begin{align}
	\bs{z}_{NT}
	\xrightarrow{d}
	\mathcal{N}\big( \bs{0}_2, \sigma^2 \bs{C} \big),
	\qquad \text{with} \qquad
	\bs{C}
	=
	\begin{pmatrix}
		1 & 1 \\
		1 & ( \gamma_N \gamma_T )^{-1}
	\end{pmatrix}.
	\notag
\end{align}

\begin{remark}
	In the illustration above there are two distinct bias terms and therefore $m \geq 2$ is required for $\text{rank}( \bs{A} ) = R$. However, when $m = 2$ the additional requirement $\bs{\iota}_m \notin \text{col}( \bs{A} )$ cannot hold: if $\text{rank}( \bs{A}) = 2$ and $m = 2$, then $\text{col}( \bs{A} ) = \mathbb{R}^2$, and necessarily $\bs{\iota}_2 \in \text{col}( \bs{A} )$. Therefore a further subsample estimator is required to satisfy Assumption \ref{AJK} in this example.
\end{remark}

\begin{remark}
		In the illustration above, the assumption that $\nu_{it}$ are identically and independently distributed is adopted for convenience; Assumption \ref{AJK} neither requires that the data be identically nor independently distributed. 
\end{remark}

\begin{remark}
	By taking $m$ to be fixed in Assumption \ref{AJK} we rule out delete-one / leave-one-out jackknife schemes, and more generally designs with $m \to \infty$. We focus here on such schemes due to their computational simplicity and ability to accommodate dependence, leaving a unified treatment for future work.
\end{remark}

We now return to our examples to provide further illustration of Assumption \ref{AJK}.

\begin{exampleagainhr}[One-way Effects]{ex:onefe} \small
	From \eqref{firstq} we have
	\begin{align}
		r_{NT}( \hat{\varphi}^{(0)} - \varphi )
		=
		\frac{1}{\sqrt{NT}}
		\sum_{ (i,t)\in S_0 } \nu_{it}
		+
		\mu_{NT}
		+
		\smallO_p(1),
		\notag
	\end{align}
	as $N, T \rightarrow \infty$ and $N / T \rightarrow \gamma^2 \in ( 0, \infty )$, where $r_{NT} = \sqrt{NT}$, $\nu_{it}\coloneqq ( 1 - \varphi^2 ) y_{i,t-1} \varepsilon_{it}$, and $\mu_{N T} = - \sqrt{\frac{N}{T}} ( 1 + \varphi )$. Assume that $T = 2 T_2$ where $T_2$ is an integer. 
	Define subsamples
	\begin{align}
		S_1 &\coloneqq \{ 1, \ldots, N \} \times \{ 1, \ldots, T_2 \}, \notag \\
		S_2 &\coloneqq \{ 1, \ldots, N \} \times \{ T_2 + 1, \ldots, T \}. \notag
	\end{align}
	Let $\hat{\varphi}^{(j)}$ be the LS estimator using subsample $S_j$ for $j = 1, 2$. One then obtains
	\begin{align}
		\sqrt{NT_2}\big( \hat{\varphi}^{(1)} - \varphi \big)
		=&\
		\frac{1}{\sqrt{N T_2}}
		\sum_{ (i,t) \in S_1 } \nu_{it}
		+
		\sqrt{2} \mu_{NT}
		+
		\smallO_p(1), \notag
	\end{align}
	and
	\begin{align}
		\sqrt{NT_2}\big( \hat{\varphi}^{(2)} - \varphi \big)
		=&\
		\frac{1}{\sqrt{N T_2}}
		\sum_{ (i,t) \in S_2 } \nu_{it}
		+
		\sqrt{2} \mu_{NT}
		+
		\smallO_p(1), \notag
	\end{align}
	as $N, T \rightarrow \infty$ and $N / T \rightarrow \gamma^2$. Let $\bs{\hat{\varphi}} \coloneqq ( \hat{\varphi}^{(0)}, \hat{\varphi}^{(1)}, \hat{\varphi}^{(2)} )^\top$.
	Using the above,
	\begin{align}
		r_{NT}
		\big( \bs{\hat{\varphi}} - \varphi \bs{\iota}_3 \big)
		=
		\frac{1}{\sqrt{NT}}
		\begin{pmatrix}
			1 & 0 & 0  \\
			0 & 2 & 0  \\
			0 & 0 & 2
		\end{pmatrix}
		\begin{pmatrix}
			\sum_{(i,t)\in S_0} \nu_{it} \\
			\sum_{(i,t)\in S_1} \nu_{it} \\
			\sum_{(i,t)\in S_2} \nu_{it}
		\end{pmatrix}
		+
		\mu_{NT}
		\begin{pmatrix}
			1 \\
			2 \\
			2
		\end{pmatrix}
		+
		\smallO_p( 1 ).
		\notag
	\end{align}
	Equivalently,
	\begin{align}
		r_{NT}
		\big( \bs{\hat{\varphi}} - \varphi \bs{\iota}_3 \big)
		=
		\bs{z}_{NT}
		+
		\bs{A} \bs{\mu}_{NT}
		+
		\smallO_p( 1 ),
		\notag
	\end{align}
	where
	\begin{align}
		\bs{z}_{NT}
		\coloneqq
		\frac{1}{\sqrt{NT}}
		\begin{pmatrix}
			1 & 0 & 0  \\
			0 & 2 & 0  \\
			0 & 0 & 2
		\end{pmatrix}
		\begin{pmatrix}
			\sum_{(i,t)\in S_0} \nu_{it} \\
			\sum_{(i,t)\in S_1} \nu_{it} \\
			\sum_{(i,t)\in S_2} \nu_{it}
		\end{pmatrix},
		\qquad
		\bs{z}_{NT} \xrightarrow{d} \mathcal{N}( \bs{0}_3, \sigma^2 \bs{C} ),
		\notag
	\end{align}
	$\bs{\mu}_{NT} \coloneqq \mu_{NT}$,
	\begin{align}
		\bs{C}
		\coloneqq
		\begin{pmatrix}
			1 & 1 & 1\\
			1 & 2 & 0\\
			1 & 0 & 2
		\end{pmatrix},
		\qquad \text{and} \qquad
		\bs{A}
		\coloneqq
		\begin{pmatrix}
			1 \\
			2 \\
			2
		\end{pmatrix}.\footnotemark 
		\notag
	\end{align}
Finally, we verify the remaining conditions of Assumption \ref{AJK}. First, $\textup{rank}( \bs{A} ) = 1$, and $\bs{\iota}_3 \notin \textup{col}( \bs{A} )$ because $\bs{\iota}_3$ is not a scalar multiple of $\bs{A}$. Next, it is easily verified that $\bs{C}$ is positive semidefinite with eigenvalues $0$, $2$, and $3$. Let $\bs{D} \coloneqq ( \bs{A}, \bs{\iota}_3 )$. Solving $\bs{u}^\top \bs{A} = 0$ and $\bs{u}^\top \bs{\iota}_3 = 0$ yields $\textup{null}( \bs{D}^\top ) = \textup{span}\big( ( 0, 1, -1 )^\top \big)$. For $\bs{u}_1 \coloneqq ( 0, 1, -1 )^\top$ we have $\bs{u}_1^\top \bs{C} \bs{u}_1 = 4 > 0$, whereby $\textup{null}( \bs{D}^\top ) \not\subseteq \textup{null}( \bs{C} )$. \footnotetext{With a slight abuse of notation as regards $\bs{\mu}_{NT}$.}
\end{exampleagainhr}

\begin{exampleagainhr}[Two-way Effects]{ex:twofe}\small
	From \eqref{thisth}, 
	\begin{align}
		r_{NT} \big( \hat{\varphi}^{(0)} - \varphi \big)
		=
		\frac{1}{\sqrt{ N T }}
		\sum_{(i, t) \in S_0} \nu_{it}
		+ \mu_{1,NT}
		+ \mu_{2,NT}
		+ \smallO_p( 1 ), \notag
	\end{align}
	as $N,T \rightarrow \infty$ and $N/T \rightarrow \gamma^2 \in ( 0, \infty )$, where $r_{NT} = \sqrt{NT}$, $\nu_{it} \coloneqq \sigma^2 \tilde{d}_{it} H_{it} ( y_{it} - \Phi( \varpi_{it} ) )$, and $\mu_{1,NT}$, $\mu_{2,NT}$ are defined in Example~\ref{ex:twofe}. Next, assume that $N = 2 N_2$ and $T = 2T_2$ where $N_2$ and $T_2$ are integers.
	Define subsamples
	\begin{align}
		S_1 &\coloneqq \{ 1, \ldots, N \} \times \{ 1, \ldots, T_2 \}, \notag \\
		S_2 &\coloneqq \{ 1, \ldots,N\} \times \{ T_2 + 1, \ldots, T\}, \notag \\
		S_3 &\coloneqq \{ 1, \ldots,N_2\} \times \{ 1, \ldots, T\}, \notag \\
		S_4 &\coloneqq \{ N_2 + 1, \ldots, N \} \times \{ 1, \ldots, T \}. \notag
	\end{align}
	Let $\hat{\varphi}^{(j)}$ be the ML estimator using subsample $S_{j}$ for $j = 1, 2, 3, 4$. One then obtains
	\begin{align}
		\sqrt{NT_2} \big( \hat{\varphi}^{(1)} - \varphi \big)
		=&\
		\frac{1}{\sqrt{NT_2}} \sum_{(i,t) \in S_1} \nu_{it}
		+
		\sqrt{2} \mu_{1,NT}
		+
		\frac{1}{\sqrt{2}} \mu_{2,NT}
		+
		\smallO_{p}( 1 ), \notag
	\end{align}
	\begin{align}
		\sqrt{NT_2} \big( \hat{\varphi}^{(2)} - \varphi \big)
		=&\
		\frac{1}{\sqrt{NT_2}} \sum_{(i,t) \in S_2} \nu_{it}
		+
		\sqrt{2} \mu_{1,NT}
		+
		\frac{1}{\sqrt{2}} \mu_{2,NT}
		+
		\smallO_{p}( 1 ), \notag
	\end{align}
	\begin{align}
		\sqrt{N_2T} \big( \hat{\varphi}^{(3)} - \varphi \big)
		=&\
		\frac{1}{\sqrt{N_2T}} \sum_{(i,t) \in S_3} \nu_{it}
		+
		\frac{1}{\sqrt{2}} \mu_{1,NT}
		+
		\sqrt{2} \mu_{2,NT}
		+
		\smallO_{p}( 1 ), \notag
	\end{align}
	and
	\begin{align}
		\sqrt{N_2T} \big( \hat{\varphi}^{(4)} - \varphi \big)
		=&\
		\frac{1}{\sqrt{N_2T}} \sum_{(i,t) \in S_4} \nu_{it}
		+
		\frac{1}{\sqrt{2}} \mu_{1,NT}
		+
		\sqrt{2} \mu_{2,NT}
		+
		\smallO_{p}( 1 ), \notag
	\end{align}
 	as $N, T \rightarrow \infty$ and $N / T \rightarrow \gamma^2$. Now define $\bs{\hat{\varphi}} \coloneqq ( \hat{\varphi}^{(0)}, \hat{\varphi}^{(1)}, \hat{\varphi}^{(2)}, \hat{\varphi}^{(3)}, \hat{\varphi}^{(4)} )^\top$. Using the above
	\begin{align}
		r_{NT} \big( \bs{\hat{\varphi}} - \varphi \bs{\iota}_5 \big)
		=
		\bs{z}_{NT}
		+
		\bs{A} \bs{\mu}_{NT}
		+
		\smallO_{p}( 1 ), \notag
	\end{align}
	where
	\begin{align}
		\bs{z}_{NT}
		\coloneqq
		\frac{1}{\sqrt{NT}}
		\begin{pmatrix}
			1 & 0 & 0 & 0 & 0 \\
			0 & 2 & 0 & 0 & 0 \\
			0 & 0 & 2 & 0 & 0 \\
			0 & 0 & 0 & 2 & 0 \\
			0 & 0 & 0 & 0 & 2
		\end{pmatrix}
		\begin{pmatrix}
			\sum_{(i,t) \in S_0} \nu_{it} \\
			\sum_{(i,t) \in S_1} \nu_{it} \\
			\sum_{(i,t) \in S_2} \nu_{it} \\
			\sum_{(i,t) \in S_3} \nu_{it} \\
			\sum_{(i,t) \in S_4} \nu_{it}
		\end{pmatrix},
		\qquad
		\bs{z}_{NT}
		\xrightarrow{d}
		\mathcal{N}\big( \bs{0}_5, \sigma^2 \bs{C} \big),
		\notag
	\end{align}
	$\bs{\mu}_{NT} \coloneqq ( \mu_{1,NT}, \mu_{2,NT} )^\top$,
	\begin{align}
		\bs{A}
		\coloneqq
		\begin{pmatrix}
			1 & 1 \\
			2 & 1 \\
			2 & 1 \\
			1 & 2 \\
			1 & 2
		\end{pmatrix},
		\quad
		\text{and}
		\quad
		\bs{C}
		\coloneqq
		\begin{pmatrix}
			1 & 1 & 1 & 1 & 1 \\
			1 & 2 & 0 & 1 & 1 \\
			1 & 0 & 2 & 1 & 1 \\
			1 & 1 & 1 & 2 & 0 \\
			1 & 1 & 1 & 0 & 2
		\end{pmatrix}. \notag
	\end{align}
	\normalsize
Finally, we verify the remaining conditions of Assumption \ref{AJK}. First, $\textup{rank}( \bs{A} ) = 2$. Moreover, $\bs{\iota}_5 \notin \textup{col}( \bs{A} )$: if $\bs{\iota}_5 \in \textup{col}( \bs{A} )$ then there would exist $( a, b )^\top \in \mathbb{R}^2$ such that
\begin{align}
	( a + b,\, 2a + b,\, 2a + b,\, a + 2b,\, a + 2b )^\top
	=
	( 1, 1, 1, 1, 1 )^\top,
	\notag
\end{align}
which then implies $a + b = 1$ and $2a + b = 1$, hence $a = 0$ and $b = 1$, but then $a + 2b = 2 \neq 1$, which is a contradiction. Next, it is easily verified that $\bs{C}$ is positive semidefinite with (distinct) eigenvalues $0$, $2$, and $5$. Let $\bs{D} \coloneqq ( \bs{A}, \bs{\iota}_5 )$. Solving $\bs{u}^\top \bs{A} = \bs{0}_2^\top$ and $\bs{u}^\top \bs{\iota}_5 = 0$ yields
\begin{align}
	\textup{null}\big( \bs{D}^\top \big)
	=
	\textup{span}\big(
	( 0, -1, 1, 0, 0 )^\top,\,
	( 0, 0, 0, -1, 1 )^\top
	\big).
	\notag
\end{align}
For $\bs{u}_1 \coloneqq ( 0, -1, 1, 0, 0 )^\top$ and $\bs{u}_2 \coloneqq ( 0, 0, 0, -1, 1 )^\top$ we have $\bs{u}_1^\top \bs{C} \bs{u}_1 = \bs{u}_2^\top \bs{C} \bs{u}_2 = 4 > 0$, whereby $\textup{null}( \bs{D}^\top ) \not\subseteq \textup{null}( \bs{C} )$.
\end{exampleagainhr}

\subsection{Minimum-Variance Unbiased Jackknife Estimator}\label{MVUJ}
We now discuss how to combine a set of subsample estimators and the full sample estimator to produce a minimum-variance unbiased jackknife (MVUJ) estimator. Consider a vector $\bs{v} \in \mathbb{R}^{m}$ such that
\begin{align}
	\bs{v}^\top \bs{A}
	&=
	\bs{0}_{R}^\top,
	\label{v1} \\
	\bs{v}^\top \bs{\iota}_m
	&=
	1.
	\label{v2}
\end{align}
Then under Assumption \ref{AJK},
\begin{align}
	r_{NT} \big( \bs{v}^\top \bs{\hat{\varphi}} - \varphi \big)
	\xrightarrow{d}
	\mathcal{N} \big( 0, \sigma^2 \bs{v}^\top \bs{C} \bs{v} \big), \notag 
\end{align}
that is, $\bs{v}^\top \bs{\hat{\varphi}}$ is asymptotically unbiased. In general, the restrictions \eqref{v1} - \eqref{v2} do not identify a unique $\bs{v}$. Indeed, define
\begin{align}
	\bs{D}
	\coloneqq
	\begin{pmatrix}
		\bs{A} & \bs{\iota}_{m}
	\end{pmatrix}
	\quad
	\text{and}
	\quad
	\bs{d}
	\coloneqq
	\begin{pmatrix}
		\bs{0}_{R}\\[0.3em]
		1
	\end{pmatrix},
	\notag
\end{align}
so that \eqref{v1} and \eqref{v2} can be restated as $\bs{D}^\top \bs{v} = \bs{d}$. Since $\textup{rank}( \bs{A} ) = R$ and $\bs{\iota}_m \notin \textup{col}( \bs{A} )$, we have $\textup{rank}( \bs{D} ) = R + 1$, and hence the feasible set
\begin{align}
	\mathcal{V}
	=
	\big\{ \bs{v} \in \mathbb{R}^{m} : \bs{D}^\top \bs{v} = \bs{d} \big\},
	\notag
\end{align}
is nonempty whenever $m \geq R + 1$, and has infinitely many elements when $m \geq R + 2$. One can obtain an asymptotically unbiased jackknife estimator by solving the linear system $\bs{D}^\top \bs{v} = \bs{d}$ without reference to $\bs{C}$. For example, the minimum Euclidean norm solution is given by
\begin{align}
	\bs{v}^{\dagger}
	\coloneqq
	\bs{D} \big( \bs{D}^\top \bs{D} \big)^{-1} \bs{d}.
	\footnotemark
	\notag
\end{align}
Yet, since $\bs{C}$ is known, we can go further and construct an unbiased estimator with minimal asymptotic variance. In particular, an MVUJ weights vector is obtained by solving
\footnotetext{Since $\textup{rank}( \bs{D} ) = R + 1$ (because $\textup{rank}( \bs{A} ) = R$ and $\bs{\iota}_m \notin \textup{col}( \bs{A} )$), $\bs{D}$ has full column rank, hence $\bs{D}^\top \bs{D}$ is invertible.}
\begin{align}
	\bs{v}^\ast
	\in
	\arg\min_{ \bs{v} \in \mathcal{V} }
	\bs{v}^\top \bs{C} \bs{v}.
	\label{prog}
\end{align}
\begin{theorem}\label{thm:mvuj-existence}
	Under Assumptions \ref{AAD} and \ref{AJK}, the programme \eqref{prog} admits at least one solution $\bs{v}^\ast \in \mathcal{V}$.
\end{theorem}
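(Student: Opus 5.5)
We minimise the convex quadratic $f(\bs{v}) \coloneqq \bs{v}^\top \bs{C} \bs{v}$ over the affine set $\mathcal{V}$. Nonemptiness of $\mathcal{V}$ was established above: $\bs{D}$ has full column rank $R+1 \leq m$, so $\bs{D}^\top \bs{v} = \bs{d}$ is solvable; for instance $\bs{v}^\dagger \in \mathcal{V}$. The only issue is that $\mathcal{V}$ is unbounded when $m \geq R+2$ and $\bs{C}$ may be singular, so $f$ need not be coercive on $\mathcal{V}$ and compactness cannot be invoked directly. We therefore use the finite-dimensional fact that a quadratic bounded below on an affine set attains its infimum, and verify it directly rather than citing Frank–Wolfe.

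\textbf{Parametrisation.} Let $\bs{N}$ be an $m \times k$ matrix whose columns form a basis of $\textup{null}(\bs{D}^\top)$, where $k = m - R - 1 \geq 1$. Then
\[
\mathcal{V} = \{ \bs{v}^\dagger + \bs{N}\bs{w} : \bs{w} \in \mathbb{R}^k \},
\]
and
\[
g(\bs{w}) \coloneqq f(\bs{v}^\dagger + \bs{N}\bs{w}) = \bs{w}^\top \bs{Q} \bs{w} + 2\bs{b}^\top \bs{w} + c,
\]
with $\bs{Q} = \bs{N}^\top \bs{C}\bs{N} \succeq 0$, $\bs{b} = \bs{N}^\top \bs{C} \bs{v}^\dagger$ and $c = \bs{v}^{\dagger\top}\bs{C}\bs{v}^\dagger$. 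Since $\bs{C} \succeq 0$ (Assumption \ref{AJK}), $g$ is a convex quadratic bounded below by $0$.

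\textbf{Existence of a minimiser.} This is the key step. We show that the first-order condition $\bs{Q}\bs{w} = -\bs{b}$ is solvable, i.e.\ $\bs{b} \in \textup{col}(\bs{Q})$. Write $\bs{C} = \bs{L}\bs{L}^\top$. Then
\[
\textup{col}(\bs{Q}) = \textup{col}(\bs{N}^\top \bs{L}\bs{L}^\top \bs{N}) = \textup{col}(\bs{N}^\top \bs{L}),
\]
and $\bs{b} = \bs{N}^\top \bs{L} (\bs{L}^\top \bs{v}^\dagger)$ lies in it. Equivalently, suppose $\bs{b} \notin \textup{col}(\bs{Q})$. Then there is $\bs{h} \in \textup{null}(\bs{Q})$ with $\bs{b}^\top \bs{h} \neq 0$, and $g(t\bs{h}) = 2t\,\bs{b}^\top\bs{h} + c \to -\infty$ along a suitable direction, contradicting $g \geq 0$.

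Let $\bs{w}^\ast$ solve $\bs{Q}\bs{w} = -\bs{b}$. For any $\bs{w}$, convexity gives
\[
g(\bs{w}) - g(\bs{w}^\ast) = (\bs{w}-\bs{w}^\ast)^\top \bs{Q} (\bs{w}-\bs{w}^\ast) \geq 0,
\]
so $\bs{v}^\ast \coloneqq \bs{v}^\dagger + \bs{N}\bs{w}^\ast \in \mathcal{V}$ attains the minimum in \eqref{prog}.

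\textbf{Remarks.} Existence needs only $\bs{C} \succeq 0$ together with the rank conditions that make $\mathcal{V}$ nonempty. The condition $\textup{null}(\bs{D}^\top) \not\subseteq \textup{null}(\bs{C})$ is not needed here; it presumably matters later, for nondegeneracy of the $t$-statistic's denominator. Uniqueness is not claimed: it would require $\bs{Q} \succ 0$, which can fail. Assumption \ref{AAD} enters only through the definitions of $\bs{A}$ and $\bs{C}$.
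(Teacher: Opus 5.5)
Your proof is correct, but it reaches existence by a different route from the paper.

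\textbf{The paper's route.} The paper gets nonemptiness of $\mathcal{V}$ by normalising a vector $\bs{w} \in \textup{null}(\bs{A}^\top)$ with $\bs{w}^\top \bs{\iota}_m \neq 0$. It then obtains existence in one line from the Frank--Wolfe theorem: a quadratic bounded below on a nonempty polyhedron attains its infimum. The only property of the objective it uses is $\bs{v}^\top \bs{C} \bs{v} \geq 0$.

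\textbf{Your route.} You use the fact that $\mathcal{V}$ is affine, not merely polyhedral, and that the objective is convex. Writing $\mathcal{V} = \bs{v}^\dagger + \textup{col}(\bs{N})$ turns the programme into an unconstrained convex quadratic in $\bs{w}$. The range inclusion $\bs{b} = \bs{N}^\top \bs{L}(\bs{L}^\top \bs{v}^\dagger) \in \textup{col}(\bs{N}^\top \bs{L}) = \textup{col}(\bs{Q})$ makes the normal equations solvable. The exact identity $g(\bs{w}) - g(\bs{w}^\ast) = (\bs{w}-\bs{w}^\ast)^\top \bs{Q} (\bs{w}-\bs{w}^\ast)$ then finishes the argument.

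\textbf{What each buys.} The paper's argument is shorter and more general: Frank--Wolfe would still apply with inequality constraints, or with an indefinite quadratic that happens to be bounded below. Its cost is reliance on a nontrivial external theorem. Yours is elementary and self-contained, and it gives more information:
\begin{itemize}
\item the set of minimisers is exactly the affine set $\{\bs{v}^\dagger + \bs{N}\bs{w} : \bs{Q}\bs{w} = -\bs{b}\}$;
\item $\bs{v}^\ast$ is unique if and only if $\bs{N}^\top \bs{C} \bs{N} \succ \bs{0}$, i.e.\ if and only if $\textup{null}(\bs{D}^\top) \cap \textup{null}(\bs{C}) = \{\bs{0}_m\}$.
\end{itemize}
The uniqueness criterion explains when the paper's later KKT-based computation pins down a single weights vector despite a singular $\bs{C}$.

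Your remark that the condition $\textup{null}(\bs{D}^\top) \not\subseteq \textup{null}(\bs{C})$ is not needed here is right. The paper's proof does not use it either; it only becomes relevant in the later theorems on $\bs{u}^\ast$.
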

\begin{proof}
	{\itshape
		We first show that $\mathcal{V}$ is nonempty. Since $\textup{rank}( \bs{A} ) = R$, we have $\dim( \textup{null}( \bs{A}^\top ) ) = m - R > 0$. The condition $\bs{\iota}_m \notin \textup{col}( \bs{A} )$ implies there exists $\bs{w} \in \textup{null}( \bs{A}^\top )$ such that $\bs{w}^\top \bs{\iota}_m \neq 0$. Setting $\bs{v}_0 \coloneqq	\frac{ \bs{w} }{ \bs{w}^\top \bs{\iota}_m }$	yields $\bs{v}_0 \in \mathcal{V}$. Next, since $\mathcal{V}$ forms a nonempty polyhedron, and $\bs{v}^\top \bs{C} \bs{v} \geq 0$ because $\bs{C} \succeq \bs{0}_{m \times m}$, it follows by the Frank--Wolfe Theorem that $\bs{v}^\top \bs{C} \bs{v}$ attains its infimum on $\mathcal{V}$ (see, e.g., \cite{cottle2009linear} Theorem 2.8.1). Hence, \eqref{prog} admits at least one minimiser.}
\end{proof}
Using a solution to \eqref{prog}, we define an MVUJ estimator as
\begin{align}
	\tilde{\varphi}
	\coloneqq
	\bs{v}^{\ast\top} \bs{\hat{\varphi}}.
	\notag
\end{align}
When $\bs{C}$ is positive definite, the programme \eqref{prog} has a unique solution which is given by
\begin{align}
	\bs{v}^\ast
	=
	\bs{C}^{-1} \bs{D}\big( \bs{D}^\top \bs{C}^{-1} \bs{D} \big)^{-1} \bs{d}.
	\label{sol-pd}
\end{align}
If $\bs{C}$ is not positive definite, the minimiser of \eqref{prog} need not be unique. In this case, one may compute an MVUJ weights vector by solving the system
\begin{align}
	\begin{pmatrix}
		2 \bs{C}    & \bs{D} \\
		\bs{D}^\top & \bs{0}_{R+1 \times R+1}
	\end{pmatrix}
	\begin{pmatrix}
		\bs{v} \\
		\bs{\pi}
	\end{pmatrix}
	=
	\begin{pmatrix}
		\bs{0}_m \\
		\bs{d}
	\end{pmatrix},
	\notag
\end{align}
where $\bs{\pi}$ is an $(R + 1) \times 1$ vector of Lagrange multipliers.\footnote{A necessary condition for $\bs{v}$ to be a solution to \eqref{prog} is that there exists a vector of Lagrange multipliers $\bs{\pi}$ such that $\begin{psmallmatrix} 2 \bs{C} & \bs{D} \\ \bs{D}^\top & \bs{0}_{R+1 \times R+1} \end{psmallmatrix} \binom{\bs{v}}{\bs{\pi}} = \binom{\bs{0}_m}{\bs{d}}$; see, e.g., \cite{nocedal2006numerical} Lemma 16.1 and Theorem 16.2. When $\bs{C} \succ \bs{0}$, this system has a unique solution $( \bs{v}^\ast, \bs{\pi}^\ast )$. Eliminating $\bs{\pi}^\ast$ yields \eqref{sol-pd}.} We now return to our examples.

\begin{exampleagainhr}[One-way Effects]{ex:onefe} \small Recall
	\begin{align}
		\bs{A}
		=
		\begin{pmatrix}
			1 \\
			2 \\
			2 
		\end{pmatrix}
		\qquad \text{and} \qquad
		\bs{C}
		=
		\begin{pmatrix}
			1 & 1 & 1\\
			1 & 2 & 0\\
			1 & 0 & 2
		\end{pmatrix}. \notag
	\end{align}
	An MVUJ weights vector solves
	\begin{align}
		\min_{ \bs{v} \in \mathcal{V} }
		\bs{v}^\top \bs{C} \bs{v} \qquad \text{where} \qquad 	\mathcal{V}
		=
		\left\{ \bs{v} \in \mathbb{R}^{3} : \bs{D}^\top \bs{v} = \bs{d} \right\}, \notag 
	\end{align}
	with
	\begin{align}
		\bs{D}
		\coloneqq
		\begin{pmatrix}
			\bs{A} & \bs{\iota}_3
		\end{pmatrix}
		\qquad \text{and} \qquad
		\bs{d}
		\coloneqq
		\begin{pmatrix}
			0 \\ 1
		\end{pmatrix}. \notag
	\end{align}
	The Karush–Kuhn–Tucker (KKT) system is
	\begin{align}
		\begin{pmatrix}
			2 \bs{C}    & \bs{D} \\
			\bs{D}^\top & \bs{0}_{2\times2}
		\end{pmatrix}
		\begin{pmatrix}
			\bs{v} \\ 
			\bs{\pi}
		\end{pmatrix}
		=
		\begin{pmatrix}
			\bs{0}_3 \\ \bs{d}
		\end{pmatrix}, \notag
	\end{align}
	where $\bs{\pi}$ is a $2 \times 1$ vector of Lagrange multipliers. Although $\bs{C}$ is singular, the left-hand matrix is nonsingular and hence the solution in $\bs{v}$ is unique. Direct calculation yields
	\begin{align}
		\bs{v}^\ast
		=
		\left( 2,\,-\frac{1}{2},\,-\frac{1}{2} \right)^\top. \notag
	\end{align}
	Therefore the MVUJ estimator is
	\begin{align}
		\tilde{\varphi}
		=
		\bs{v}^{\ast\top} \bs{\hat{\varphi}}
		=
		2 \hat{\varphi}^{(0)}
		-\frac{1}{2} \hat{\varphi}^{(1)}
		-\frac{1}{2} \hat{\varphi}^{(2)}. \notag
	\end{align}
	Moreover,
	\begin{align}
		r_{NT} \big( \tilde{\varphi} - \varphi \big)
		\xrightarrow{d}
		\mathcal{N}\big( 0,\,\sigma^2 \big). \notag
	\end{align}
\end{exampleagainhr}

\begin{exampleagainhr}[Two-way Effects]{ex:twofe} \small
	Recall
	\begin{align}
		\bs{A}
		=
		\begin{pmatrix}
			1 & 1 \\
			2 & 1 \\
			2 & 1 \\
			1 & 2 \\
			1 & 2
		\end{pmatrix}
		\qquad \text{and} \qquad
		\bs{C}
		=
		\begin{pmatrix}
			1 & 1 & 1 & 1 & 1 \\
			1 & 2 & 0 & 1 & 1 \\
			1 & 0 & 2 & 1 & 1 \\
			1 & 1 & 1 & 2 & 0 \\
			1 & 1 & 1 & 0 & 2
		\end{pmatrix}.
		\notag
	\end{align}
	An MVUJ weights vector solves
	\begin{align}
		\min_{ \bs{v} \in \mathcal{V} }
		\bs{v}^\top \bs{C} \bs{v} 
		\qquad
		\text{where} 
		\qquad
		\mathcal{V}
		=
		\left\{ \bs{v} \in \mathbb{R}^{5} : \bs{D}^\top \bs{v} = \bs{d} \right\},
		\notag
	\end{align}
	with
	\begin{align}
		\bs{D}
		\coloneqq
		\begin{pmatrix}
			\bs{A} & \bs{\iota}_5
		\end{pmatrix}
		\quad \text{and} \quad
		\bs{d}
		\coloneqq
		\begin{pmatrix}
			\bs{0}_2 \\
			1
		\end{pmatrix}.
		\notag
	\end{align}
	The KKT system is
	\begin{align}
		\begin{pmatrix}
			2 \bs{C} & \bs{D} \\
			\bs{D}^{\top} & \bs{0}_{3\times3}
		\end{pmatrix}
		\begin{pmatrix}
			\bs{v} \\
			\bs{\pi}
		\end{pmatrix}
		=
		\begin{pmatrix}
			\bs{0}_5 \\
			\bs{d}
		\end{pmatrix},
		\notag
	\end{align}
	where $\bs{\pi}$ is a $3 \times 1$ vector of Lagrange multipliers. Although $\bs{C}$ is singular, the left-hand matrix is nonsingular and hence the solution in $\bs{v}$ is unique. Direct calculation yields
\begin{align}
	\bs{v}^\ast
	=
	\left( 3,\,-\frac{1}{2},\,-\frac{1}{2},\,-\frac{1}{2},\,-\frac{1}{2} \right)^\top.
	\notag
\end{align}
Therefore the MVUJ estimator is
\begin{align}
	\tilde{\varphi}
	=
	\bs{v}^{\ast\top} \bs{\hat{\varphi}}
	=
	3\,\hat{\varphi}^{(0)}
	-
	\frac{1}{2} \hat{\varphi}^{(1)}
	-
	\frac{1}{2} \hat{\varphi}^{(2)}
	-
	\frac{1}{2} \hat{\varphi}^{(3)}
	-
	\frac{1}{2} \hat{\varphi}^{(4)}.
	\notag
\end{align}
Moreover,
\begin{align}
	r_{NT} \big( \tilde{\varphi} - \varphi \big)
	\xrightarrow{d}
	\mathcal{N}\big( 0, \sigma^2 \big).
	\notag
\end{align}
\end{exampleagainhr}

\subsection{Jackknife $t$-Statistic}\label{jkt}
We now construct a jackknife $t$-statistic that delivers inference without knowledge of $\sigma^2$ (nor indeed $\bs{\mu}_{NT}$). Consider a vector $\bs{u}^\ast$ such that
\begin{align}
	\bs{u}^{\ast \top} \bs{A}
	&=
	\bs{0}_{R}^\top,
	\label{eq:u-bias} \\
	\bs{u}^{\ast\top} \bs{\iota}_m
	&=
	0,
	\label{eq:u-sum} \\
	\bs{u}^{\ast \top} \bs{C} \bs{u}^\ast
	&=
	\bs{v}^{\ast \top} \bs{C} \bs{v}^\ast,
	\label{eq:var-match}
\end{align}
where $\bs{A}$ and $\bs{C}$ are as in Assumption \ref{AJK}. We establish the existence of such a vector in the following result.

\begin{theorem}\label{thm:u-existence}
	Under Assumptions \ref{AAD} and \ref{AJK}, there exists at least one vector $\bs{u}^\ast$ satisfying \eqref{eq:u-bias} - \eqref{eq:var-match}. Moreover, $\bs{u}^{\ast \top} \bs{C} \bs{v}^\ast = 0$.
\end{theorem}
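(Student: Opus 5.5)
Our plan is to take $\bs{u}^\ast$ to be a suitably rescaled element of $\textup{null}(\bs{D}^\top)$, where $\bs{D} = (\bs{A}, \bs{\iota}_m)$ as in Section \ref{MVUJ}. Any $\bs{u} \in \textup{null}(\bs{D}^\top)$ satisfies \eqref{eq:u-bias} and \eqref{eq:u-sum} automatically, and so does every scalar multiple of it. Hence only the variance-matching condition \eqref{eq:var-match} needs work. For that we need some $\bs{u}_0 \in \textup{null}(\bs{D}^\top)$ with $\bs{u}_0^\top \bs{C} \bs{u}_0 > 0$, and this is exactly what the assumption $\textup{null}((\bs{A},\bs{\iota}_m)^\top) \not\subseteq \textup{null}(\bs{C})$ in Assumption \ref{AJK} provides. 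Indeed, choose $\bs{u}_0$ in $\textup{null}(\bs{D}^\top)$ but not in $\textup{null}(\bs{C})$. Since $\bs{C} \succeq \bs{0}$, we have $\bs{u}_0^\top \bs{C} \bs{u}_0 = 0$ if and only if $\bs{C}^{1/2}\bs{u}_0 = \bs{0}$, which holds if and only if $\bs{C}\bs{u}_0 = \bs{0}$. So $\bs{u}_0^\top \bs{C} \bs{u}_0 > 0$.

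Next we take $\bs{v}^\ast$ from Theorem \ref{thm:mvuj-existence} and set $s^2 \coloneqq \bs{v}^{\ast\top}\bs{C}\bs{v}^\ast \geq 0$. We then define
\begin{align}
\bs{u}^\ast \coloneqq \frac{s}{\sqrt{\bs{u}_0^\top \bs{C}\bs{u}_0}}\, \bs{u}_0 . \notag
\end{align}
This vector lies in $\textup{null}(\bs{D}^\top)$ and satisfies $\bs{u}^{\ast\top}\bs{C}\bs{u}^\ast = s^2$, which gives existence. If $s = 0$ then $\bs{u}^\ast = \bs{0}$; this is still admissible, because the three conditions do not exclude the zero vector. (Later results presumably need $s>0$, but this theorem does not.)

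For the orthogonality claim $\bs{u}^{\ast\top}\bs{C}\bs{v}^\ast = 0$, we will use the first-order optimality of $\bs{v}^\ast$ along feasible directions. For any $\bs{u} \in \textup{null}(\bs{D}^\top)$ and any $\tau \in \mathbb{R}$, the point $\bs{v}^\ast + \tau\bs{u}$ lies in $\mathcal{V}$. Therefore the function
\begin{align}
f(\tau) \coloneqq (\bs{v}^\ast + \tau\bs{u})^\top \bs{C} (\bs{v}^\ast + \tau\bs{u}) = s^2 + 2\tau\, \bs{u}^\top\bs{C}\bs{v}^\ast + \tau^2\, \bs{u}^\top\bs{C}\bs{u} \notag
\end{align}
is minimised at $\tau = 0$. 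This forces $f'(0) = 2\,\bs{u}^\top \bs{C}\bs{v}^\ast = 0$. Applying this with $\bs{u} = \bs{u}^\ast$ gives the claim. Note that the argument holds for every minimiser $\bs{v}^\ast$ of \eqref{prog}, so non-uniqueness when $\bs{C}$ is singular causes no trouble. Equivalently, one could read the claim off the KKT system, since $2\bs{C}\bs{v}^\ast = -\bs{D}\bs{\pi}$ and $\bs{u}^{\ast\top}\bs{D} = \bs{0}^\top$.

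The main potential obstacle is the positivity step. We must make sure the non-inclusion assumption yields a null-space direction with strictly positive $\bs{C}$-quadratic form, not merely one with $\bs{C}\bs{u}_0 \neq \bs{0}$; the positive semidefinite square-root argument above handles this. The rest is routine.
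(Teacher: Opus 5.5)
Your proposal is correct and follows the paper's proof: both rescale some $\bs{u}_0 \in \textup{null}(\bs{D}^\top)$ with $\bs{u}_0^\top \bs{C} \bs{u}_0 > 0$ (taking $\bs{u}^\ast = \bs{0}$ when $\bs{v}^{\ast\top}\bs{C}\bs{v}^\ast = 0$), and both derive $\bs{u}^{\ast\top}\bs{C}\bs{v}^\ast = 0$ from the optimality of $\bs{v}^\ast$. The differences are minor: the paper reads orthogonality off the KKT condition $2\bs{C}\bs{v}^\ast + \bs{D}\bs{\pi} = \bs{0}_m$, whereas you use an equivalent and slightly more self-contained feasible-direction derivative, and your square-root argument spells out the positivity step that the paper only asserts.
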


\begin{proof}
	{\itshape
		Let $\bs{D} \coloneqq ( \bs{A}, \bs{\iota}_m )$. Since $\bs{v}^\ast$ solves \eqref{prog}, the KKT conditions imply there exists $\bs{\pi}$ such that $2 \bs{C} \bs{v}^\ast + \bs{D}\bs{\pi} = \bs{0}_m$, and hence $\bs{C} \bs{v}^\ast \in \textup{col}( \bs{D} )$. Therefore, for any $\bs{u}$ satisfying $\bs{u}^\top \bs{D} = \bs{0}_{R+1}^\top$, we have $\bs{u}^\top \bs{C} \bs{v}^\ast = 0$. Next, by Assumption \ref{AJK}, $\textup{null}( \bs{D}^\top ) \not\subseteq	\textup{null}( \bs{C} )$, so there exists $\bs{u}_0 \in \textup{null}\big( \bs{D}^\top \big)$ such that $\bs{u}_0^\top \bs{C} \bs{u}_0 > 0$. If $\bs{v}^{\ast \top} \bs{C} \bs{v}^\ast = 0$, set $\bs{u}^\ast \coloneqq \bs{0}_m$. Otherwise, define
		\begin{align}
			\bs{u}^\ast
			\coloneqq
			\left(
			\frac{ \bs{v}^{\ast \top} \bs{C} \bs{v}^\ast }{ \bs{u}_0^\top \bs{C} \bs{u}_0 }
			\right)^{ \frac{1}{2} }
			\bs{u}_0.
			\notag
		\end{align}
		Then $\bs{u}^{\ast\top} \bs{D} = \bs{0}_{R+1}^\top$, which is \eqref{eq:u-bias} - \eqref{eq:u-sum}. Moreover, $
		\bs{u}^{\ast\top} \bs{C} \bs{u}^\ast = \bs{v}^{\ast\top} \bs{C} \bs{v}^\ast$, so \eqref{eq:var-match} holds, and we have already established $\bs{u}^{\ast\top} \bs{C} \bs{v}^\ast = 0$.}
\end{proof}
Now that we have established the existence of vectors $\bs{v}^\ast$ and $\bs{u}^\ast$ which satisfy \eqref{v1}, \eqref{v2}, and \eqref{eq:u-bias} - \eqref{eq:var-match}, we use these to obtain the following result.

\begin{theorem}\label{thm:jt}
	Assume $\mathcal{V}\ \cap \ \textup{null}( \bs{C} ) = \emptyset$. Under Assumptions \ref{AAD} and \ref{AJK},
	\begin{align}
		\mathcal{J}
		\coloneqq
		\frac{ \tilde{\varphi} - \varphi }{ \tilde{\sigma} }
		\xrightarrow{d}
		t_1,
		\label{res}
	\end{align}
	as $N, T \rightarrow \infty$ and $N / T \rightarrow \gamma^2 \in ( 0, \infty )$, where
	\begin{align}
		\tilde{\sigma}^2
		\coloneqq
		\big( \bs{u}^{\ast\top} \bs{\hat{\varphi}} \big)^2
		\qquad \text{and} \qquad
		\tilde{\sigma}
		\coloneqq
		\sqrt{ \tilde{\sigma}^2 }.
		\notag
	\end{align}
\end{theorem}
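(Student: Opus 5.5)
The plan is to study the joint limit of the numerator and the denominator after multiplying both by $r_{NT}$; this scaling cancels in the ratio. By \eqref{v1}--\eqref{v2} and Assumption \ref{AJK},
\begin{align}
	r_{NT}( \tilde{\varphi} - \varphi )
	=
	\bs{v}^{\ast\top} r_{NT}( \bs{\hat{\varphi}} - \varphi \bs{\iota}_m )
	=
	\bs{v}^{\ast\top} \bs{z}_{NT}
	+
	\bs{v}^{\ast\top} \bs{A} \bs{\mu}_{NT}
	+
	\smallO_p( 1 )
	=
	\bs{v}^{\ast\top} \bs{z}_{NT}
	+
	\smallO_p( 1 ).
	\notag
\end{align}
The unbounded, possibly nonconvergent biases $\bs{\mu}_{NT}$ are annihilated exactly rather than only in the limit, which is why no bounds on them are needed. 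By \eqref{eq:u-bias}--\eqref{eq:u-sum},
\begin{align}
	r_{NT} \bs{u}^{\ast\top} \bs{\hat{\varphi}}
	=
	\bs{u}^{\ast\top} r_{NT}( \bs{\hat{\varphi}} - \varphi \bs{\iota}_m )
	=
	\bs{u}^{\ast\top} \bs{z}_{NT}
	+
	\smallO_p( 1 ),
	\notag
\end{align}
where we use $\bs{u}^{\ast\top}\bs{\iota}_m = 0$ to insert $\varphi \bs{\iota}_m$ at no cost. Stacking the two rows $\bs{v}^{\ast\top}$ and $\bs{u}^{\ast\top}$ into a fixed $2 \times m$ matrix and applying the continuous mapping theorem to $\bs{z}_{NT} \xrightarrow{d} \mathcal{N}( \bs{0}_m, \sigma^2 \bs{C} )$, the pair $(\bs{v}^{\ast\top}\bs{z}_{NT}, \bs{u}^{\ast\top}\bs{z}_{NT})$ converges jointly to a bivariate normal $(Z_1, Z_2)$. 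This limit is centred, with variances $\sigma^2 \bs{v}^{\ast\top}\bs{C}\bs{v}^\ast$ and $\sigma^2 \bs{u}^{\ast\top}\bs{C}\bs{u}^\ast$, which are equal by \eqref{eq:var-match}, and covariance $\sigma^2 \bs{u}^{\ast\top}\bs{C}\bs{v}^\ast = 0$ by Theorem \ref{thm:u-existence}. Slutsky's lemma then gives joint convergence of the two scaled quantities, including the $\smallO_p(1)$ terms.

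Next, since $(Z_1, Z_2)$ is jointly Gaussian with zero covariance, $Z_1$ and $Z_2$ are independent with common variance $s^2 \coloneqq \sigma^2 \bs{v}^{\ast\top}\bs{C}\bs{v}^\ast$. Then $\mathcal{J} = r_{NT}(\tilde{\varphi}-\varphi)/|r_{NT}\bs{u}^{\ast\top}\bs{\hat{\varphi}}|$ is a function $g(x,y) = x/|y|$ of the jointly convergent pair, and $g$ is continuous off $\{y = 0\}$. If $s^2 > 0$, then $P(Z_2 = 0) = 0$ and the continuous mapping theorem gives $\mathcal{J} \xrightarrow{d} Z_1/|Z_2|$. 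Writing this as $(Z_1/s)/\sqrt{(Z_2/s)^2/1}$, it is a standard normal divided by the square root of an independent $\chi^2_1$ over its degrees of freedom, that is, $t_1$.

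The main obstacle is guaranteeing $s^2 > 0$, i.e.\ nondegeneracy, and this is precisely where the extra hypothesis $\mathcal{V} \cap \textup{null}(\bs{C}) = \emptyset$ enters. Because $\bs{C} \succeq \bs{0}$, the identity $\bs{v}^{\ast\top}\bs{C}\bs{v}^\ast = 0$ holds if and only if $\bs{C}\bs{v}^\ast = \bs{0}$, i.e.\ $\bs{v}^\ast \in \textup{null}(\bs{C})$. Since $\bs{v}^\ast \in \mathcal{V}$ by Theorem \ref{thm:mvuj-existence}, the hypothesis rules this out. We also need $\sigma^2 > 0$, which I will take as implicit in Assumption \ref{AAD}, since a degenerate normal limit would make the statement vacuous, and I will remark on this in the proof. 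Given $s^2 > 0$, Theorem \ref{thm:u-existence} constructs $\bs{u}^\ast$ as a nonzero rescaling of $\bs{u}_0$, so $\bs{u}^{\ast\top}\bs{C}\bs{u}^\ast = \bs{v}^{\ast\top}\bs{C}\bs{v}^\ast > 0$ and the denominator limit $Z_2$ is nondegenerate. A final check is that the result does not depend on which minimiser $\bs{v}^\ast$ or which $\bs{u}^\ast$ is chosen: the argument uses only \eqref{v1}--\eqref{v2}, \eqref{eq:u-bias}--\eqref{eq:var-match} and the orthogonality $\bs{u}^{\ast\top}\bs{C}\bs{v}^\ast = 0$, and all of these hold for every admissible choice.
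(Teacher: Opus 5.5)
Your proposal is correct and follows the same route as the paper. The constraints $\bs{v}^{\ast\top}\bs{A} = \bs{u}^{\ast\top}\bs{A} = \bs{0}_R^\top$ remove the bias, and by Theorem \ref{thm:u-existence} the pair $\big(r_{NT}(\tilde{\varphi}-\varphi),\, r_{NT}\bs{u}^{\ast\top}\bs{\hat{\varphi}}\big)$ converges to a centred normal with covariance $\sigma^2\bs{v}^{\ast\top}\bs{C}\bs{v}^\ast\bs{I}_2$, so the ratio is $t_1$. Your extra details only make the paper's argument more explicit: exact bias annihilation, $\bs{v}^{\ast\top}\bs{C}\bs{v}^\ast=0 \iff \bs{C}\bs{v}^\ast=\bs{0}$ for $\bs{C}\succeq\bs{0}$, the continuous mapping theorem away from $\{y=0\}$, and the requirement $\sigma^2>0$, which the paper also uses tacitly (with $\sigma^2=0$ the claim would be false rather than vacuous).
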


\begin{proof}
	{\itshape
		By Assumption \ref{AJK} and the continuous mapping theorem,
		\begin{align}
			\begin{pmatrix}
				r_{NT} \bs{v}^{\ast \top} \big( \bs{\hat{\varphi}} - \varphi \bs{\iota}_m \big) \\
				r_{NT} \bs{u}^{\ast \top} \big( \bs{\hat{\varphi}} - \varphi \bs{\iota}_m \big)
			\end{pmatrix}
			\xrightarrow{d}
			\mathcal{N} \left( \bs{0}_2, \sigma^2
			\begin{pmatrix}
				\bs{v}^{\ast\top} \bs{C} \bs{v}^\ast & \bs{v}^{\ast\top} \bs{C} \bs{u}^\ast \\
				\bs{u}^{\ast\top} \bs{C} \bs{v}^\ast & \bs{u}^{\ast\top} \bs{C} \bs{u}^\ast
			\end{pmatrix}
			\right).
			\notag
		\end{align}
		By Theorem \ref{thm:u-existence}, $\bs{u}^{\ast\top} \bs{C} \bs{v}^\ast = 0$ and $\bs{u}^{\ast\top} \bs{C} \bs{u}^\ast = \bs{v}^{\ast\top} \bs{C} \bs{v}^\ast$. Therefore the limiting covariance matrix equals $\sigma^2 \bs{v}^{\ast\top} \bs{C} \bs{v}^\ast \bs{I}_2 \succ \bs{0}_{2 \times 2}$ which follows since $\mathcal{V}\ \cap \ \textup{null}( \bs{C} ) = \emptyset$. In particular, the two components are asymptotically independent and identically distributed. Since $\bs{v}^{\ast \top} \bs{\iota}_m = 1$ we have $\bs{v}^{\ast \top} \big( \bs{\hat{\varphi}} - \varphi \bs{\iota}_m \big) = \bs{v}^{\ast \top} \bs{\hat{\varphi}} - \varphi$, and since $\bs{u}^{\ast \top} \bs{\iota}_m = 0$, we have $\bs{u}^{\ast \top} \big( \bs{\hat{\varphi}} - \varphi \bs{\iota}_m \big) = \bs{u}^{\ast \top} \bs{\hat{\varphi}}$. Hence
		\begin{align}
			\frac{ r_{NT} \big( \bs{v}^{\ast\top} \bs{\hat{\varphi}} - \varphi \big) }{ r_{NT} \left| \bs{u}^{\ast \top} \bs{\hat{\varphi}} \right| }
			=
			\frac{ \tilde{\varphi} - \varphi }{ \tilde{\sigma} }
			\xrightarrow{d}
			t_1.
			\notag
	\end{align}}
\end{proof}

We refer to $\mathcal{J}$ as the \textit{jackknife $t$-statistic}, and to $\tilde{\sigma}^2$ and $\tilde{\sigma}$ as the \textit{jackknife variance} and \textit{jackknife standard error}, respectively. The jackknife $t$-statistic provides a straightforward way to conduct inference on $\varphi$. In particular, by constructing $\bs{\hat{\varphi}}$ and then calculating $\bs{u}^\ast$ and $\bs{v}^\ast$ we can readily obtain hypothesis tests, $p$-values, and confidence intervals, as discussed below.\\

\noindent \textbf{Hypothesis tests.} To test
\begin{align}
	H_0: \varphi = \varphi_0
	\qquad \text{against} \qquad
	H_A: \varphi \neq \varphi_0, \notag 
\end{align}
we form the jackknife $t$-statistic
\begin{align}
	\mathcal{J}
	\coloneqq
	\frac{ \tilde{\varphi} - \varphi_0 }{\tilde{\sigma}}. \label{sched}
\end{align}
A level-$\alpha$ two-sided jackknife $t$-test rejects $H_0$ whenever $| \mathcal{J} | > t_{1,1-\alpha/2}$, where $t_{1,1-\alpha/2}$ is the $( 1 - \alpha/2 )$ quantile of a $t$-distribution with $1$ degree of freedom. One-sided tests are obtained similarly. \\

\noindent \textbf{$p$-values.} For the two-sided test of $H_0: \varphi = \varphi_0$ against $H_A: \varphi \neq \varphi_0$, the $p$-value associated with the test statistic \eqref{sched} is
	\begin{align}
		p
		\coloneqq
		2 
		\left(
		1 - F\left( \left| \mathcal{J} \right| \right)
		\right), \notag
	\end{align}
	where $F( \cdot )$ denotes the CDF of the $t$-distribution with $1$ degree of freedom. For one-sided alternatives $H_A: \varphi > \varphi_0$ and $H_A: \varphi < \varphi_0$, the corresponding $p$-values are
	\begin{align}
		p^{+}
		&\coloneqq
		1 - F \left( \mathcal{J} \right)
		\qquad \text{for}\ H_A: \varphi > \varphi_0, \notag \\[0.3em]
		p^{-}
		&\coloneqq
		F \left( \mathcal{J} \right)
		\hspace{1.4cm}\text{for}\ H_A: \varphi < \varphi_0. \notag
	\end{align}

\noindent \textbf{Confidence intervals.} A $(1-\alpha)$ confidence interval for $\varphi$ is given by
	\begin{align}
		\left[ \,
		\tilde{\varphi} \,
		\pm \,
		t_{1,1-\alpha/2} \,
		\tilde{\sigma} \,
		\right], \notag
	\end{align}
	where $t_{1,1-\alpha/2}$ is the $( 1 - \alpha/2 )$ quantile of the $t$-distribution with $1$ degree of freedom. One-sided confidence intervals can be obtained similarly. We now return to our examples to illustrate our method of inference.

\begin{exampleagainhr}[One-way Effects]{ex:onefe}
	Recall 
	\begin{align}
		\bs{\hat{\varphi}}
		=
		\begin{pmatrix}
			\hat{\varphi}^{(0)} \\
			\hat{\varphi}^{(1)} \\
			\hat{\varphi}^{(2)}
		\end{pmatrix},
		\quad
		\bs{A} 
		=
		\begin{pmatrix}
			1 \\[0.1em]
			2 \\[0.1em]
			2
		\end{pmatrix},
		\quad
		\text{and}
		\quad
		\bs{C}
		=
		\begin{pmatrix}
			1 & 1 & 1 \\
			1 & 2 & 0 \\
			1 & 0 & 2
		\end{pmatrix}. \notag 
	\end{align}
	The MVUJ weights vector is 
	\begin{align}
		\bs{v}^\ast
		=
		\left( 2, \, -\frac{1}{2}, \, -\frac{1}{2} \right)^\top, \notag 
	\end{align}
	so that
	\begin{align}
		\tilde{\varphi}
		=
		\bs{v}^{\ast \top} \bs{\hat{\varphi}}
		=
		2 \hat{\varphi}^{(0)}
		- \frac{1}{2} \hat{\varphi}^{(1)}
		- \frac{1}{2} \hat{\varphi}^{(2)}. \notag 
	\end{align}	
	We now construct $\bs{u}^\ast$ satisfying \eqref{eq:u-bias} - \eqref{eq:var-match}.
	First note that $\bs{C} \bs{v}^\ast = \bs{\iota}_3$, so the condition $\bs{u}^{\ast\top} \bs{C} \bs{v}^\ast = 0$ becomes $\bs{u}^{\ast \top} \bs{\iota}_3 = 0$, which coincides with \eqref{eq:u-sum}. Moreover, since $\bs{A}$ is a scalar multiple of $( 1, 2, 2 )^\top$, the restriction \eqref{eq:u-bias} is equivalent to
\begin{align}
	u_1 + 2u_2 + 2u_3 = 0.
	\notag
\end{align}
Imposing $u_1 + u_2 + u_3 = 0$ and $u_1 + 2u_2 + 2u_3 = 0$ yields $u_2 = - u_3$ and $u_1 = 0$. A convenient choice is
\begin{align}
	\bs{u}^\ast
	\coloneqq
	\left( 0,\, \frac{1}{2},\, -\frac{1}{2} \right)^\top.
	\notag
\end{align}
For this choice,
\begin{align}
	\bs{u}^{\ast \top} \bs{A} = 0,
	\qquad
	\bs{u}^{\ast \top} \bs{\iota}_3 = 0,
	\qquad
	\bs{u}^{\ast \top} \bs{C} \bs{v}^\ast = 0,
	\quad \text{and} \quad
	\bs{u}^{\ast \top} \bs{C} \bs{u}^\ast = \bs{v}^{\ast \top} \bs{C} \bs{v}^\ast = 1,
	\notag
\end{align}
so \eqref{eq:u-bias} - \eqref{eq:var-match} hold. The jackknife standard error is
\begin{align}
	\tilde{\sigma}
	=
	\left| \bs{u}^{\ast \top} \bs{\hat{\varphi}} \right|
	=
	\frac{1}{2}
	\left|
	\hat{\varphi}^{(1)} - \hat{\varphi}^{(2)}
	\right|,
	\notag
\end{align}
and hence the jackknife $t$-statistic is given by
\begin{align}
	\mathcal{J}
	=
	\frac{ \tilde{\varphi} - \varphi }{ \tilde{\sigma} }
	=
	\frac{ 4\hat{\varphi}^{(0)} - \hat{\varphi}^{(1)} - \hat{\varphi}^{(2)} - 2\varphi }{ \left| \hat{\varphi}^{(1)} - \hat{\varphi}^{(2)} \right| }
	\xrightarrow{d}
	t_1.
	\notag
\end{align}
A corresponding $( 1 - \alpha )$ two-sided confidence interval is obtained as
\begin{align}
	\left[
	2 \hat{\varphi}^{(0)}
	-
	\frac{1}{2}\hat{\varphi}^{(1)}
	-
	\frac{1}{2}\hat{\varphi}^{(2)}
	\ \pm\
	t_{1,1-\alpha/2}\,
	\frac{1}{2}
	\left|
	\hat{\varphi}^{(1)} - \hat{\varphi}^{(2)}
	\right|
	\right].
	\notag
\end{align}
\end{exampleagainhr}

\begin{exampleagainhr}[Two-way Effects]{ex:twofe} \small
	Recall
	\begin{align}
		\bs{\hat{\varphi}}
		=
		\begin{pmatrix}
			\hat{\varphi}^{(0)} \\
			\hat{\varphi}^{(1)} \\
			\hat{\varphi}^{(2)} \\
			\hat{\varphi}^{(3)} \\
			\hat{\varphi}^{(4)}
		\end{pmatrix},
		\quad
		\bs{A}
		=
		\begin{pmatrix}
			1 & 1 \\
			2 & 1 \\
			2 & 1 \\
			1 & 2 \\
			1 & 2
		\end{pmatrix},
		\quad
		\text{and}
		\quad
		\bs{C}
		=
		\begin{pmatrix}
			1 & 1 & 1 & 1 & 1 \\
			1 & 2 & 0 & 1 & 1 \\
			1 & 0 & 2 & 1 & 1 \\
			1 & 1 & 1 & 2 & 0 \\
			1 & 1 & 1 & 0 & 2
		\end{pmatrix}.
		\notag
	\end{align}
	The MVUJ weights vector is
	\begin{align}
		\bs{v}^\ast
		=
		\left( 3,\, -\frac{1}{2},\, -\frac{1}{2},\, -\frac{1}{2},\, -\frac{1}{2} \right)^\top,
		\notag
	\end{align}
	so that
	\begin{align}
		\tilde{\varphi}
		=
		\bs{v}^{\ast \top} \bs{\hat{\varphi}}
		=
		3 \hat{\varphi}^{(0)}
		-
		\frac{1}{2} \hat{\varphi}^{(1)}
		-
		\frac{1}{2} \hat{\varphi}^{(2)}
		-
		\frac{1}{2} \hat{\varphi}^{(3)}
		-
		\frac{1}{2} \hat{\varphi}^{(4)}.
		\notag
	\end{align}
	We now construct $\bs{u}^\ast$ satisfying \eqref{eq:u-bias} - \eqref{eq:var-match}. First note that $\bs{C} \bs{v}^\ast = \bs{\iota}_5$, so the condition $\bs{u}^{\ast \top} \bs{C} \bs{v}^\ast = 0$ becomes $\bs{u}^{\ast \top} \bs{\iota}_5 = 0$, which coincides with \eqref{eq:u-sum}. Moreover, since $\bs{A}$ has columns proportional to $( 1, 2, 2, 1, 1 )^\top$ and $( 1, 1, 1, 2, 2 )^\top$, the restriction \eqref{eq:u-bias} is equivalent to
	\begin{align}
		u_1 + 2u_2 + 2u_3 + u_4 + u_5
		&=
		0,
		\notag \\
		u_1 + u_2 + u_3 + 2u_4 + 2u_5
		&=
		0.
		\notag
	\end{align}
	Imposing $u_1 + u_2 + u_3 + u_4 + u_5 = 0$ and the two restrictions above yields $u_1 = 0$, $u_2 = -u_3$, and $u_4 = -u_5$. A convenient choice is
	\begin{align}
		\bs{u}^\ast
		\coloneqq
		\left( 0,\, -\frac{1}{2},\, \frac{1}{2},\, 0,\, 0 \right)^\top.
		\notag
	\end{align}
	For this choice,
	\begin{align}
		\bs{u}^{\ast \top} \bs{A}
		=
		\bs{0}_2^\top,
		\qquad
		\bs{u}^{\ast \top} \bs{\iota}_5
		=
		0,
		\qquad
		\bs{u}^{\ast \top} \bs{C} \bs{v}^\ast
		=
		0,
		\quad \text{and} \quad
		\bs{u}^{\ast \top} \bs{C} \bs{u}^\ast
		=
		\bs{v}^{\ast \top} \bs{C} \bs{v}^\ast
		=
		1.
		\notag
	\end{align}
	Thus \eqref{eq:u-bias} - \eqref{eq:var-match} hold. The jackknife standard error is
	\begin{align}
		\tilde{\sigma}
		=
		\left| \bs{u}^{\ast \top} \bs{\hat{\varphi}} \right|
		=
		\frac{1}{2}
		\left|
		\hat{\varphi}^{(2)} - \hat{\varphi}^{(1)}
		\right|,
		\notag
	\end{align}
	and hence the jackknife $t$-statistic is given by
	\begin{align}
		\mathcal{J}
		\coloneqq
		\frac{
			6 \hat{\varphi}^{(0)}
			-
			\hat{\varphi}^{(1)}
			-
			\hat{\varphi}^{(2)}
			-
			\hat{\varphi}^{(3)}
			-
			\hat{\varphi}^{(4)}
			-
			2\varphi
		}{
			\left|
			\hat{\varphi}^{(2)} - \hat{\varphi}^{(1)}
			\right|
		}
		\xrightarrow{d}
		t_1.
		\notag
	\end{align}
	A corresponding $( 1 - \alpha )$ two-sided confidence interval is obtained as
	\begin{align}
		\left[
		3 \hat{\varphi}^{(0)}
		-
		\frac{1}{2}\hat{\varphi}^{(1)}
		-
		\frac{1}{2}\hat{\varphi}^{(2)}
		-
		\frac{1}{2}\hat{\varphi}^{(3)}
		-
		\frac{1}{2}\hat{\varphi}^{(4)}
		\ \pm\
		t_{1,1-\alpha/2}\,
		\frac{1}{2}
		\left|
		\hat{\varphi}^{(2)} - \hat{\varphi}^{(1)}
		\right|
		\right].
		\notag
	\end{align}
\end{exampleagainhr}

\subsection{Jackknife $t_q$-Statistic}\label{jktq}
Section \ref{jkt} describes the simplest construction, in which we find only a single variance weights vector $\bs{u}^\ast$ and obtain a jackknife $t$-statistic which is asymptotically $t$-distributed with one degree of freedom. However, in many settings it is possible to find multiple variance weights vectors $\bs{u}_1^\ast, \ldots, \bs{u}_q^\ast$ satisfying \eqref{eq:u-bias} - \eqref{eq:var-match} through the generation of further subsamples. These can be utilised to construct a statistic which is $t_q$-distributed with $q > 1$. This has the advantage that the critical values $t_{q,1-\alpha/2}$ are smaller than $t_{1,1-\alpha/2}$ and therefore produce shorter confidence intervals, for example. We now formalise this. Suppose there exist $q$ vectors $\bs{u}_1^\ast, \ldots, \bs{u}_q^\ast$ such that, writing
\begin{align}
	\bs{U}^\ast
	\coloneqq
	\big( \bs{u}_1^\ast, \ldots, \bs{u}_q^\ast \big),
	\notag
\end{align}
the following conditions hold:
\begin{align}
	\bs{U}^{\ast \top} \bs{A}
	&=
	\bs{0}_{q \times R},
	\label{eq:uq-bias} \\
	\bs{U}^{\ast \top} \bs{\iota}_m
	&=
	\bs{0}_{q},
	\label{eq:uq-sum} \\
	\bs{U}^{\ast \top} \bs{C} \bs{U}^\ast
	&=
	\bs{v}^{\ast \top} \bs{C} \bs{v}^\ast \bs{I}_q.
	\label{eq:uq-var-match}
\end{align}
These conditions are natural extensions of \eqref{eq:u-bias} - \eqref{eq:var-match}. Using $\bs{v}^\ast$ and $\bs{U}^\ast$ we obtain the following result.

\begin{theorem}\label{thm:jtq}
	Assume $\mathcal{V}\ \cap \ \textup{null}( \bs{C} ) = \emptyset$. Let $\bs{v}^\ast$ solve \eqref{prog}, and let $\bs{u}_1^\ast, \ldots, \bs{u}_q^\ast$ be such that with $\bs{U}^\ast \coloneqq ( \bs{u}_1^\ast, \ldots, \bs{u}_q^\ast )$ conditions \eqref{eq:uq-bias} - \eqref{eq:uq-var-match} hold. Under Assumptions \ref{AAD} and \ref{AJK},
	\begin{align}
		\mathcal{J}_q
		\coloneqq
		\frac{ \tilde{\varphi} - \varphi }{ \tilde{\sigma}_q }
		\xrightarrow{d}
		t_q,
		\label{resq}
	\end{align}
	as $N, T \rightarrow \infty$ and $N / T \rightarrow \gamma^2 \in ( 0, \infty )$, where
	\begin{align}
		\tilde{\sigma}_q^2
		\coloneqq
		\frac{1}{q}
		\sum_{l=1}^q
		\big( \bs{u}_l^{\ast\top} \bs{\hat{\varphi}} \big)^2,
		\qquad \text{and} \qquad
		\tilde{\sigma}_q
		\coloneqq
		\sqrt{ \tilde{\sigma}_q^2 }.
		\notag
	\end{align}
\end{theorem}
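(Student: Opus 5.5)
The plan is to mirror the proof of Theorem \ref{thm:jt}, replacing the single variance weights vector with the $m \times (q+1)$ matrix $\bs{W} \coloneqq (\bs{v}^\ast, \bs{U}^\ast)$.

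\emph{Step 1: joint limit.} By Assumption \ref{AJK} and the continuous mapping theorem,
\begin{align}
r_{NT}\bs{W}^\top(\bs{\hat{\varphi}} - \varphi\bs{\iota}_m) = \bs{W}^\top \bs{z}_{NT} + \bs{W}^\top\bs{A}\bs{\mu}_{NT} + \smallO_p(1). \notag
\end{align}
The bias term vanishes identically by \eqref{v1} and \eqref{eq:uq-bias}; this holds even though $\bs{\mu}_{NT}$ may be unbounded. Hence the left-hand side converges to $\mathcal{N}(\bs{0}_{q+1}, \sigma^2 \bs{W}^\top \bs{C}\bs{W})$. By \eqref{v2} and \eqref{eq:uq-sum}, the first coordinate equals $r_{NT}(\tilde{\varphi} - \varphi)$ and the remaining coordinates equal $r_{NT}\bs{U}^{\ast\top}\bs{\hat{\varphi}}$.

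\emph{Step 2: block-diagonal covariance.} The limiting covariance has the form
\begin{align}
\sigma^2\begin{pmatrix} \bs{v}^{\ast\top}\bs{C}\bs{v}^\ast & \bs{v}^{\ast\top}\bs{C}\bs{U}^\ast \\ \bs{U}^{\ast\top}\bs{C}\bs{v}^\ast & \bs{U}^{\ast\top}\bs{C}\bs{U}^\ast\end{pmatrix}, \notag
\end{align}
and I need the off-diagonal block to vanish. I will reuse the argument in the proof of Theorem \ref{thm:u-existence}. The KKT conditions for \eqref{prog} give $\bs{C}\bs{v}^\ast \in \textup{col}(\bs{D})$. Each column $\bs{u}_l^\ast$ lies in $\textup{null}(\bs{D}^\top)$ by \eqref{eq:uq-bias}--\eqref{eq:uq-sum}. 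Therefore $\bs{u}_l^{\ast\top}\bs{C}\bs{v}^\ast = 0$ for every $l$.

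Combining this with \eqref{eq:uq-var-match}, the covariance equals $s^2 \bs{I}_{q+1}$ with $s^2 \coloneqq \sigma^2\bs{v}^{\ast\top}\bs{C}\bs{v}^\ast$. This is strictly positive: $\bs{v}^\ast \in \mathcal{V}$ and $\mathcal{V} \cap \textup{null}(\bs{C}) = \emptyset$ give $\bs{v}^{\ast\top}\bs{C}\bs{v}^\ast > 0$, using that $\bs{C}\succeq \bs{0}$, so that $\bs{v}^\top\bs{C}\bs{v} = 0$ iff $\bs{C}\bs{v} = \bs{0}$. One should also take $\sigma^2 > 0$, as is implicit in Theorem \ref{thm:jt}. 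The limit is therefore $(s Z_0, s Z_1, \ldots, s Z_q)$ with $Z_j$ iid standard normal.

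\emph{Step 3: conclude.} The scale $r_{NT}$ cancels, so
\begin{align}
\mathcal{J}_q = \frac{r_{NT}(\tilde{\varphi}-\varphi)}{\sqrt{q^{-1}\sum_{l=1}^q (r_{NT}\bs{u}_l^{\ast\top}\bs{\hat{\varphi}})^2}}. \notag
\end{align}
This is $g(\bs{x})$ for the map $g(x_0,\ldots,x_q) = x_0/\sqrt{q^{-1}\sum_{l=1}^q x_l^2}$, which is continuous except on the set where $x_1 = \cdots = x_q = 0$. That set has probability zero under the Gaussian limit. The continuous mapping theorem then gives
\begin{align}
\mathcal{J}_q \xrightarrow{d} \frac{Z_0}{\sqrt{q^{-1}\chi^2_q}}, \notag
\end{align}
where the numerator and the $\chi^2_q$ variable are independent. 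The scale $s$ cancels, and the limit is exactly $t_q$.

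\emph{Main obstacle.} The only step needing care is Step 2, namely the orthogonality between the $\bs{u}_l^\ast$ and $\bs{v}^\ast$. When $\bs{C}$ is singular, $\bs{v}^\ast$ need not be unique, so the argument must hold for any minimiser. The KKT stationarity route handles this, because the constraints are linear and the Lagrange conditions are therefore necessary without any further qualification. The remaining steps are routine.
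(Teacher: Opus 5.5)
Your proposal is correct and follows essentially the same route as the paper: a joint Gaussian limit for $\big(r_{NT}(\tilde{\varphi}-\varphi),\, r_{NT}\bs{U}^{\ast\top}\bs{\hat{\varphi}}\big)$ obtained from Assumption \ref{AJK}, the orthogonality $\bs{u}_l^{\ast\top}\bs{C}\bs{v}^\ast = 0$ obtained from the KKT condition $\bs{C}\bs{v}^\ast \in \textup{col}(\bs{D})$, the resulting covariance $\sigma^2 \bs{v}^{\ast\top}\bs{C}\bs{v}^\ast \bs{I}_{q+1}$, and then the $t_q$ limit. Your added care, namely using $\bs{C} \succeq \bs{0}$ to deduce $\bs{v}^{\ast\top}\bs{C}\bs{v}^\ast > 0$, flagging that $\sigma^2 > 0$ is needed, and noting that the ratio map's discontinuity set has limit probability zero, tightens points the paper leaves implicit.
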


\begin{proof}
	{\itshape
		Let $\bs{D} \coloneqq ( \bs{A}, \bs{\iota}_m )$. Since $\bs{v}^\ast$ solves \eqref{prog}, the KKT conditions imply there exists an $( R + 1 ) \times 1$ vector $\bs{\pi}$ such that $2 \bs{C} \bs{v}^\ast + \bs{D}\bs{\pi} = \bs{0}_m$, and hence $\bs{C} \bs{v}^\ast \in \textup{col}( \bs{D} )$. For each $l$, conditions \eqref{eq:uq-bias} - \eqref{eq:uq-sum} imply $\bs{u}_l^{\ast\top} \bs{D} = \bs{0}_{R+1}^\top$, so $\bs{u}_l^{\ast\top} \bs{C} \bs{v}^\ast = 0$. By Assumption \ref{AJK} and the continuous mapping theorem,
		\begin{align}
			\begin{pmatrix}
				r_{NT} \big( \tilde{\varphi} - \varphi \big) \\
				r_{NT} \bs{U}^{\ast \top}\big( \bs{\hat{\varphi}} - \varphi \bs{\iota}_m \big)
			\end{pmatrix}
			\xrightarrow{d}
			\mathcal{N}\left(
			\bs{0}_{q+1},
			\sigma^2
			\begin{pmatrix}
				\bs{v}^{\ast \top} \bs{C} \bs{v}^\ast & \bs{v}^{\ast \top} \bs{C} \bs{U}^\ast \\
				\bs{U}^{\ast \top} \bs{C} \bs{v}^\ast & \bs{U}^{\ast \top} \bs{C} \bs{U}^\ast
			\end{pmatrix}
			\right).
			\notag
		\end{align}
		By the previous paragraph, $\bs{U}^{\ast \top} \bs{C} \bs{v}^\ast = \bs{0}_{q \times 1}$, and by \eqref{eq:uq-var-match} we have $\bs{U}^{\ast \top} \bs{C} \bs{U}^\ast = \bs{v}^{\ast \top} \bs{C} \bs{v}^\ast \bs{I}_q$. Therefore the limiting covariance matrix equals $\sigma^2 \bs{v}^{\ast\top} \bs{C} \bs{v}^\ast \bs{I}_{q+1}$. Since $\mathcal{V}\ \cap \ \textup{null}( \bs{C} ) = \emptyset$, we have $\bs{v}^{\ast \top} \bs{C} \bs{v}^\ast > 0$ and hence $\sigma^2 \bs{v}^{\ast \top} \bs{C} \bs{v}^\ast > 0$. Moreover, \eqref{eq:uq-sum} implies $\bs{U}^{\ast \top}( \bs{\hat{\varphi}} - \varphi \bs{\iota}_m ) = \bs{U}^{\ast \top} \bs{\hat{\varphi}}$, so the last $q$ components above are exactly $r_{NT}\bs{u}_l^{\ast \top} \bs{\hat{\varphi}}$. It follows that
		\begin{align}
			\frac{1}{ \sigma^2 \bs{v}^{\ast \top} \bs{C} \bs{v}^\ast }
			\sum_{l=1}^q
			\big( r_{NT}\bs{u}_l^{\ast \top} \bs{\hat{\varphi}} \big)^2
			\xrightarrow{d}
			\chi^2_q,
			\notag
		\end{align}
		and hence
		\begin{align}
			\frac{ \tilde{\sigma}_q^2 }{ \sigma^2 \bs{v}^{\ast\top} \bs{C} \bs{v}^\ast / r_{NT}^2 }
			=
			\frac{1}{q}
			\sum_{l=1}^q
			\left(
			\frac{ r_{NT}\bs{u}_l^{\ast\top} \bs{\hat{\varphi}}	}{ \sqrt{ \sigma^2 \bs{v}^{\ast\top} \bs{C} \bs{v}^\ast }}
			\right)^2
			\xrightarrow{d}
			\frac{1}{q}\chi^2_q.
			\notag
		\end{align}
		Therefore,
\begin{align}
	\frac{ r_{NT} \big( \tilde{\varphi} - \varphi \big) }{ r_{NT}\tilde{\sigma}_q }
	=
	\frac{ \tilde{\varphi} - \varphi }{ \tilde{\sigma}_q } 
	=
	\frac{ \frac{ r_{NT} \big( \tilde{\varphi} - \varphi \big) }{ \sqrt{ \sigma^2 \bs{v}^{\ast\top} \bs{C} \bs{v}^\ast } } }{
	\sqrt{ 	\frac{1}{q} \sum_{l=1}^q \left( \frac{ r_{NT}\bs{u}_l^{\ast\top} \bs{\hat{\varphi}} }{ \sqrt{ \sigma^2 \bs{v}^{\ast\top} \bs{C} \bs{v}^\ast } } \right)^2 	}
	} 
	\xrightarrow{d}
	t_q.
	\notag
\end{align}
		which yields \eqref{resq}.}
\end{proof}
We now return to our examples to illustrate the construction of a jackknife $t_q$-statistic with $q > 1$.

\begin{exampleagainhr}[One-way Effects]{ex:onefe}\small
	In our previous one-way fixed effects example we considered $m = 3$ estimators: the full sample estimator and two subsample estimators. In that design,
	\begin{align}
		\bs{A}
		=
		\begin{pmatrix}
			1 \\ 2 \\ 2
		\end{pmatrix},
		\notag
	\end{align}
	so $m -	\textup{rank}( ( \bs{A}, \bs{\iota}_3 ) ) = 1$,	and hence there is (up to scale) at most one admissible $\bs{u}^\ast$ and the resulting jackknife $t$-statistic is necessarily $t_1$-based. To illustrate the case $q > 1$, we consider a subsampling scheme that partitions the time dimension into nonoverlapping thirds. To be specific, assume $T = 3T_3$ where $T_3$ is an integer. Consider subsamples
	\begin{align}
		S_1
		&\coloneqq
		\{ 1, \ldots, N \} \times \{ 1, \ldots, T_3 \},
		\notag \\
		S_2
		&\coloneqq
		\{ 1, \ldots, N \} \times \{ T_3 + 1, \ldots, 2T_3 \},
		\notag \\
		S_3
		&\coloneqq
		\{ 1, \ldots, N \} \times \{ 2T_3 + 1, \ldots, T \}.
		\notag
	\end{align}
	Let $\hat{\varphi}^{(j)}$ be the LS estimator using subsample $S_j$ for $j = 1, 2, 3$. Collecting these with the full-sample estimator $\hat{\varphi}^{(0)}$ gives
	\begin{align}
		\bs{\hat{\varphi}}
		=
		\big(  \,
		\hat{\varphi}^{(0)}, \,
		\hat{\varphi}^{(1)}, \,
		\hat{\varphi}^{(2)}, \,
		\hat{\varphi}^{(3)}  \,
		\big)^\top.
		\notag
	\end{align}
	As previously we obtain,
	\begin{align}
		\sqrt{NT} \big( \bs{\hat{\varphi}} - \varphi \bs{\iota}_4 \big)
		=
		\bs{z}_{NT}
		+
		\bs{A} \bs{\mu}_{NT}
		+
		\smallO_{p}( 1 ),
		\notag
	\end{align}
	with $\bs{z}_{NT} \xrightarrow{d} \mathcal{N}( \bs{0}_4, \sigma^2 \bs{C} )$,
	\begin{align}
		\bs{A}
		=
		\begin{pmatrix}
			1 \\[0.1em]
			3 \\[0.1em]
			3 \\[0.1em]
			3
		\end{pmatrix},
		\qquad \text{and} \qquad
		\bs{C}
		=
		\begin{pmatrix}
			1 & 1 & 1 & 1 \\
			1 & 3 & 0 & 0 \\
			1 & 0 & 3 & 0 \\
			1 & 0 & 0 & 3
		\end{pmatrix}.
		\notag
	\end{align}
	Moreover, $m - \textup{rank}( ( \bs{A}, \bs{\iota}_4 ) ) = 2$, so there is (up to scale) at most two admissible variance weights vectors. There is a unique MVUJ estimator associated with this design which is given by
	\begin{align}
		\tilde{\varphi}
		=
		\frac{3}{2}\hat{\varphi}^{(0)}
		-
		\frac{1}{6}\hat{\varphi}^{(1)}
		-
		\frac{1}{6}\hat{\varphi}^{(2)}
		-
		\frac{1}{6}\hat{\varphi}^{(3)}.
		\notag
	\end{align}
	Notice that $\bs{C}\bs{v}^\ast = \bs{\iota}_4$ and so the restriction $\bs{u}^\top \bs{C}\bs{v}^\ast = 0$ coincides with $\bs{u}^\top \bs{\iota}_4 = 0$. Thus, we may take two linearly independent variance weights vectors and then normalise them to satisfy \eqref{eq:uq-var-match}. For instance, define
	\begin{align}
		\bs{u}_1^\ast
		\coloneqq
		\left( 0,\,-\frac{1}{\sqrt{6}},\,\frac{1}{\sqrt{6}},\,0 \right)^\top
		\qquad \text{and} \qquad
		\bs{u}_2^\ast
		\coloneqq
		\left( 0,\,-\sqrt{ \frac{1}{18} },\,-\sqrt{ \frac{1}{18} },\,\sqrt{ \frac{2}{9} } \right)^\top.
		\notag
	\end{align}
	Then writing $\bs{U}^\ast \coloneqq ( \bs{u}_1^\ast, \bs{u}_2^\ast )$,
	\begin{align}
		\bs{U}^{\ast \top} \bs{A} = \bs{0}_2^\top,
		\qquad
		\bs{U}^{\ast \top} \bs{\iota}_4 = \bs{0}_2,
		\quad \text{and} \quad
		\bs{U}^{\ast \top}\bs{C}\bs{U}^\ast = \bs{v}^{\ast \top} \bs{C} \bs{v}^\ast \bs{I}_2.
		\notag
	\end{align}
	The corresponding jackknife standard error is
	\begin{align}
		\tilde{\sigma}_2^2
		=
		\frac{1}{2}\big( \bs{u}_1^{\ast \top} \bs{\hat{\varphi}} \big)^2
		+
		\frac{1}{2}\big( \bs{u}_2^{\ast \top} \bs{\hat{\varphi}} \big)^2,
		\notag
	\end{align}
	and the jackknife $t_2$-statistic is
	\begin{align}
		\mathcal{J}_2
		\coloneqq
		\frac{ \tilde{\varphi} - \varphi }{ \tilde{\sigma}_2 }
		\xrightarrow{d}
		t_2.
		\notag
	\end{align}
\end{exampleagainhr}

\begin{exampleagainhr}[Two-way Effects]{ex:twofe}\small
	Recall
	\begin{align}
		\bs{A}
		=
		\begin{pmatrix}
			1 & 1 \\
			2 & 1 \\
			2 & 1 \\
			1 & 2 \\
			1 & 2
		\end{pmatrix},
		\notag
	\end{align}
	and hence $m - \textup{rank} ( ( \bs{A}, \bs{\iota}_5 ) ) =2$, so there is (up to scale) at most two admissible variance weights vectors. It was previously established that there is a unique MVUJ estimator associated with this design which is given by
	\begin{align}
		\tilde{\varphi}
		=
		3 \hat{\varphi}^{(0)}
		-
		\frac{1}{2} \hat{\varphi}^{(1)}
		-
		\frac{1}{2} \hat{\varphi}^{(2)}
		-
		\frac{1}{2} \hat{\varphi}^{(3)}
		-
		\frac{1}{2} \hat{\varphi}^{(4)}.
		\notag
	\end{align}
	Notice that $\bs{C} \bs{v}^\ast = \bs{\iota}_5$ and so the restriction $\bs{u}^\top \bs{C} \bs{v}^\ast = 0$ coincides with $\bs{u}^\top \bs{\iota}_5 = 0$. Thus, we may take two linearly independent variance weights vectors and then normalise them to satisfy \eqref{eq:uq-var-match}. For instance, define
	\begin{align}
		\bs{u}_1^\ast
		\coloneqq
		\left(
		0,\,
		-\frac{1}{2},\,
		\frac{1}{2},\,
		0,\,
		0
		\right)^\top
		\qquad \text{and} \qquad
		\bs{u}_2^\ast
		\coloneqq
		\left(
		0,\,
		0,\,
		0,\,
		-\frac{1}{2},\,
		\frac{1}{2}
		\right)^\top.
		\notag
	\end{align}
	Then writing $\bs{U}^\ast \coloneqq ( \bs{u}_1^\ast, \bs{u}_2^\ast )$,
	\begin{align}
		\bs{U}^{\ast \top}\bs{A} = \bs{0}_{2 \times 2},
		\qquad
		\bs{U}^{\ast \top}\bs{\iota}_5 = \bs{0}_2,
		\quad \text{and} \quad
		\bs{U}^{\ast \top}\bs{C}\bs{U}^\ast = \bs{v}^{\ast \top}\bs{C}\bs{v}^\ast \bs{I}_2.
		\notag
	\end{align}
	The corresponding jackknife standard error is
	\begin{align}
		\tilde{\sigma}_2^2
		=
		\frac{1}{2}\big( \bs{u}_1^{\ast \top} \bs{\hat{\varphi}} \big)^2
		+
		\frac{1}{2}\big( \bs{u}_2^{\ast \top} \bs{\hat{\varphi}} \big)^2,
		\notag
	\end{align}
	and the jackknife $t_2$ statistic is given by
	\begin{align}
		\mathcal{J}_2
		\coloneqq
		\frac{ \tilde{\varphi} - \varphi }{ \tilde{\sigma}_2 }
		\xrightarrow{d}
		t_2.
		\notag
	\end{align}
\end{exampleagainhr}

\subsection{Jackknife $t^2_q$ Statistic}
In this section we present a statistic which nests those presented in Section \ref{jkt} and \ref{jktq} as special cases. In particular, we show how it is possible to construct a jackknife counterpart to Hotelling's $t^2$-statistic \citep{hotelling1931generalization}. This allows us to extend our method of inference to encompass vector parameters and to accommodate joint hypothesis tests. We begin by expanding Assumptions \ref{AAD} and \ref{AJK}. 

\begin{AssumptionAD*}\label{AADs}
	There exists an estimator $\hat{\bs{\varphi}}^{(0)}$ of a $p \times 1$ parameter of interest $\bs{\varphi}$ such that 
	\begin{align}
		r_{NT}( \hat{\bs{\varphi}}^{(0)} - \bs{\varphi} ) 
		=
		\bs{z}_{NT}^{(0)}
		+
		( \bs{\iota}_{R}^\top \otimes \bs{I}_{p} ) \bs{\mu}_{NT} 
		+
		\bs{\smallO}_{p}( 1 ), \notag 
	\end{align}
	as $N,T \rightarrow \infty$ and $N/T \rightarrow \gamma^2 \in ( 0, \infty )$, where $\bs{\mu}_{NT} \coloneqq ( \bs{\mu}_{NT,1}^\top, \ldots, \bs{\mu}_{NT,R}^\top )^\top$ with $\bs{\mu}_{NT,1} , \ldots, \bs{\mu}_{NT,R}$, which are nonstochastic, but possibly $N$ and $T$ dependent, and 
	\begin{align}
		\bs{z}_{NT}^{(0)}
		\xrightarrow{d}
		\mathcal{N}( \bs{0}_{p}, \bs{\Sigma} ), \notag 
	\end{align}
	with $\bs{\Sigma} \succ \bs{0}_{p \times p}$.
\end{AssumptionAD*}

Assumption \ref{AADs} is a natural generalisation of Assumption \ref{AAD}. Examples of such results may be found in the references cited in the discussion that surrounds that assumption. 

\begin{AssumptionJK*}\label{AJKs}
	Let $m \geq R + 2$ be a fixed integer. There exist $m - 1$ estimators $\hat{\bs{\varphi}}^{(1)}, \ldots, \hat{\bs{\varphi}}^{(m-1)}$ obtained from subsamples of data such that 
	\begin{align}
		r_{NT}(  \hat{\bs{\varphi}} - ( \bs{\iota}_{m} \otimes \bs{\varphi} ) )
		=
		\bs{z}_{NT}
		+
		( \bs{A} \otimes \bs{I}_{p} ) \bs{\mu}_{NT} 
		+
		\bs{\smallO}_{p}( 1 ), \notag 
	\end{align}
	as $N,T \rightarrow \infty$ and $N/T \rightarrow \gamma^2 \in ( 0, \infty )$, where 
	\begin{align}
		\hat{\bs{\varphi}} 
		\coloneqq
		\begin{pmatrix}
			\hat{\bs{\varphi}}^{(0)} \\
			\hat{\bs{\varphi}}^{(1)} \\
			\vdots \\
			\hat{\bs{\varphi}}^{(m-1)}
		\end{pmatrix},
		\qquad 
		\bs{z}_{NT}
		\xrightarrow{d}
		\mathcal{N}( \bs{0}_{mp}, ( \bs{C} \otimes \bs{\Sigma} ) ), \notag 
	\end{align}
	$\bs{A}$ and $\bs{C}$ are known matrices of dimension $m \times R$ and $m \times m$, respectively, $\textup{rank}(\bs{A}) = R$, $\bs{\iota}_m \notin \textup{col}( \bs{A} )$, $\textup{null}( (\bs{A}, \bs{\iota}_m )^\top) \not\subseteq \textup{null}(\bs{C})$, and $\bs{C} \succeq \bs{0}_{m \times m}$.
\end{AssumptionJK*}

Now, consider a vector-valued function $\bs{g}( \bs{\varphi} ) \in \mathbb{R}^{k \times 1}$ which is continuously differentiable in a neighbourhood of $\bs{\varphi}$ with derivative $\bs{G}( \bs{\varphi} ) \in \mathbb{R}^{k \times p}$. Let
\begin{align}
	\bs{h}( \bs{\hat{\varphi}} )
	\coloneqq
	\begin{pmatrix}
		\bs{g}( \bs{\hat{\varphi}}^{(0)} ) \\
		\bs{g}( \bs{\hat{\varphi}}^{(1)} ) \\
		\vdots \\
		\bs{g}( \bs{\hat{\varphi}}^{(m-1)} )
	\end{pmatrix}, \notag 
\end{align}
be a $mk \times 1$ vector. Under Assumption \ref{AADs} and \ref{AJKs}, using $\bs{A}$ and $\bs{C}$ it is possible to construct an MVUJ weights vector $\bs{v}^{\ast}$, as well as variance weights vectors $\bs{u}_{1}^{\ast}, \ldots, \bs{u}_{q}^{\ast}$, in the same manner as described in Sections \ref{MVUJ} and \ref{jktq}, respectively. Indeed, we will presume that it is possible to construct $q \geq k$ such variance weights vectors. Define the $k \times k$ matrix
\begin{align}
	\bs{\hat{\Sigma}}_{q} 
	\coloneqq
	\frac{1}{q} \sum_{ l = 1 }^{q}
	\left(
	( \bs{u}_{l}^{\ast\top} \otimes \bs{I}_{k} ) \bs{h}( \hat{\bs{\varphi}} )
	\right)
	\left(
	( \bs{u}_{l}^{\ast\top} \otimes \bs{I}_{k} ) \bs{h}( \hat{\bs{\varphi}}  )
	\right)^\top, \notag  
\end{align}
which generalises the jackknife variance described in Theorems \ref{thm:jt} and \ref{thm:jtq}. Using this we form the following statistic
\begin{align}
	\mathcal{J}^2_q 
	\coloneqq
	\left(
	( \bs{v}^{\ast\top} \otimes \bs{I}_{k} ) \bs{h}( \hat{\bs{\varphi}}  )
	\right)^\top
	\bs{\hat{\Sigma}}_{q}^{-1}
	\left(
	( \bs{v}^{\ast\top} \otimes \bs{I}_{k} ) \bs{h}( \hat{\bs{\varphi}}  )
	\right). \label{jt2q}
\end{align}
We refer to \eqref{jt2q} as the \textit{jackknife $t^2_q$-statistic}. Theorem \ref{thm:jk2q} below characterises the asymptotic distribution of this statistic. 

\begin{theorem}\label{thm:jk2q}
	Assume $\mathcal{V}\ \cap \ \textup{null}( \bs{C} ) = \emptyset$. Let $\bs{v}^\ast$ solve \eqref{prog}, and let $\bs{u}_1^\ast, \ldots, \bs{u}_q^\ast$ be such that with $\bs{U}^\ast \coloneqq ( \bs{u}_1^\ast, \ldots, \bs{u}_q^\ast )$ conditions \eqref{eq:uq-bias} - \eqref{eq:uq-var-match} hold. Further, assume that $q \geq k$, $\bs{G}( \bs{\varphi} ) \bs{\Sigma} \bs{G}( \bs{\varphi} )^{\top} \succ \bs{0}_{k \times k}$, and $\| ( \bs{A} \otimes \bs{I}_{p} ) \bs{\mu}_{NT} \|_2 = \mathcal{O}( 1 )$. Under Assumptions \ref{AADs} and \ref{AJKs},
	\begin{align}
		\mathcal{J}^2_q  \xrightarrow{d} t^2_{q,k} \quad \text{if} \quad \bs{g}( \bs{\varphi} ) = \bs{0}_{k \times 1}, \notag
	\end{align}
		as $N, T \rightarrow \infty$ and $N / T \rightarrow \gamma^2 \in ( 0, \infty )$, where $ t^2_{q,k}$ denotes Hotelling's $t^2$-distribution with parameters $q$ and $k$. 
\end{theorem}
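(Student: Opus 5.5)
The plan is to linearise $\bs{h}(\hat{\bs{\varphi}})$ around $\bs{\iota}_m \otimes \bs{\varphi}$ by the delta method. Once that is done, the numerator and the matrix $\bs{\hat{\Sigma}}_q$ have the same structure as in the proof of Theorem \ref{thm:jtq}, with each scalar now a $k$-vector. From Assumption \ref{AJKs} and $\| ( \bs{A} \otimes \bs{I}_{p} ) \bs{\mu}_{NT} \|_2 = \mathcal{O}(1)$, each block satisfies $r_{NT}(\hat{\bs{\varphi}}^{(j)} - \bs{\varphi}) = \mathcal{O}_p(1)$, so $\hat{\bs{\varphi}}^{(j)} \to_p \bs{\varphi}$. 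This boundedness of the bias is exactly why the extra assumption is imposed. A mean value expansion of $\bs{g}$ row by row, using continuity of $\bs{G}$ near $\bs{\varphi}$ and $\bs{g}(\bs{\varphi}) = \bs{0}$, then gives
\begin{align}
	r_{NT}\, \bs{h}(\hat{\bs{\varphi}})
	=
	( \bs{I}_m \otimes \bs{G} ) \big( \bs{z}_{NT} + ( \bs{A} \otimes \bs{I}_p ) \bs{\mu}_{NT} \big) + \bs{\smallO}_p(1),
	\notag
\end{align}
with $\bs{G} \coloneqq \bs{G}(\bs{\varphi})$. The $\bs{\smallO}_p(1)$ term is controlled because $r_{NT}(\hat{\bs{\varphi}} - \bs{\iota}_m \otimes \bs{\varphi})$ is $\mathcal{O}_p(1)$.

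Next we apply the weights. Let $\bs{w}$ be either $\bs{v}^\ast$ or some $\bs{u}_l^\ast$. The mixed product rule gives $(\bs{w}^\top \otimes \bs{I}_k)(\bs{I}_m \otimes \bs{G})(\bs{A} \otimes \bs{I}_p) = \bs{w}^\top \bs{A} \otimes \bs{G}$. This vanishes by \eqref{v1} and \eqref{eq:uq-bias}, so the bias drops out exactly, with no need for it to converge. Note also that $\bs{g}(\bs{\varphi}) = \bs{0}$ makes the centring term $\bs{w}^\top \bs{\iota}_m \otimes \bs{g}(\bs{\varphi})$ vanish automatically, so $\bs{v}^{\ast\top}\bs{\iota}_m = 1$ causes no trouble. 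Stacking these, $r_{NT}\big( (\bs{v}^{\ast\top}, \bs{U}^{\ast\top})^\top \otimes \bs{I}_k \big) \bs{h}(\hat{\bs{\varphi}})$ equals $\big( (\bs{v}^\ast, \bs{U}^\ast)^\top \otimes \bs{G} \big) \bs{z}_{NT} + \bs{\smallO}_p(1)$. By the continuous mapping theorem this converges to $\mathcal{N}\big(\bs{0}, ((\bs{v}^\ast,\bs{U}^\ast)^\top \bs{C} (\bs{v}^\ast,\bs{U}^\ast)) \otimes \bs{G}\bs{\Sigma}\bs{G}^\top\big)$.

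As in the proof of Theorem \ref{thm:jtq}, the KKT argument gives $\bs{C}\bs{v}^\ast \in \textup{col}(\bs{D})$, hence $\bs{U}^{\ast\top}\bs{C}\bs{v}^\ast = \bs{0}$. Combined with \eqref{eq:uq-var-match}, the limiting covariance is $c\, \bs{I}_{q+1} \otimes \bs{\Omega}$, where $c \coloneqq \bs{v}^{\ast\top}\bs{C}\bs{v}^\ast > 0$ (since $\mathcal{V} \cap \textup{null}(\bs{C}) = \emptyset$) and $\bs{\Omega} \coloneqq \bs{G}\bs{\Sigma}\bs{G}^\top \succ \bs{0}$. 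Thus the normalised numerator $\bs{x}_0$ and the vectors $\bs{x}_1,\ldots,\bs{x}_q$ are asymptotically i.i.d. $\mathcal{N}(\bs{0}, c\bs{\Omega})$.

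The factors of $r_{NT}$ and $c$ cancel in $\mathcal{J}^2_q$, which is the continuous map $(\bs{x}_0,\ldots,\bs{x}_q) \mapsto \bs{x}_0^\top \big( \frac{1}{q}\sum_l \bs{x}_l \bs{x}_l^\top \big)^{-1} \bs{x}_0$. This map is continuous wherever the matrix $\frac{1}{q}\sum_l \bs{x}_l\bs{x}_l^\top$ is invertible. The main step to check is that this inverse is well defined in the limit. Since $q \ge k$ and $\bs{\Omega} \succ \bs{0}$, the limiting Wishart matrix $W_k(q, c\bs{\Omega})$ is nonsingular almost surely, so the discontinuity set has limit probability zero and the continuous mapping theorem applies. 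The limit is $\bs{x}_0^\top (\bs{W}/q)^{-1}\bs{x}_0$ with $\bs{x}_0$ independent of $\bs{W}$. This is Hotelling's $t^2$ with parameters $q$ and $k$, and it is invariant to the common scale $c\bs{\Omega}$, which yields the claim.
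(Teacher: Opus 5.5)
Your proposal is correct and follows the paper's proof essentially step for step. The shared steps are: a row-wise mean-value linearisation of $\bs{h}$, with the bounded-bias condition giving consistency and $r_{NT}(\hat{\bs{\varphi}} - \bs{\iota}_m \otimes \bs{\varphi}) = \mathcal{O}_p(1)$; exact removal of the bias through $\bs{w}^\top \bs{A} \otimes \bs{G}(\bs{\varphi}) = \bs{0}$; the joint limit with covariance $\bs{v}^{\ast\top}\bs{C}\bs{v}^\ast \, \bs{I}_{q+1} \otimes \bs{G}(\bs{\varphi})\bs{\Sigma}\bs{G}(\bs{\varphi})^\top$; and the continuous mapping theorem with an almost surely nonsingular Wishart limit. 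The only cosmetic difference is at the end, where you use the scale invariance of Hotelling's statistic, whereas the paper first standardises explicitly by $(\bs{v}^{\ast\top}\bs{C}\bs{v}^\ast \, \bs{G}(\bs{\varphi})\bs{\Sigma}\bs{G}(\bs{\varphi})^\top)^{-1/2}$.
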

\begin{proof}
	See Appendix \ref{appA}. 
\end{proof}

Using the result in Theorem \ref{thm:jk2q}, one can readily conduct hypothesis tests, obtain associated $p$-values, and construct confidence ellipsoids for the parameter of interest. Compared to Theorem \ref{thm:jt} and \ref{thm:jtq}, Theorem \ref{thm:jk2q} imposes a stronger condition on the bias term since it assumes
$\| ( \bs{A} \otimes \bs{I}_{p} ) \bs{\mu}_{NT} \|_2 = \mathcal{O}( 1 )$. This assumption can be dispensed with when $\bs{g}( \cdot )$ is linear.  

\begin{remark}
	Suppose $k = 1$ and take ${g}( {\varphi} ) = \varphi - \varphi_{0}$, where $\varphi_{0}$ is a hypothesised value of $\varphi$. Then $\bs{h}( \hat{\bs{\varphi}} )$ is an $m \times 1$ vector with $j$-th entry $\hat{\varphi}^{(j)} - \varphi_{0}$. Moreover, $\bs{\hat{\Sigma}}_{q}$ is a scalar and equals to
	\begin{align}
		\bs{\hat{\Sigma}}_{q}
		=
		\frac{1}{q}\sum_{l=1}^{q}
		\left(
		\bs{u}_{l}^{\ast\top}\bs{h}( \hat{\bs{\varphi}} )
		\right)^{2},
		\notag
	\end{align}
	while the numerator in \eqref{jt2q} reduces to $( \bs{v}^{\ast \top}\bs{h}( \hat{\bs{\varphi}} ) )^{2}$. Hence,
	\begin{align}
		\mathcal{J}^{2}_{q}
		=
		\frac{
			\left(
			\bs{v}^{\ast\top}\bs{h}( \hat{\bs{\varphi}} )
			\right)^{2}
		}{
			\frac{1}{q}\sum_{l=1}^{q}
			\left(
			\bs{u}_{l}^{\ast\top}\bs{h}( \hat{\bs{\varphi}} )
			\right)^{2}
		}
		=
		\left(
		\frac{
			\bs{v}^{\ast\top}\bs{h}( \hat{\bs{\varphi}} )
		}{
			\sqrt{ \frac{1}{q}\sum_{l=1}^{q}
				\left(
				\bs{u}_{l}^{\ast\top}\bs{h}( \hat{\bs{\varphi}} )
				\right)^{2} }
		}
		\right)^{2}.
		\notag
	\end{align}
	The bracketed expression in the rightmost display above is exactly the jackknife $t_{q}$-statistic of Section \ref{jktq}.
\end{remark}

\begin{remark}
	A $( 1 - \alpha )$ confidence ellipsoid for $\bs{\varphi}$ is given by the set of $\bs{\varphi}_0 \in \mathbb{R}^{p}$ satisfying
	\begin{align}
			\left(
			( \bs{v}^{\ast \top} \otimes \bs{I}_{p} ) 
			( \hat{\bs{\varphi}} - ( \bs{\iota}_{m} \otimes \bs{\varphi}_0 )  )		
			\right)^{\top}
			\bs{\hat{\Sigma}}_{q}^{-1}
			\left(
			( \bs{v}^{\ast \top} \otimes \bs{I}_{p} ) 
			( \hat{\bs{\varphi}} - ( \bs{\iota}_{m} \otimes \bs{\varphi}_0 )  )		
			\right)
			\leq
			t^{2}_{q,p,1-\alpha},
			\notag
	\end{align}
	where $t^{2}_{q,p,1-\alpha}$ denotes the $( 1 - \alpha )$ quantile of $t^{2}_{q,p}$.
\end{remark}

\begin{remark}
	Using the result 
	\begin{align}
		\frac{ q - k + 1 }{ k q } t^2_{q,k} \sim F_{k,q-k+1}, \notag 
	\end{align}
	for $q \geq k$, one can equally obtain critical values and compute $p$-values using the $F_{k,q-k+1}$ distribution rather than working directly with the $t^2_{q,k}$ distribution.
\end{remark}

\begin{remark}
	As $q \rightarrow \infty$, $t^2_{q,k} \rightarrow \chi^2_{k}$, and therefore, similar to the result in Section \ref{jktq}, one can obtain shorter confidence ellipsoids by increasing $q$ through the use of additional subsamples. 
\end{remark}

\section{Further Matter}

\subsection{Multi-dimensional Fixed Effects}\label{KWM}
We have focused thus far on two-dimensional panels since these remain the most commonly encountered in practice. However, panels with three or more dimensions are increasingly available, for example exporter-importer-year data and employer-employee-year data. Our framework extends naturally to multi-dimensional panels and provides a systematic way to construct bias-corrected jackknife estimators and jackknife $t$-statistics.

Consider a multi-dimensional panel with dimension sizes $N_1,\ldots,N_K$ and let $N \coloneqq N_1 \times \ldots \times N_{K}$. We generalise Assumptions \ref{AAD} and \ref{AJK}. 

\begin{AssumptionADD}\label{ADD}
	There exists an estimator of $\varphi$, denoted by $\hat{\varphi}^{(0)}$, such that 
	\begin{align}
		r_{N} \big( \hat{\varphi}^{(0)} - \varphi \big)
		=
		z_{N}
		+
		\sum_{r=1}^R \mu_{r,N}
		+
		\smallO_{p}( 1 ), \label{forms}
	\end{align}
	as $N_{\min} \rightarrow \infty$ along some sequence $\{ ( N_1 , \ldots , N_K ) \}$, where $\mu_{1,N}, \ldots, \mu_{R,N}$ are nonstochastic, but possibly $N_1, \ldots N_{K}$ dependent, and 
	\begin{align}
		z_{N}
		\xrightarrow{d}
		\mathcal{N}( 0, \sigma^2 ). \notag 
	\end{align} 
\end{AssumptionADD}

Results of the form \eqref{forms} have been established for Poisson models with three-way fixed effects in \cite{weidner2021bias}, and for a broader class of nonlinear models with three-way fixed effects in \cite{czarnowske2025debiased}. Moreover, there are other three-way models, and $K$-way  models more generally, for which results of the form \ref{ADD} could reasonably be supposed to hold. 

\begin{AssumptionJKD}\label{JKD}
	Let $m \geq R + 2$ be a fixed integer. There exist $m - 1$ estimators $\hat{\varphi}^{(1)}, \ldots, \hat{\varphi}^{(m-1)}$ obtained from subsamples of data such that
	\begin{align}
		r_{N} \big( \bs{\hat{\varphi}} - \varphi \bs{\iota}_m \big)
		&=
		\bs{z}_{N}
		+
		\bs{A} \bs{\mu}_{N}
		+
		\bs{\smallO}_p( 1 ),
		\notag
	\end{align}
	as $N_{\min} \rightarrow \infty$ along some sequence $\{ ( N_1, \ldots, N_K ) \}$, where
	\begin{align}
		\bs{z}_{N}
		\xrightarrow{d}
		\mathcal{N} \big( \bs{0}_m, \sigma^2 \bs{C} \big),
		\notag
	\end{align}
	$\bs{\hat{\varphi}} \coloneqq ( \hat{\varphi}^{(0)}, \ldots, \hat{\varphi}^{(m-1)} )^\top$, $\bs{A}$ and $\bs{C}$ are known matrices of dimension $m \times R$ and $m \times m$, respectively, $\textup{rank}(\bs{A}) = R$, $\bs{\iota}_m \notin \textup{col}( \bs{A} )$, $\textup{null}( (\bs{A}, \bs{\iota}_m )^\top) \not\subseteq \textup{null}(\bs{C})$, $\bs{C} \succeq \bs{0}_{m \times m}$, and $\bs{\mu}_{N} \coloneqq (\mu_{1,N}, \ldots, \mu_{R,N})^\top$.
\end{AssumptionJKD}
The leading order bias of estimators of fixed effects models often scales in a predictable way with sample size, such that the matrix $\bs{A}$ can be determined from knowledge only of the order of the bias. In the context of a two-way fixed effects model, we usually expect the (leading) order bias of an estimator to be of the form
\begin{align}
	\sqrt{ \frac{ N_1 }{ N_2 } } b_1
	+
	\sqrt{ \frac{ N_2 }{ N_1 } } b_2.
	\notag
\end{align}
Heuristically, the rate can be understood as being determined by the relative order of the standard error and the dimension of the fixed effects. For example, with fixed effects $\bs{\lambda}^{(1)}$ and $\bs{\lambda}^{(2)}$, where $\textup{dim}( \bs{\lambda}^{(1)} ) = N_1$ and $\textup{dim}( \bs{\lambda}^{(2)} ) = N_2$, and with the order of the standard error as $\sqrt{N}$,
\begin{align}
	\sqrt{ \frac{ N_1 }{ N_2 } }
	+
	\sqrt{ \frac{ N_2 }{ N_1 } }
	=
	\frac{
	\textup{dim}\big( \bs{\lambda}^{(1)} \big)
	+
	\textup{dim}\big( \bs{\lambda}^{(2)} \big)
	}{
	\sqrt{N}
	}.
	\notag
\end{align}
As another example, consider $K = 3$. With fixed effects $\bs{\lambda}^{(1)}$, $\bs{\lambda}^{(2)}$, and $\bs{\lambda}^{(3)}$, where $\textup{dim}\big( \bs{\lambda}^{(1)} \big) = N_1$, $\textup{dim}\big( \bs{\lambda}^{(2)} \big) = N_2$, and $\textup{dim}\big( \bs{\lambda}^{(3)} \big) = N_3$, and with the order of the standard error as $\sqrt{ N }$,
\begin{align}
	\frac{
	\textup{dim}\big( \bs{\lambda}^{(1)} \big)
	+
	\textup{dim}\big( \bs{\lambda}^{(2)} \big)
	+
	\textup{dim}\big( \bs{\lambda}^{(3)} \big)
	}{
	\sqrt{N}
	}
	=&\
	\frac{ \textup{dim}\big( \bs{\lambda}^{(1)} \big) }{\sqrt{N}}
	+
	\frac{ \textup{dim}\big( \bs{\lambda}^{(2)} \big) }{\sqrt{N}}
	+
	\frac{ \textup{dim}\big( \bs{\lambda}^{(3)} \big) }{\sqrt{N}}
	\notag \\
	=&\
	\sqrt{ \frac{ N_1 }{ N_2 N_3 } }
	+
	\sqrt{ \frac{ N_2 }{ N_1 N_3 } }
	+
	\sqrt{ \frac{ N_3 }{ N_1 N_2 } }.
	\notag
\end{align}
Or indeed, the fixed effects may be present in multiple dimensions simultaneously, for example, $\bs{\lambda}^{(1,2)}$, $\bs{\lambda}^{(2,3)}$, and $\bs{\lambda}^{(1,3)}$, where $\textup{dim}( \bs{\lambda}^{(1,2)} ) = N_1 N_2$, $\textup{dim}( \bs{\lambda}^{(2,3)} ) = N_2 N_3$, and $\textup{dim}( \bs{\lambda}^{(1,3)} ) = N_1 N_3$, and with the order of the standard error as $\sqrt{N}$,
\begin{align}
	\frac{
	\textup{dim}\big( \bs{\lambda}^{(1,2)} \big)
	+
	\textup{dim}\big( \bs{\lambda}^{(2,3)} \big)
		+
	\textup{dim}\big( \bs{\lambda}^{(1,3)} \big)
	}{
	\sqrt{N}
	}
	=&\
	\frac{ \textup{dim}\big( \bs{\lambda}^{(1,2)} \big) }{\sqrt{N}}
	+
	\frac{ \textup{dim}\big( \bs{\lambda}^{(2,3)} \big) }{\sqrt{N}}
	+
	\frac{ \textup{dim}\big( \bs{\lambda}^{(1,3)} \big) }{\sqrt{N}}
	\notag \\
	=&\
	\sqrt{ \frac{ N_1 N_2 }{ N_3 } }
	+
	\sqrt{ \frac{ N_2 N_3 }{ N_1 } }
	+
	\sqrt{ \frac{ N_1 N_3 }{ N_2 } }.
	\notag
\end{align}
Applying this same heuristic we are able to determine the (likely) structure of the matrix $\bs{A}$ in multi-dimensional fixed effects models.
\begin{remark}	
	It can indeed be that the bias terms diverge under certain asymptotic sequences. Nonetheless, this is not ruled out by Assumption \ref{JKD} (nor indeed Assumption \ref{AJK}). 
\end{remark}

The following example illustrates the construction of an MVUJ estimator and jackknife $t$-statistic in a three-way fixed effects model. 

\begin{examplehr}[Three-way Effects]\label{ex:threefe-overlap}\small
	Consider the following three-way fixed effects model 
	\begin{align}
		y_{ i_1 i_2 i_3 }
		=
		\lambda_{ i_1 i_2 }
		+
		\gamma_{ i_2 i_3 }
		+
		\delta_{ i_3 i_1 }
		+
		\varepsilon_{ i_1 i_2 i_3 },
		\quad
		i_k = 1, \ldots, N_k, 
		\quad k = 1, 2, 3,
		\notag
	\end{align}
	where $\varepsilon_{ i_1 i_2 i_3 } \sim \textup{iid } \mathcal{N}( 0, \varphi )$ and
	$\varepsilon_{ i_1 i_2 i_3 }$, $\lambda_{ i_1 i_2 }$, $\gamma_{ i_2 i_3 }$, and $\delta_{ i_3 i_1 }$
	are mutually independent. Write $N \coloneqq N_1 N_2 N_3$ and let $\hat{\varphi}^{(0)}$ be the ML estimator of $\varphi$. As $N_{\min} \rightarrow \infty$ with $N_1 \propto N_2 \propto N_3$,
	\begin{align}
		\sqrt{N}\big( \hat{\varphi}^{(0)} - \varphi \big)
		=
		\frac{1}{\sqrt{N}}
		\sum_{i_1=1}^{N_1}
		\sum_{i_2=1}^{N_2}
		\sum_{i_3=1}^{N_3}
		\nu_{i_1 i_2 i_3}
		+
		\mu_{1,N} + \mu_{2,N} + \mu_{3,N}
		+
		\smallO_p(1),
		\notag
	\end{align}
	with 
	\begin{align}
		\mu_{1,N} \coloneqq - \varphi \frac{N_1N_2}{\sqrt{N}},
		\qquad
		\mu_{2,N} \coloneqq - \varphi \frac{N_2N_3}{\sqrt{N}},
		\qquad
		\mu_{3,N} \coloneqq  -\varphi \frac{N_3N_1}{\sqrt{N}}, \notag 
	\end{align}
	and 
	\begin{align}
		\frac{1}{\sqrt{N}}
		\sum_{i_1=1}^{N_1}
		\sum_{i_2=1}^{N_2}
		\sum_{i_3=1}^{N_3}
		\nu_{i_1 i_2 i_3}
		\xrightarrow{d}
		\mathcal{N}(0,2\varphi^2), \notag 
	\end{align}
	where $\nu_{i_1 i_2 i_3} \coloneqq \varepsilon_{i_1 i_2 i_3}^2 - \varphi$. Thus Assumption \ref{ADD} holds with $r_N = \sqrt{N}$, $R = 3$, and $\bs{\mu}_N \coloneqq( \mu_{1,N}, \mu_{2,N}, \mu_{3,N} )^\top$. Assume $N_1 = 2N_{1,2}$, $N_2 = 2N_{2,2}$, and $N_3 = 2N_{3,2}$ where $N_{1,2}$, $N_{2,2}$, and $N_{3,2}$ are integers. Define subsamples
\begin{align}
	\begin{aligned}
		S_ 1&= \{ 1, \ldots, N_{1,2} \} \times \{ 1, \ldots, N_2 \} \times \{ 1, \ldots, N_3 \},\\
		S_2 &= \{ N_{1,2} + 1, \ldots, N_1 \} \times\{ 1, \ldots, N_2 \} \times \{ 1, \ldots, N_3 \},\\
		S_3 &= \{ 1, \ldots, N_1 \} \times \{ 1, \ldots, N_{2,2}\} \times \{ 1, \ldots ,N_3 \},\\
		S_4 &= \{ 1, \ldots, N_1 \} \times \{ N_{2,2} + 1, \ldots, N_2 \} \times \{ 1, \ldots, N_3 \},\\
		S_5 &= \{ 1, \ldots, N_1 \} \times \{ 1, \ldots, N_2 \} \times \{ 1, \ldots, N_{3,2} \},\\
		S_6 &= \{ 1, \ldots, N_1 \} \times \{ 1, \ldots, N_2 \} \times \{ N_{3,2} + 1, \ldots, N_3 \}.
	\end{aligned}
	\notag
\end{align}
Let $\hat{\varphi}^{(j)}$ be the ML estimator computed on subsample $S_j$ for $j=1,\ldots,6$ and define
\begin{align}
	\bs{\hat{\varphi}}
	\coloneqq
	\big(
	\hat{\varphi}^{(0)}, \hat{\varphi}^{(1)}, \hat{\varphi}^{(2)},\hat{\varphi}^{(3)}, \hat{\varphi}^{(4)}, \hat{\varphi}^{(5)}, \hat{\varphi}^{(6)}
	\big)^\top.
	\notag
\end{align}
Then Assumption \ref{JKD} is satisfied with $m=7$,
\begin{align}
	\bs{A}
	=
	\begin{pmatrix}
		1 & 1 & 1\\
		1 & 2 & 1\\
		1 & 2 & 1\\
		1 & 1 & 2\\
		1 & 1 & 2\\
		2 & 1 & 1\\
		2 & 1 & 1
	\end{pmatrix}
	\quad
	\text{and}
	\quad
	\bs{C}
	=
	\begin{pmatrix}
		1 & 1 & 1 & 1 & 1 & 1 & 1\\
		1 & 2 & 0 & 1 & 1 & 1 & 1\\
		1 & 0 & 2 & 1 & 1 & 1 & 1\\
		1 & 1 & 1 & 2 & 0 & 1 & 1\\
		1 & 1 & 1 & 0 & 2 & 1 & 1\\
		1 & 1 & 1 & 1 & 1 & 2 & 0\\
		1 & 1 & 1 & 1 & 1 & 0 & 2
	\end{pmatrix}.
	\notag
\end{align}
In this design the MVUJ weights vector is unique and given by
\begin{align}
	\bs{v}^\ast
	=
	\left(
	4,\,
	-\frac{1}{2},\,
	-\frac{1}{2},\,
	-\frac{1}{2},\,
	-\frac{1}{2},\,
	-\frac{1}{2},\,
	-\frac{1}{2}
	\right)^\top,
	\notag
\end{align}
such that $\tilde{\varphi} \coloneqq \bs{v}^{\ast \top} \bs{\hat{\varphi}}$.
Moreover, $m - \textup{rank}( ( \bs{A},\bs{\iota}_7 ) ) = 3$, so it is possible to construct up to $q=3$ variance weights vectors. A convenient choice is
\begin{align}
	\bs{u}_1^\ast
	=
	\left(0,\,-\frac{1}{2},\,\frac{1}{2},\,0,\,0,\,0,\,0\right)^\top,
	\qquad
	\bs{u}_2^\ast
	=
	\left(0,\,0,\,0,\,-\frac{1}{2},\,\frac{1}{2},\,0,\,0\right)^\top,
	\notag
\end{align}
\vspace{-0.5cm}
\begin{align}
	\text{and} \quad \bs{u}_3^\ast
	=
	\left(0,\,0,\,0,\,0,\,0,\,-\frac{1}{2},\,\frac{1}{2}\right)^\top.
	\notag
\end{align}
Consequently, for $q\in\{1,2,3\}$ one may take $\bs{U}^\ast\coloneqq(\bs{u}_1^\ast,\ldots,\bs{u}_q^\ast)$ and form
\begin{align}
	\mathcal{J}_q
	\coloneqq
	\frac{\tilde{\varphi}-\varphi}{\tilde{\sigma}_q}
	\xrightarrow{d}
	t_q
	\quad
	\text{with}
	\quad
	\tilde{\sigma}_q^2
	\coloneqq
	\frac{1}{q}\sum_{l=1}^q\big(\bs{u}_l^{\ast\top}\bs{\hat{\varphi}}\big)^2.
	\notag
\end{align}
\normalsize
\end{examplehr}

\subsection{Higher-order Bias Corrections}\label{HBC} 
In this section we discuss how higher-order bias can be embedded in our framework, thereby allowing us to construct higher-order MVUJ estimators. To fix ideas, suppose an estimator $\hat{\varphi}^{(0)}$ of a parameter of interest in a two-way fixed effects model admits an $\ell$-th order expansion
\begin{align}
	\sqrt{NT} \big( \hat{\varphi}^{(0)} - \varphi \big)
	&=
	z_{NT}
	+
	\sqrt{ \frac{N}{T} } \sum_{l=1}^{\ell}\frac{ b_{l}^{(T)} }{ T^{ ( l - 1 ) / 2 } }
	+
	\sqrt{ \frac{T}{N} } \sum_{l=1}^{\ell}\frac{ b_{l}^{(N)} }{ N^{ ( l - 1 ) / 2 } }
	+
	\smallO_p( 1 ),
	\notag
\end{align}
as $N, T \to \infty$ and $N / T \rightarrow \gamma^2 \in ( 0, \infty )$, where $z_{NT} \xrightarrow{d} \mathcal{N}( 0, \sigma^2 )$. Assume $T = ( \ell + 1 ) T_{\ell+1} $ where $T_{\ell+1}$ is an integer. Partition the time-dimension into $\ell + 1$ disjoint blocks of equal length with
\begin{align}
	I_l
	\coloneqq
	\{ ( l - 1 ) T_{\ell+1} + 1, \ldots, l T_{\ell+1} \},
	\qquad
	l = 1, \ldots, \ell + 1.
	\notag
\end{align}
For $j = 1, \ldots, \ell$ define subsamples
\begin{align}
	\begin{aligned}
		S_j
		\coloneqq
		\{ 1, \ldots, N \}
		\times
		\Big(
		\bigcup_{l=1}^{j} I_l
		\Big),
		\qquad
		|S_j|
		&=
		\kappa_j N T,
		\qquad
		\kappa_j
		\coloneqq
		\frac{ j }{ \ell + 1 }.
	\end{aligned}
	\notag
\end{align}
Analogously, assume $N = ( \ell + 1 ) N_{\ell+1}$ where $N_{\ell+1}$ is an integer, and partition the cross-section into $\ell + 1$ disjoint blocks of equal length with
\begin{align}
	J_l
	\coloneqq
	\{ ( l - 1 ) N_{\ell+1} + 1, \ldots, l N_{\ell+1} \},
	\qquad
	l = 1, \ldots, \ell + 1.
	\notag
\end{align}
For $j = \ell + 1, \ldots, 2 \ell$ define the subsamples
\begin{align}
	\begin{aligned}
		S_j
		\coloneqq
		\Big(
		\bigcup_{l=1}^{j-\ell} J_l
		\Big)
		\times
		\{ 1, \ldots, T \},
		\qquad
		|S_j|
		&=
		\kappa_{j-\ell} N T,
		\qquad
		\kappa_{j-\ell}
		\coloneqq
		\frac{ j - \ell }{ \ell + 1 }.
	\end{aligned}
	\notag
\end{align}
Figure \ref{fig1} below illustrates the subsampling scheme with $\ell = 2$.
\vspace{0.5cm}
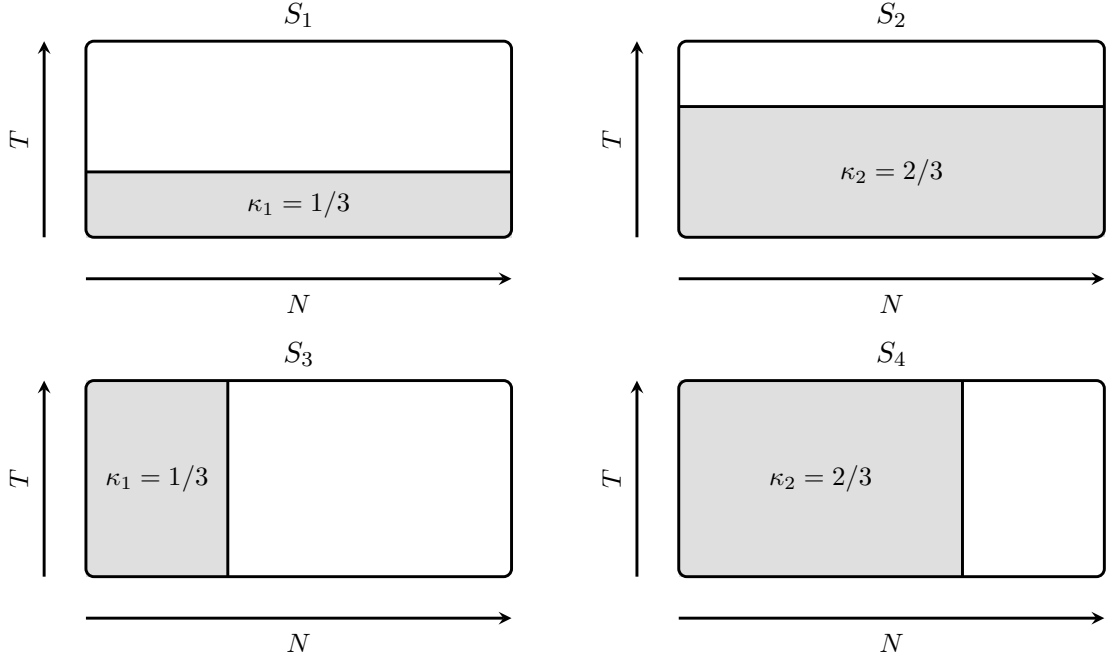
\begin{figure}[H]
	\centering
	\caption{Subsamples for $\ell = 2$.} \label{fig1}
	\begin{tikzpicture}[x=1cm,y=1cm, font=\small, >=stealth]
		\tikzset{
			panel/.style={draw, line width=1.1pt, rounded corners=3pt},
			split/.style={draw, line width=1.1pt},
			axis/.style={draw, ->, line width=1.1pt},
			inc/.style={fill=gray!25},
			txt/.style={font=\bfseries},
		}
		
		\pgfmathsetmacro{\W}{5.6}
		\pgfmathsetmacro{\H}{2.6}
		
		\pgfmathsetmacro{\kone}{1/3}
		\pgfmathsetmacro{\ktwo}{2/3}
		
		\pgfmathsetmacro{\tcutone}{\H*\kone}
		\pgfmathsetmacro{\tcuttwo}{\H*\ktwo}
		\pgfmathsetmacro{\ncutone}{\W*\kone}
		\pgfmathsetmacro{\ncuttwo}{\W*\ktwo}
		
		\pgfmathsetmacro{\rowgap}{4.5}
		
		\pgfmathsetmacro{\colgap}{7.8} 
		
		\begin{scope}[shift={(0,0)}]
			\path[inc] (0,0) rectangle (\W,\tcutone);
			\draw[panel] (0,0) rectangle (\W,\H);
			\draw[split] (0,\tcutone) -- (\W,\tcutone);
			
			\draw[axis] (0,-0.55) -- (\W,-0.55);
			\node at (\W/2,-0.88) {$N$};
			\draw[axis] (-0.55,0) -- (-0.55,\H);
			\node[rotate=90] at (-0.88,\H/2) {$T$};
			
			\node[txt] at (\W/2,\H+0.35) {$S_{ 1 }$};
			\node at (\W/2,0.5*\tcutone) {$\kappa_1 = 1 / 3$};
		\end{scope}
		
		\begin{scope}[shift={(\colgap,0)}]
			\path[inc] (0,0) rectangle (\W,\tcuttwo);
			\draw[panel] (0,0) rectangle (\W,\H);
			\draw[split] (0,\tcuttwo) -- (\W,\tcuttwo);
			
			\draw[axis] (0,-0.55) -- (\W,-0.55);
			\node at (\W/2,-0.88) {$N$};
			\draw[axis] (-0.55,0) -- (-0.55,\H);
			\node[rotate=90] at (-0.88,\H/2) {$T$};
			
			\node[txt] at (\W/2,\H+0.35) {$S_{ 2 }$};
			\node at (\W/2,0.5*\tcuttwo) {$\kappa_2 = 2 / 3$};
		\end{scope}
		
		\begin{scope}[shift={(0,-\rowgap)}]
			\path[inc] (0,0) rectangle (\ncutone,\H);
			\draw[panel] (0,0) rectangle (\W,\H);
			\draw[split] (\ncutone,0) -- (\ncutone,\H);
			
			\draw[axis] (0,-0.55) -- (\W,-0.55);
			\node at (\W/2,-0.88) {$N$};
			\draw[axis] (-0.55,0) -- (-0.55,\H);
			\node[rotate=90] at (-0.88,\H/2) {$T$};
			
			\node[txt] at (\W/2,\H+0.35) {$S_{3 }$};
			\node at (0.5*\ncutone,\H/2) {$\kappa_1 = 1 / 3$};
		\end{scope}
		
		\begin{scope}[shift={(\colgap,-\rowgap)}]
			\path[inc] (0,0) rectangle (\ncuttwo,\H);
			\draw[panel] (0,0) rectangle (\W,\H);
			\draw[split] (\ncuttwo,0) -- (\ncuttwo,\H);
			
			\draw[axis] (0,-0.55) -- (\W,-0.55);
			\node at (\W/2,-0.88) {$N$};
			\draw[axis] (-0.55,0) -- (-0.55,\H);
			\node[rotate=90] at (-0.88,\H/2) {$T$};
			
			\node[txt] at (\W/2,\H+0.35) {$S_{ 4}$};
			\node at (0.5*\ncuttwo,\H/2) {$\kappa_2 = 2 / 3$};
		\end{scope}
		
	\end{tikzpicture}%
\end{figure}
Let $\hat{\varphi}^{(j)}$ be the estimator computed on subsample $S_j$ for $j = 1, \ldots, m$ where $m \coloneqq 2\ell$. Suppose each subsample estimator admits the expansion
\begin{align}
	\sqrt{NT} \big( \hat{\varphi}^{(j)} - \varphi \big)
	&=
	z_{j,NT}
	+
	\sqrt{ \frac{N}{T} } \sum_{l=1}^{\ell} \kappa_j^{-l/2} \frac{ b_{l}^{(T)} }{ T^{ ( l - 1 ) / 2 } }
	+
	\sqrt{ \frac{T}{N} } \sum_{l=1}^{\ell} \frac{ b_{l}^{(N)} }{ N^{ ( l - 1 ) / 2 } }
	+
	\smallO_p( 1 ),
	\notag
\end{align}
for $j = 1, \ldots, \ell$, and
\begin{align}
	\sqrt{NT} \big( \hat{\varphi}^{(j)} - \varphi \big)
	&=
	z_{j,NT}
	+
	\sqrt{ \frac{N}{T} } \sum_{l=1}^{\ell} \frac{ b_{l}^{(T)} }{ T^{ ( l - 1 ) / 2 } }
	+
	\sqrt{ \frac{T}{N} } \sum_{l=1}^{\ell} \kappa_{j-\ell}^{-l/2} \frac{ b_{l}^{(N)} }{ N^{ ( l - 1 ) / 2 } }
	+
	\smallO_p( 1 ),
	\notag
\end{align}
for $j = \ell + 1, \ldots, m$. Now, define $\bs{\hat{\varphi}} \coloneqq ( \hat{\varphi}^{(0)}, \hat{\varphi}^{(1)}, \ldots, \hat{\varphi}^{(m)} )^\top$, $r_{NT}\coloneqq \sqrt{NT}$, $R \coloneqq 2\ell$, and
\begin{align}
	\bs{\mu}_{NT}
	\coloneqq
	\Big(
	\mu_{1,NT},\, \ldots,\, \mu_{\ell,NT},\,
	\mu_{\ell+1,NT},\, \ldots,\, \mu_{R,NT}
	\Big)^\top,
	\notag
\end{align}
with
\begin{align}
	\mu_{r,NT}
	&\coloneqq
	\sqrt{ \frac{N}{T} } \frac{ b_{r}^{(T)} }{ T^{ ( r - 1 ) / 2 } },
	\qquad
	r = 1, \ldots, \ell,
	\notag \\
	\mu_{\ell+r,NT}
	&\coloneqq
	\sqrt{ \frac{T}{N} } \frac{ b_{r}^{(N)} }{ N^{ ( r - 1 ) / 2 } },
	\qquad
	r = 1, \ldots, \ell.
	\notag
\end{align}
Then
\begin{align}
	r_{NT} \big( \bs{\hat{\varphi}} - \varphi \bs{\iota}_{m+1} \big)
	=
	\bs{z}_{NT}
	+
	\bs{A} \bs{\mu}_{NT}
	+
	\smallO_p( 1 ),
	\notag
\end{align}
with
\begin{align}
	\bs{z}_{NT}
	\coloneqq
	\begin{pmatrix}
		z_{0,NT} \\
		z_{1,NT} \\
		\vdots   \\
		z_{m,NT}
	\end{pmatrix},
	\qquad
	\bs{A}
	\coloneqq
	\begin{pmatrix}
		\bs{\iota}_{\ell}^\top & \bs{\iota}_{\ell}^\top \\
		\bs{A}^{(1)} & \bs{\iota}_{\ell} \bs{\iota}_{\ell}^\top \\
		\bs{\iota}_{\ell} \bs{\iota}_{\ell}^\top & \bs{A}^{(2)}
	\end{pmatrix},
	\notag
\end{align}
and
\begin{align}
	\bs{A}^{(1)}
	\coloneqq
	\begin{pmatrix}
		\kappa_1^{-1/2}\ \ \kappa_1^{-1}\ \ \cdots\ \ \kappa_1^{-\ell/2} \\
		\kappa_2^{-1/2}\ \ \kappa_2^{-1}\ \ \cdots\ \ \kappa_2^{-\ell/2} \\
		\vdots \\
		\kappa_\ell^{-1/2}\ \ \kappa_\ell^{-1}\ \ \cdots\ \ \kappa_\ell^{-\ell/2}
	\end{pmatrix},
	\quad
	\bs{A}^{(2)}
	\coloneqq
	\begin{pmatrix}
		\kappa_1^{-1/2}\ \ \kappa_1^{-1}\ \ \cdots\ \ \kappa_1^{-\ell/2}\\
		\kappa_2^{-1/2}\ \ \kappa_2^{-1}\ \ \cdots\ \ \kappa_2^{-\ell/2}\\
		\vdots \\
		\kappa_\ell^{-1/2}\ \ \kappa_\ell^{-1}\ \ \cdots\ \ \kappa_\ell^{-\ell/2}
	\end{pmatrix},
	\notag
\end{align}
where $\kappa_j = \frac{ j }{ \ell + 1 }$. Finally, if, for example,
\begin{align}
	z_{NT}
	=
	\frac{1}{\sqrt{NT}} \sum_{(i,t) \in S_0} \nu_{it},
	\notag
\end{align}
with $\nu_{it} \sim \textup{iid}( 0, \sigma^2 )$ and, for $j = 1, \ldots, \ell$,
\begin{align}
	z_{j,NT}
	=
	\frac{\kappa_j^{-1}}{\sqrt{NT}} \sum_{(i,t) \in S_j} \nu_{it},
	\qquad
	z_{\ell+j,NT}
	=
	\frac{\kappa_j^{-1}}{\sqrt{NT}} \sum_{(i,t) \in S_{\ell+j}} \nu_{it},
	\notag
\end{align}
then $\bs{z}_{NT} \xrightarrow{d} \mathcal{N}( \bs{0}, \sigma^2 \bs{C} )$ with
\begin{align}
	\bs{C}
	=
	\begin{pmatrix}
		1 & \bs{\iota}_\ell^\top & \bs{\iota}_\ell^\top\\
		\bs{\iota}_\ell & \bs{C}^{(1)} & \bs{\iota}_\ell \bs{\iota}_\ell^\top \\
		\bs{\iota}_\ell & \bs{\iota}_\ell \bs{\iota}_\ell^\top & \bs{C}^{(2)}
	\end{pmatrix},
	\quad
	C^{(1)}_{j_1,j_2}
	=
	\frac{1}{ \max\big( \kappa_{j_1}, \kappa_{j_2} \big) },
	\quad \text{and} \quad
	C^{(2)}_{j_1,j_2}
	=
	\frac{1}{ \max\big( \kappa_{j_1}, \kappa_{j_2} \big) }. \notag 
\end{align}
for $j_1, j_2 = 1, \ldots, \ell$. With $\bs{A}$ and $\bs{C}$ determined, one may construct an $\ell$-th order corrected MVUJ estimator and the corresponding $\ell$-th order jackknife $t$-statistic as previously.

\begin{remark}
	In multi-dimensional panel models estimators may frequently exhibit biases of different orders, some of which may diverge. For example, \cite{czarnowske2025debiased} derive an asymptotic expansion for M-estimators of three-way fixed effects models. For ``interacted'' fixed effects specifications (as in Example \ref{ex:threefe-overlap}) they obtain an expansion of the form
	\begin{align}
		\sqrt{N} ( \hat{\varphi}^{(0)} - \varphi )
		=&\
		z_{N}
		+
		\sqrt{ \frac{ N_1 N_2 }{ N_3 } } \mu_{1,N}
		+
		\sqrt{ \frac{ N_2 N_3 }{ N_1 } } \mu_{2,N} \notag \\
		&+\ 
		\sqrt{ \frac{ N_3 N_1 }{ N_2 } } \mu_{3,N}
		+
		\frac{ \sqrt{ N_1 N_2 } }{ N_3 } \mu_{4,N}
		+
		\smallO_p( 1 ), \notag
	\end{align}
	as $N_{\min} \rightarrow \infty$, $N_1 / N_3 \rightarrow \gamma^2_{13}$ and $N_2 / N_3 \rightarrow \gamma^2_{23}$, with
	\begin{align}
		z_{N}  \xrightarrow{d} \mathcal{N}( 0, \sigma^2 ).\notag
	\end{align}
	In particular, in the absence of strict exogeneity, the first three bias terms may diverge, while the fourth is $\mathcal{O}(1)$ under the stated asymptotic regime. Nonetheless, by finding a set of subsamples which satisfies Assumption \ref{AJKs}, we can proceed to construct an MVUJ estimator and a jackknife $t$-statistic. For instance, assume $N_1 = 2 N_{1,2}$, $N_2 = 2 N_{2,2}$, and $N_{3} = 2 N_{3,2} = 3 N_{3,3}$ where $N_{1,2},N_{2,2},N_{3,2}$, and $N_{3,3}$ are integers. Define subsamples
	\begin{align}
		S_{1} &= \{ 1, \ldots, N_{1,2} \} \times \{ 1, \ldots, N_2 \} \times \{ 1, \ldots, N_{3} \}, \notag \\
		S_{2} &= \{ N_{1,2} + 1, \ldots, N_{1} \} \times \{ 1, \ldots, N_2 \} \times \{ 1, \ldots, N_{3} \}, \notag \\
		S_{3} &= \{ 1, \ldots, N_{1} \} \times \{ 1, \ldots, N_{2,2} \} \times \{ 1, \ldots, N_{3} \}, \notag \\
		S_{4} &= \{ 1, \ldots, N_{1} \} \times \{ N_{2,2} + 1, \ldots, N_2 \} \times \{ 1, \ldots, N_{3} \}, \notag \\
		S_{5} &= \{ 1, \ldots, N_{1} \} \times \{ N_{2,2} + 1, \ldots, N_2 \} \times \{ 1, \ldots, {N_{3,2}} \}, \notag \\
		S_{6} &= \{ 1, \ldots, N_{1} \} \times \{ N_{2,2} + 1, \ldots, N_2 \} \times \{ N_{3,2} + 1, \ldots, N_3 \}, \notag \\
		S_{7} &= \{ 1, \ldots, N_{1} \} \times \{ N_{2,2} + 1, \ldots, N_2 \} \times \{ 1, \ldots, N_{3,3} \}, \notag \\
		S_{8} &= \{ 1, \ldots, N_{1} \} \times \{ N_{2,2} + 1, \ldots, N_2 \} \times \{ N_{3,3} + 1, \ldots, 2 N_{3,3} \}, \notag \\
		S_{9} &= \{ 1, \ldots, N_{1} \} \times \{ N_{2,2} + 1, \ldots, N_2 \} \times \{ 2 N_{3,3} + 1, \ldots, N_{3} \}. \notag
	\end{align}
	Then
	\begin{align}
		\bs{A}^\top
		=
		\begin{pmatrix}
			1 & 1 & 1 & 1 & 1 & 2 & 2 & 3 & 3 & 3 \\
			1 & 2 & 2 & 1 & 1 & 1 & 1 & 1 & 1 & 1 \\
			1 & 1 & 1 & 2 & 2 & 1 & 1 & 1 & 1 & 1 \\
			1 & 1 & 1 & 1 & 1 & 2^{\frac{3}{2}} & 2^{\frac{3}{2}} & 3^{\frac{3}{2}} & 3^{\frac{3}{2}} & 3^{\frac{3}{2}}
		\end{pmatrix}. \notag
	\end{align}
	Since $\textup{rank}(\bs{A}) = R$ and $\bs{\iota}_{m}\notin\textup{col}(\bs{A})$, an asymptotically unbiased jackknife estimator can readily be obtained, as discussed in Section \ref{MVUJ}. Upon obtaining the associated matrix $\bs{C}$, an MVUJ estimator and jackknife $t$-statistic can further be obtained.\footnote{In the interests of space we do not present such constructions here.} This illustrates the more general point that our framework can be used to obtain jackknife bias-corrected estimators and jackknife $t$-statistics in settings where the biases of the estimator are of markedly different orders, of which the higher-order corrections discussed in this section are a particular case.
\end{remark}

We now illustrate the construction of a higher-order MVUJ estimator and associated jackknife $t$-statistic. 

\begin{examplehr}[Higher-Order Bias Corrections]\label{ex:twofe-ho}\small
	Consider the following two-way fixed effects model
	\begin{align}
		y_{it} = \lambda_i + \gamma_t + \varepsilon_{it},
		\qquad i = 1, \ldots, N,\ \ t = 1, \ldots, T,
		\notag
	\end{align}
	where $\varepsilon_{it} \sim \textup{iid}\ \mathcal{N}(0,\varphi)$, and $\varepsilon_{it}$, $\lambda_i$ and $\gamma_t$ are mutually independent. Let $\hat{\varphi}^{(0)}$ be the ML estimator of $\varphi$. One can decompose
	\begin{align}
		\sqrt{NT} \big( \hat{\varphi}^{(0)} - \varphi \big)
		=
		\frac{1}{\sqrt{NT}} 
		\sum_{i=1}^{N}
		\sum_{t=1}^{T} \nu_{it} 
		+ \mu_{1,NT} + \mu_{2,NT} + \mu_{3,NT} + \eta_{NT},
		\notag
	\end{align}
	where $\nu_{it} = \varepsilon_{it}^2 - \varphi$,
	\begin{align}
		\mu_{1,NT} = - \varphi \sqrt{ \frac{N}{T} },
		\qquad
		\mu_{2,NT} = - \varphi \sqrt{ \frac{T}{N} },
		\quad
		\text{and}
		\quad
		\mu_{3,NT} = \varphi \frac{1}{\sqrt{NT}}.
		\notag
	\end{align}
	As $N, T \rightarrow \infty$ and $N / T \rightarrow \gamma^2 \in ( 0, \infty )$,
	\begin{align}
		z_{NT}
		\coloneqq
		\frac{1}{\sqrt{NT}}
		\sum_{i=1}^{N}
		\sum_{t=1}^{T}
		\nu_{it}
		\xrightarrow{d}
		\mathcal{N}( 0, 2\varphi^2 ),
		\notag
	\end{align}
	and the remainder term $\eta_{NT}$, which is mean zero by construction, is $\smallO_p( 1 )$. Thus Assumption \ref{AAD} holds with $r_{NT} = \sqrt{NT}$ and $R = 3$. Now, assume $N = 3N_3$ and $T = 3T_3$ where $N_3$ and $T_3$ are integers, and define
	\begin{align}
		I_l \coloneqq \{ ( l - 1 ) T_3 + 1, \ldots, l T_3 \}
		\quad \text{and} \quad
		J_l \coloneqq \{ ( l - 1 ) N_3 + 1, \ldots, l N_3 \},
		\qquad
		l = 1, 2, 3.
		\notag
	\end{align}
	Define subsamples
	\begin{align}
		\begin{aligned}
			S_1 &\coloneqq \{ 1, \ldots, N \} \times I_1, \\
			\qquad
			S_2 &\coloneqq \{ 1, \ldots, N \} \times ( I_1 \cup I_2 ),\\
			S_3 &\coloneqq J_1 \times \{ 1, \ldots, T \}, \\
			\qquad
			S_4 &\coloneqq J_1 \times I_1.
		\end{aligned}
		\notag
	\end{align}
	Let $\hat{\varphi}^{(j)}$ be the ML estimator computed on subsample $S_j$ for $j = 1, \ldots, 4$, and define
	\begin{align}
		\bs{\hat{\varphi}}
		\coloneqq
		\big(
		\hat{\varphi}^{(0)},
		\hat{\varphi}^{(1)},
		\hat{\varphi}^{(2)},
		\hat{\varphi}^{(3)},
		\hat{\varphi}^{(4)}
		\big)^\top.
		\notag
	\end{align}
	Then Assumption \ref{AJK} is satisfied with $m = 5$, $R = 3$,
	\begin{align}
		\bs{A}
		=
		\begin{pmatrix}
			1 & 1 & 1\\
			3 & 1 & 3\\
			\frac{3}{2} & 1 & \frac{3}{2}\\
			1 & 3 & 3\\
			3 & 3 & 9
		\end{pmatrix},
		\quad \text{and} \quad
		\bs{C}
		=
		\begin{pmatrix}
			1 & 1 & 1 & 1 & 1\\
			1 & 3 & \frac{3}{2} & 1 & 3\\
			1 & \frac{3}{2} & \frac{3}{2} & 1 & \frac{3}{2}\\
			1 & 1 & 1 & 3 & 3\\
			1 & 3 & \frac{3}{2} & 3 & 9
		\end{pmatrix}.
		\notag
	\end{align}
	Moreover, $\bs{C} \succ \bs{0}_{m \times m}$ so the MVUJ weights vector is unique. Solving \eqref{prog} yields
	\begin{align}
		\bs{v}^\ast
		=
		\left(
		\frac{9}{4},\,
		-\frac{3}{4},\,
		0,\,
		-\frac{3}{4},\,
		\frac{1}{4}
		\right)^\top.
		\notag
	\end{align}
	Thus the MVUJ estimator is $\tilde{\varphi} \coloneqq \bs{v}^{\ast\top}\bs{\hat{\varphi}}$. Next, $m - \textup{rank}( ( \bs{A}, \bs{\iota}_5 ) ) = 1$ so the design supports at most $q = 1$ variance weights vector. Let
	\begin{align}
		\bs{u}_0
		\coloneqq
		\left(
		-\frac{3}{4},\,
		-\frac{1}{4},\,
		1,\,
		0,\,
		0
		\right)^\top,
		\notag
	\end{align}
	with which we may construct
	\begin{align}
		\bs{u}^\ast
		\coloneqq
		\left(
		\frac{ \bs{v}^{\ast\top} \bs{C} \bs{v}^\ast }{ \bs{u}_0^\top \bs{C} \bs{u}_0 }
		\right)^{\frac{1}{2}}
		\bs{u}_0.
		\notag
	\end{align}
	Thus the jackknife $t$-statistic is
	\begin{align}
		\mathcal{J}
		\coloneqq
		\frac{ \tilde{\varphi} - \varphi }{ \tilde{\sigma} }
		\xrightarrow{d}
		t_1,
		\notag
	\end{align}
	with $\tilde{\sigma}
	\coloneqq
	\left|
	\bs{u}^{\ast\top}\bs{\hat{\varphi}}
	\right|$.
\end{examplehr}

\section{Numerical Examples}
In this section we showcase both the performance and the versatility of our method of inference through two numerical examples. The first considers inference on a slope coefficient in a linear regression model with predetermined regressors and individual fixed effects. The second considers inference on a slope coefficient in a panel probit model with interactive fixed effects.  

\subsection{Linear Regression with Individual Effects}
In our first example the data are generated according to
\begin{align}
	y_{it} &= {\varphi}\, x_{it} + \lambda_i + \varepsilon_{it}, \qquad i=1,\ldots,N,\ \ t=1,\ldots,T, \notag\\
	x_{i1} &= 0, \notag \\
	x_{it} &= 1\{y_{i,t-1}>0\},\ \qquad \hspace{0.2cm} t\ge 2, \notag
\end{align}
where $\varphi=0.5$, $\lambda_i \sim \textup{iid}\ \mathcal{N}(0,1)$, and $\varepsilon_{it} \sim \textup{iid}\ \mathcal{N}(0,1)$. We study the LS estimator of $\varphi$, denoted by $\hat{\varphi}^{(0)}$. Assume $T = 2T_2$, $N = 2N_2$, and $N = 5N_5$ where $T_2$, $N_2$ and $N_5$ are integers. Consider the following subsamples
\begin{align}
	S_1
	&\coloneqq
	\{ 1, \ldots, N \} \times \{ 1, \ldots, T_2 \},
	\notag \\
	S_2
	&\coloneqq
	\{ 1, \ldots, N \} \times \{ T_2 + 1, \ldots, T \},
	\notag \\
	S_3
	&\coloneqq
	\{ 1, \ldots, N_2 \} \times \{ 1, \ldots, T \},
	\notag \\
	S_4
	&\coloneqq
	\{ N_2 + 1, \ldots, N \} \times \{ 1, \ldots, T \},
	\notag \\
	S_{4+g}
	&\coloneqq
	\{ ( g - 1 ) N_5 + 1, \ldots, g N_5 \} \times \{ 1, \ldots, T \},
	\qquad g = 1, \ldots, 5.
	\notag
\end{align}
Let $\hat{{\varphi}}^{(j)}$ denote the LS estimator computed on subsample $S_j$. We compare three different constructions. The first, JK(a), utilises subsamples $S_1$ and $S_2$, alongside $S_0$. With $\bs{\hat{{\varphi}}}_{(a)} \coloneqq ( \hat{{\varphi}}^{(0)}, \hat{{\varphi}}^{(1)}, \hat{{\varphi}}^{(2)} )$, the MVUJ estimator is
\begin{align}
	\tilde{{\varphi}}_{(a)}
	\coloneqq
	2 \hat{{\varphi}}^{(0)} - \frac{1}{2} \hat{{\varphi}}^{(1)} - \frac{1}{2} \hat{{\varphi}}^{(2)}.
	\notag
\end{align}
There is one admissible variance weights vector 
\begin{align}
	\bs{u}^\ast
	= 
	\left(
	0 ,\, \frac{1}{2} ,\,  -\frac{1}{2} 
	\right)^\top, \notag 
\end{align}
yielding the jackknife $t$-statistic 
\begin{align}
	\mathcal{J}{(a)}
	\coloneqq
	\frac{ \tilde{{\varphi}}_{(a)} - {\varphi}}{ \tilde{\sigma}_{(a)} }
	\xrightarrow{d} 
	t_{1}
	\quad
	\text{with}
	\quad
	\tilde{\sigma}_{(a)} = | \bs{u}^{\ast\top} \bs{\hat{\varphi}}_{(a)} |. \notag 
\end{align}
The second, JK(b), uses additional cross-sectional partitions. In particular $\bs{\hat{{\varphi}}}_{(b)} \coloneqq (\hat{{\varphi}}^{(0)}, \hat{{\varphi}}^{(1)},\hat{{\varphi}}^{(2)},\hat{{\varphi}}^{(3)},\hat{{\varphi}}^{(4)})^\top$.
The MVUJ estimator is not unique, so we use a convenient construction
\begin{align}
	\tilde{{\varphi}}_{(b)}
	\coloneqq
	\frac{2}{3}\hat{{\varphi}}^{(0)}
	- \frac{1}{2}\hat{{\varphi}}^{(1)}
	- \frac{1}{2}\hat{{\varphi}}^{(2)}
	+ \frac{2}{3}\hat{{\varphi}}^{(3)}
	+ \frac{2}{3}\hat{{\varphi}}^{(4)}.
	\notag
\end{align}
Two admissible variance weights vectors are
\begin{align}
	\bs{u}_{1}^\ast 
	=
	\left( 0, \frac{1}{2}, -\frac{1}{2}, 0, 0 \right)^\top
	\quad
	\text{and}
	\quad
	\bs{u}_{2}^\ast
	=
	\left( 0, 0, 0, \frac{1}{2}, -\frac{1}{2} \right)^\top,
	\notag 
\end{align}
yielding
\begin{align}
	\mathcal{J}{(b)}
	\coloneqq
	\frac{ \tilde{{\varphi}}_{(b)} - {\varphi}}{ \tilde{\sigma}_{(b)} }
	\xrightarrow{d} t_{2}
	\quad
	\text{with}
	\quad
	\tilde{\sigma}_{(b)}
	\coloneqq 
	\sqrt{  \frac{1}{2} ( \bs{u}_{1}^{\ast \top} \bs{\hat{\varphi}}_{(b)} )^2 +  \frac{1}{2} (\bs{u}_{2}^{\ast \top} \bs{\hat{\varphi}}_{(b)} )^2  }
	\notag 
\end{align}
Finally, JK(c) uses additional cross-sectional partitions. In particular,
\begin{align}
	\bs{\hat{{\varphi}}}_{(c)}
	\coloneqq
	(\hat{{\varphi}}^{(0)} ,\, 
	\hat{{\varphi}}^{(1)} ,\,
	\hat{{\varphi}}^{(2)} ,\,
	\hat{{\varphi}}^{(5)} ,\,
	\hat{{\varphi}}^{(6)} ,\,
	\hat{{\varphi}}^{(7)} ,\,
	\hat{{\varphi}}^{(8)} ,\,
	\hat{{\varphi}}^{(9)})^\top, \notag 
\end{align}
where $\hat{{\varphi}}^{(4+g)}$ is the LS estimator on the $g$-th cross-sectional fifth. The MVUJ estimator is not unique so we again use a convenient construction
\begin{align}
	\tilde{{\varphi}}_{{(c)}}
	\coloneqq
	\hat{{\varphi}}^{(0)}
	-\frac{1}{2} \hat{{\varphi}}^{(1)}
	-\frac{1}{2} \hat{{\varphi}}^{(2)}
	+\frac{1}{5} \sum_{g=1}^5 \hat{{\varphi}}^{(4+g)}.
	\notag
\end{align}
Five admissible variance weights vectors are
\begin{align}
	\bs{u}^\ast_{1}
	&=
	\left( 
	0 ,\, \frac{1}{2} ,\, -\frac{1}{2} ,\, 0 ,\, 0 ,\, 0 ,\, 0 ,\, 0
	\right)^\top, \notag \\
	\bs{u}^\ast_{2}
	&=
	\left( 
	0 ,\, 0 ,\, 0 ,\, \frac{1}{\sqrt{10}} ,\, - \frac{1}{\sqrt{10}} ,\, 0 ,\, 0 ,\, 0
	\right)^\top, \notag \\
	\bs{u}^\ast_{3}
	&=
	\left( 
	0 ,\, 0 ,\, 0 ,\, \frac{1}{\sqrt{30}} ,\, \frac{1}{\sqrt{30}} ,\, {- \frac{2}{\sqrt{30}}} ,\, 0 ,\, 0
	\right)^\top, \notag \\
	\bs{u}^\ast_{4}
	&=
	\left( 
	0 ,\, 0 ,\, 0 ,\, \frac{1}{\sqrt{60}} ,\, \frac{1}{\sqrt{60}} ,\, \frac{1}{\sqrt{60}} ,\, {- \frac{3}{\sqrt{60}}},\, 0
	\right)^\top, \notag \\
	\bs{u}^\ast_{5}
	&=
	\left( 
	0 ,\, 0 ,\, 0 ,\, \frac{1}{10} ,\, \frac{1}{10} ,\, \frac{1}{10} ,\, \frac{1}{10} ,\, -\frac{4}{10}
	\right)^\top. \notag
\end{align}
yielding
\begin{align}
	\mathcal{J}{(c)}
	\coloneqq
	\frac{\tilde{{\varphi}}_{(c)}-{\varphi}}{\tilde{\sigma}_{(c)}}
	\xrightarrow{d} t_{5}
	\quad \text{with} \quad
	\tilde\sigma_{(c)}
	\coloneqq
	{\sqrt{ \frac{1}{5} \sum_{l=1}^{5} \left( \bs{u}_l^{\ast \top} \bs{\hat{\varphi}}_{(c)} \right)^2 }}. \notag
\end{align}

Table {\ref{Tab1}} below reports empirical coverage probabilities and average length of a 95\% two-sided confidence interval for ${\varphi}$ constructed by inverting the different jackknife $t$-statistics outlined above. We also report empirical bias and standard error of the different estimators. We compare our jackknife confidence intervals with two alternative constructions. The first utilises the bias correction provided by \cite{hahn2011bias} which is based on an analytical characterisation. The second uses the moving block bootstrap scheme described in \cite{higgins2025inference}.
\begin{landscape}
	\begin{table}[p]
		\centering
		\small
		\setlength{\tabcolsep}{4pt}
		\renewcommand{\arraystretch}{1.25}
		
		\caption{Linear Regression with Individual Effects}
		\label{Tab1}
		
		\newsavebox{\TabOneBox}
		\sbox{\TabOneBox}{%
			\begin{tabular}{@{}cc lcccccccccc@{}}
				\toprule\toprule
				\multicolumn{2}{c}{} &
				\multicolumn{1}{c}{} &
				\multicolumn{1}{c}{LS} &
				\multicolumn{3}{c}{JK} &
				\multicolumn{3}{c}{HK} &
				\multicolumn{3}{c}{HJ} \\
				\cmidrule(lr){4-4}\cmidrule(lr){5-7}\cmidrule(lr){8-10}\cmidrule(lr){11-13}
				$N$ & $T$ &
				&  & JK(a) & JK(b) & JK(c)
				& \multirow{2}{*}{$h = 1$} & \multirow{2}{*}{$h = 2$} & \multirow{2}{*}{$h = 5$}
				& \multirow{2}{*}{$\ell = 1$} & \multirow{2}{*}{$\ell = 2$} & \multirow{2}{*}{$\ell = 5$} \\
				\cmidrule(lr){5-5}\cmidrule(lr){6-6}\cmidrule(lr){7-7}
				& & & & $t_{ 1 }$ & $t_{ 2 }$ & $t_{ 5 }$ & & & & & & \\
				\midrule
				
				\multirow{4}{*}{100} & \multirow{4}{*}{10} & Bias      & -0.1701 & 0.0150 & 0.0147 & 0.0150 & -0.0445 & -0.0955 & -0.1737 & -0.1696 & -0.0496 & -0.0170 \\
				&    & Std. Err. & 0.0793  & 0.0956 & 0.0957 & 0.0958 & 0.0797  & 0.0842  & 0.0819  & 0.0794  & 0.0890  & 0.1016 \\
				\cmidrule(lr){3-13}
				&    & Coverage  & 0.4124  & 0.9538 & 0.9455 & 0.9286 & 0.9031  & 0.7469  & 0.3961  & 0.7094  & 0.9913  & 0.9103 \\
				&    & Length    & 0.3045  & 2.1164 & 0.7039 & 0.4162 & 0.3045  & 0.3045  & 0.3045  & 0.4075  & 0.5114  & 0.3443 \\
				
				\midrule
				
				\multirow{4}{*}{250} & \multirow{4}{*}{20} & Bias      & -0.0910 & 0.0034 & 0.0033 & 0.0032 & -0.0190 & -0.0235 & -0.0431 & -0.0908 & -0.0275 & -0.0142 \\
				&    & Std. Err. & 0.0365  & 0.0401 & 0.0401 & 0.0401 & 0.0364  & 0.0377  & 0.0377  & 0.0365  & 0.0391  & 0.0416 \\
				\cmidrule(lr){3-13}
				&    & Coverage  & 0.2862  & 0.9513 & 0.9442 & 0.9375 & 0.9158  & 0.8865  & 0.7590  & 0.4496  & 0.9902  & 0.9914 \\
				&    & Length    & 0.1403  & 0.8438 & 0.2962 & 0.1826 & 0.1403  & 0.1403  & 0.1403  & 0.1654  & 0.2110  & 0.2446 \\
				
				\midrule
				
				\multirow{4}{*}{1000} & \multirow{4}{*}{80} & Bias      & -0.0245 & 0.0002 & 0.0002 & 0.0002 & -0.0043 & -0.0015 & -0.0027 & -0.0245 & -0.0093 & -0.0055 \\
				&     & Std. Err. & 0.0091  & 0.0093 & 0.0093 & 0.0093 & 0.0091  & 0.0092  & 0.0092  & 0.0091  & 0.0092  & 0.0094 \\
				\cmidrule(lr){3-13}
				&     & Coverage  & 0.2322  & 0.9539 & 0.9512 & 0.9470 & 0.9205  & 0.9453  & 0.9371  & 0.2547  & 0.9029  & 0.9713 \\
				&     & Length    & 0.0355  & 0.1877 & 0.0696 & 0.0446 & 0.0355  & 0.0355  & 0.0355  & 0.0367  & 0.0398  & 0.0434 \\
				
				\toprule\toprule
			\end{tabular}%
		}
		
		\usebox{\TabOneBox}
		
		\vspace{0.4em}
		
		\parbox{\wd\TabOneBox}{%
			\footnotesize
			\raggedright
			\textit{Notes:} LS denotes the uncorrected LS estimator, JK(a) - JK(c) denote the jackknife estimators described in the main text, HK denotes the analytical bias correction of \cite{hahn2011bias} with bandwidth parameter $h$, and HJ denotes the moving block bootstrap of \cite{higgins2025inference} with block length $\ell$. Reported are empirical bias and empirical standard error of the estimators, and empirical coverage and length of a 95\% two-sided confidence interval.
		}
	\end{table}
\end{landscape}
Overall jackknife bias correction (all schemes) outperforms both analytical bias correction and bootstrap bias correction in terms of bias reduction, with the latter two being quite sensitive to the choice of tuning parameters (bandwidth, block size), especially in small samples. On the other hand, a small inflation of the standard error is observed for the jackknife bias-corrected estimator when compared to the uncorrected LS estimator. This, however, quickly diminishes as sample size increases. In terms of coverage, those confidence intervals obtained by inverting a jackknife $t$-statistic considerably outperform their analytical and bootstrap counterparts, both in their nominal coverage, and in their reliability, with bootstrap confidence intervals exhibiting particular sensitivity to block length. The observed length of each jackknife confidence interval is as expected, with the use of additional variance weights vectors producing shorter confidence intervals. This is illustrative of the trade-off between the most simple constructions, which produce wider confidence intervals, and those which are more complex, but yield shorter confidence intervals. Nonetheless, it should be emphasised that even the most complex construction, JK(c), which produces a length comparable to other methods, is still computationally trivial in comparison to the bootstrap, does not require the use of cumbersome analytic formulae, and does not utilise any tuning parameters.

\subsection{Probit Model with Interactive Effects}
In our second experiment we consider a panel probit model with interactive fixed effects and either exogenous or predetermined covariates. The data are generated according to
\begin{align}
	y_{it}
	&=
	\mathbf{1} \left\{ 
	\varphi x_{it}
	+
	\lambda_i \gamma_t
	+
	\varepsilon_{it} \geq 0 \right\},
	\qquad
	i = 1, \ldots, N,
	\quad
	t = 1, \ldots, T,
	\notag
\end{align}
where $\varphi = 1$, and $\lambda_i \sim \textup{iid}\ \mathcal{N}( 0, 1 )$, $\gamma_t \sim \textup{iid}\ \mathcal{N}\big( 0, 1 \big)$, and $\varepsilon_{it} \sim \textup{iid}\ \mathcal{N}( 0, 1 )$ are mutually independent. We consider two designs for $x_{it}$
\begin{align}
	x_{it}
	=
	\mathbf{1} \left\{ \eta_{it} > 0 \right\}
	\qquad \text{or} \qquad
	x_{it}
	=&\
	\mathbf{1} \left\{ y_{i,t-1} > 0 \right\}
	+
	\mathbf{1} \left\{ \eta_{it} > 0 \right\},
	\notag \\
	x_{i1}
	=&\
	\mathbf{1} \left\{ \eta_{i1} > 0 \right\},
	\notag
\end{align}
where $\eta_{it} \sim \textup{iid}\ \mathcal{N} ( 0 , 1 )$ and is independent of $\lambda_{i}$, $\gamma_{t}$, and $\varepsilon_{it}$. Our first jackknife estimator is based on the subsamples
\begin{align}
	S_{ 1 }
	&\coloneqq
	\left\{ 1 , \ldots , N \right\}
	\times
	\left\{ 1 , \ldots , {T_{ 2 }} \right\} ,
	\notag \\
	S_{ 2 }
	&\coloneqq
	\left\{ 1 , \ldots , N \right\}
	\times
	\left\{ {T_{ 2 }} + 1 , \ldots , T \right\} ,
	\notag \\
	S_{ 3 }
	&\coloneqq
	\left\{ 1 , \ldots , {N_{ 2 }} \right\}
	\times
	\left\{ 1 , \ldots , T \right\} ,
	\notag \\
	S_{ 4 }
	&\coloneqq
	\left\{ {N_{ 2 }} + 1 , \ldots , N \right\}
	\times
	\left\{ 1 , \ldots , T \right\} ,
	\notag
\end{align} 
where we assume $N = 2 {N_{2}}$ and $T = 2 {T_2}$ where ${N_2}$ and ${T_2}$ are integers. Let $\bs{ {\hat{\varphi}} }_{(a)} \coloneqq ( {\hat{\varphi}}^{(0)}, {\hat{\varphi}}^{(1)}, {\hat{\varphi}}^{(2)}, {\hat{\varphi}}^{(3)}, {\hat{\varphi}}^{(4)} )^{ \top }$. We construct an MVUJ estimator
\begin{align}
	\tilde{{\varphi}}_{(a)}
	\coloneqq
	3 {\hat{\varphi}}^{(0)}
	-
	\frac{ 1 }{ 2 }
	\left( {\hat{\varphi}}^{(1)} + {\hat{\varphi}}^{(2)} + {\hat{\varphi}}^{(3)} + {\hat{\varphi}}^{(4)} \right).
	\notag
\end{align}
Two admissible variance weights vectors are
\begin{align}
	\bs{u}_1^\ast
	=
	\left( 0 ,\, \frac{1}{2} ,\, -\frac{1}{2} ,\, 0 ,\, 0 \right)^\top
	\quad
	\text{and}
	\quad
	\bs{u}_2^\ast
	=
	\left( 0,\, 0 ,\, 0 ,\, \frac{1}{2} ,\, -\frac{1}{2} \right)^\top, \notag 
\end{align}
with which we construct two jackknife $t$-statistics
\begin{align}
	\mathcal{J}{(a)}
	=
	\frac{ \tilde{{\varphi}}_{(a)} - {\varphi} }{ \tilde{\sigma}_{(a)} }
	\xrightarrow{d}
	t_{1} 
	\quad
	\text{with}
	\quad
	\tilde{\sigma}_{(a)}
	=
	| \bs{u}_1^{\ast \top} \bs{\hat{\varphi}}_{(a)} |, \notag 
\end{align}
and 
\begin{align}
	\mathcal{J}{(b)}
	=
	\frac{ \tilde{{\varphi}}_{(a)} - {\varphi} }{ \tilde{\sigma}_{(b)} }
	\xrightarrow{d}
	t_{ 2 } 
	\quad
	\text{with}
	\quad
	\tilde{\sigma}_{(b)}
	=
	\sqrt{ \frac{1}{2} ( \bs{u}_1^{\ast \top} \bs{\hat{\varphi}}_{(a)} )^2 + \frac{1}{2} ( \bs{u}_2^{\ast \top} \bs{\hat{\varphi}}_{(a)} )^2 }, \notag 
\end{align}
Our second set of subsamples utilise $S_{ 1 }$, $S_{ 2 }$, and
\begin{align}
	S_{ 4 + g }
	&\coloneqq
	\left\{ \left( g - 1 \right) N_{5} + 1, \ldots, g N_{5} \right\}
	\times
	\left\{ 1, \ldots, T \right\} ,
	\qquad
	g = 1, \ldots, 5 .
	\notag
\end{align}
where we assume $N = 5 N_{5}$ where $N_5$ is an integer. Let
\begin{align}
	\bs{ {\hat{\varphi}} }_{(c)}
	\coloneqq
	\left( {\hat{\varphi}}^{(0)} ,\, {\hat{\varphi}}^{(1)},\, {\hat{\varphi}}^{(2)},\, {\hat{\varphi}}^{(5)},\, {\hat{\varphi}}^{(6)},\, {\hat{\varphi}}^{(7)},\, {\hat{\varphi}}^{(8)},\, {\hat{\varphi}}^{(9)} \right)^{\top},
	\notag
\end{align}
where ${\hat{\varphi}}^{(4+g)}$ is computed on the $g$-th cross-sectional fifth $S_{4+g}$. We construct an MVUJ estimator
\begin{align}
	\tilde{{\varphi}}_{(c)}
	\coloneqq
	\frac{ 9 }{ 4 } {\hat{\varphi}}^{(0)}
	-
	\frac{ 1 }{ 2 } {\hat{\varphi}}^{(1)}
	-
	\frac{ 1 }{ 2 } {\hat{\varphi}}^{(2)}
	-
	\frac{ 1 }{ 20 } \sum_{ g = 1 }^{ 5 } {\hat{\varphi}}^{ ( 4 + g ) } ,
	\notag
\end{align}
together with the jackknife standard error
\begin{align}
	\tilde{\sigma}_{(c)}
	\coloneqq
	\sqrt{ \frac{ 1 }{ 5 } \sum_{ l = 1 }^{ 5 } ( \bs{u}_l^{\ast \top}\bs{ {\hat{\varphi}} }_{(c)} )^2 } ,
	\notag
\end{align}
where
\begin{align}
	\bs{u}_{1}^{\ast}
	&\coloneqq
	\left( 
	0,\, \frac{ 1 }{ 2 },\, -\frac{ 1 }{ 2 },\, 0,\, 0,\, 0,\, 0,\, 0
	\right)^{\top},
	\notag \\
	\bs{u}_{2}^{\ast}
	&\coloneqq
	\left( 
	0,\, 0 ,\, 0,\, \frac{ 1 }{ \sqrt{ 10 } },\, -\frac{ 1 }{ \sqrt{ 10 } },\, 0,\, 0,\, 0
	\right)^{\top},
	\notag \\
	\bs{u}_{3}^{\ast}
	&\coloneqq
	\left( 
	0 ,\, 0 ,\, 0 ,\, \frac{ 1 }{ \sqrt{ 30 } } ,\, \frac{ 1 }{ \sqrt{ 30 } } ,\, -\frac{ 2 }{ \sqrt{ 30 } } ,\, 0 ,\, 0
	\right)^{\top},
	\notag \\
	\bs{u}_{4}^{\ast}
	&\coloneqq
	\left( 
	0,\, 0,\, 0,\, \frac{ 1 }{ \sqrt{ 60 } },\, \frac{ 1 }{ \sqrt{ 60 } },\, \frac{ 1 }{ \sqrt{ 60 } },\, -\frac{ 3 }{ \sqrt{ 60 } },\, 0
	\right)^{\top},
	\notag \\
	\bs{u}_{5}^{\ast}
	&\coloneqq
	\left(
	0,\, 0,\, 0,\, \frac{ 1 }{ 10 },\, \frac{ 1 }{ 10 },\, \frac{ 1 }{ 10 },\, \frac{ 1 }{ 10 },\, -\frac{ 4 }{ 10 }
	\right)^{\top}.
	\notag
\end{align}
This yields a jackknife $t$-statistic
\begin{align}
	\mathcal{J}{(c)}
	\coloneqq
	\frac{ \tilde{{\varphi}}_{(c)} - {\varphi} }{ \tilde{\sigma}_{(c)} }
	\xrightarrow{d}
	t_{5} .
	\notag
\end{align}

Table~{\ref{Tab2}} below reports empirical coverage probabilities and length of a 95\% two-sided confidence interval for ${\varphi}$ constructed by inverting $\mathcal{J}{(a)}$, $\mathcal{J}{(b)}$, and $\mathcal{J}{(c)}$. We also report empirical bias and standard error of the bias-corrected estimates.

\clearpage
\begin{landscape}
	
	\begin{table}[p]
		\centering
		\small
		\setlength{\tabcolsep}{4pt}
		\renewcommand{\arraystretch}{1.25}
		
		\caption{Probit Model with Interactive Effects}
		\label{Tab2}
		
		\newsavebox{\TabTwoBox}
		\sbox{\TabTwoBox}{%
			\begin{tabular}{@{}cc lcccc cccc@{}}
				\toprule\toprule
				\multicolumn{2}{c}{} & &
				\multicolumn{4}{c}{Exog.} &
				\multicolumn{4}{c}{Pred.} \\
				\cmidrule(lr){4-7}\cmidrule(lr){8-11}
				$N$ & $T$ & &
				MLE & JK(a) & JK(b) & JK(c) &
				MLE & JK(a) & JK(b) & JK(c) \\
				\cmidrule(lr){4-4}\cmidrule(lr){5-5}\cmidrule(lr){6-6}\cmidrule(lr){7-7}
				\cmidrule(lr){8-8}\cmidrule(lr){9-9}\cmidrule(lr){10-10}\cmidrule(lr){11-11}
				& & &
				& $t_{ 1 }$ & $t_{ 2 }$ & $t_{ 5 }$ &
				& $t_{ 1 }$ & $t_{ 2 }$ & $t_{ 5 }$ \\
				\midrule
				
				\multirow{4}{*}{100} & \multirow{4}{*}{10} & Bias
				& 0.1197 & -0.0326 & -0.0326 & -0.0344
				& 0.1002 & -0.0556 & -0.0556 & -0.0573 \\
				& & Std. Err.
				& 0.0973 & 0.1012 & 0.1012 & 0.0992
				& 0.1446 & 0.1884 & 0.1884 & 0.1680 \\
				\cmidrule(lr){3-11}
				& & Coverage
				& -- & 0.9483 & 0.9585 & 0.9639
				& -- & 0.9192 & 0.9304 & 0.9173 \\
				& & Length
				& -- & 1.9973 & 0.8602 & 0.5807
				& -- & 1.8011 & 0.8617 & 0.5922 \\
				\midrule
				
				\multirow{4}{*}{250} & \multirow{4}{*}{20} & Bias
				& 0.0554 & -0.0028 & -0.0028 & -0.0023
				& 0.0612 & -0.0023 & -0.0023 & -0.0026 \\
				& & Std. Err.
				& 0.0372 & 0.0370 & 0.0370 & 0.0369
				& 0.0328 & 0.0435 & 0.0435 & 0.0386 \\
				\cmidrule(lr){3-11}
				& & Coverage
				& -- & 0.9517 & 0.9568 & 0.9617
				& -- & 0.9470 & 0.9575 & 0.9573 \\
				& & Length
				& -- & 0.7664 & 0.3055 & 0.1949
				& -- & 0.6068 & 0.2606 & 0.1619 \\
				\midrule
				
				\multirow{4}{*}{1000} & \multirow{4}{*}{80} & Bias
				& 0.0139 & -0.0004 & -0.0004 & -0.0003
				& 0.0155 & -0.0003 & -0.0003 & -0.0003 \\
				& & Std. Err.
				& 0.0085 & 0.0084 & 0.0084 & 0.0084
				& 0.0063 & 0.0062 & 0.0062 & 0.0062 \\
				\cmidrule(lr){3-11}
				& & Coverage
				& -- & 0.9465 & 0.9495 & 0.9535
				& -- & 0.9468 & 0.9517 & 0.9492 \\
				& & Length
				& -- & 0.1717 & 0.0657 & 0.0421
				& -- & 0.1273 & 0.0490 & 0.0312 \\
				
				\bottomrule\bottomrule
			\end{tabular}%
		}
		
		\usebox{\TabTwoBox}
		
		\vspace{0.4em}
		
		\parbox{\wd\TabTwoBox}{%
			\footnotesize
			\raggedright
			\textit{Notes:} Exog. and Pred. denote the exogenous and predetermined regressor, respectively, MLE is the uncorrected ML estimator, and JK(a) - JK(c) are the jackknife estimators described in the main text. Reported are empirical bias and empirical standard error of the estimators, and empirical coverage and length of a 95\% two-sided confidence interval.
		}
	\end{table}
	
\end{landscape}

Both jackknife bias-corrected estimators achieve substantial bias reduction. When the covariate is exogenous there is very little inflation of the standard error when compared to the uncorrected ML estimator. When the covariate is predetermined, inflation of the standard error is observed. This occurs due to moderate dependence between partitions of the data in finite samples which diminishes as sample size (in particular the time-dimension) increases. Overall coverage is consistently close to its nominal value, particularly in the design where the covariate is exogenous. As expected, the use of additional variance weights vectors to produce $t_q$-statistics with $q > 1$ significantly reduces the length of confidence intervals.

\section{Conclusion}
This paper develops a straightforward and computationally inexpensive method of inference for fixed effects models. More broadly, this paper provides a general framework for the systematic development of jackknife bias-corrected estimators and jackknife $t$-statistics, that accommodates a wide range of models and estimation approaches, multi-dimensional panels, and higher-order bias corrections. Thus far, we have largely proceeded from the point at which a suitable collection of subsamples is obtained. Ongoing work develops general principles for constructing such subsamples, including optimality. Results in this regard will appear in subsequent revisions. 

Beyond fixed effects models, there has been increased interest in automatic inference procedures in other settings where estimators exhibit asymptotic bias. Much of this literature emphasises bootstrap or analytic bias correction (see, e.g., \cite{cavaliere2024bootstrap}). Our method provides an alternative to the bootstrap, one which both exhibits a substantially reduced computational burden and enjoys a high degree of model agnosticism. As such, our method may be usefully employed elsewhere. 

One promising line of enquiry is two-stage estimators, where the final stage estimator inherits bias from the first stage. Examples of this include models with many first-stage covariates \citep{cattaneo2019two} and factor-augmented regressions with estimated factors \citep{goncalves2014estimating}. In such instances, we may derive jackknife bias-corrected estimators and jackknife $t$-statistics through the generation of subsamples. Another promising line of enquiry is semiparametric and nonparametric estimators where asymptotic bias can arise as a consequence of smoothing \citep{calonico2014robust,calonico2018effect}. In such instances, we may manipulate bandwidth in order to produce a collection of estimators which exhibit different linear combinations of the same bias terms. Ongoing work explores the applications of our method in these settings.   

\newpage

\begin{appendices}

\section{Proof of Theorem \ref{thm:jk2q}}\label{appA}
Since $\bs{g}( \cdot )$ is continuously differentiable in a neighbourhood of $\bs{\varphi}$, for each $j = 0, \ldots, m - 1$,
\begin{align}
	\bs{g}\left( \hat{\bs{\varphi}}^{(j)} \right)
	=
	\bs{g}\left( \bs{\varphi} \right)
	+
	\bs{\bar{G}}^{(j)}
	\left( \hat{\bs{\varphi}}^{(j)} - \bs{\varphi} \right), \label{eqndd11}
\end{align}
by the mean-value theorem, where $\bs{\bar{G}}^{(j)}$ is a $k \times p$ matrix whose $s$-th row is equal to the $s$-th row of $\bs{G}( \bs{\bar{\varphi}}_{s}^{(j)})$, with $\bs{\bar{\varphi}}_{s}^{(j)} = \alpha_{j,s} \hat{\bs{\varphi}}^{(j)} + ( 1 - \alpha_{j,s} ) \bs{\varphi}$ for some $\alpha_{j,s} \in ( 0, 1 )$. Define
\begin{align}
	\bs{\bar{G}}
	\coloneqq
	\textup{blkdiag}
	\left(
	\bs{\bar{G}}^{(0)},
	\ldots,
	\bs{\bar{G}}^{(m-1)}
	\right).
	\notag
\end{align}
Stacking \eqref{eqndd11} over $j$ and multiplying by $r_{NT}$ gives
\begin{align}
	r_{NT} \bs{h}( \hat{\bs{\varphi}} )
	=
	r_{NT} \bs{h}( \bs{\varphi} )
	+
	\bs{\bar{G}}
	r_{NT}\left( \hat{\bs{\varphi}} - ( \bs{\iota}_{m} \otimes \bs{\varphi} ) \right).
	\notag
\end{align}
If $\bs{g}( \bs{\varphi} ) = \bs{0}_{k \times 1}$, we have $\bs{h}( \bs{\varphi} ) = \bs{0}_{mk}$, and, since $\hat{\bs{\varphi}}^{(j)} - \bs{\varphi} = \bs{\smallO}_{p}( 1 )$ for $j = 0, \ldots, m - 1$ under Assumption \ref{AJKs}, continuity of $\bs{G}( \cdot )$ implies
\begin{align}
	\bs{\bar{G}}
	=
	( \bs{I}_{m} \otimes \bs{G}( \bs{\varphi} ) )
	+
	\bs{\smallO}_{p}( 1 ).
	\notag
\end{align}
Hence,
\begin{align}
	r_{NT} \bs{h}( \hat{\bs{\varphi}} ) 
	=&\
	( \bs{I}_{m} \otimes \bs{G}( \bs{\varphi} ) )\,
	r_{NT}\left( \hat{\bs{\varphi}} - ( \bs{\iota}_{m} \otimes \bs{\varphi} ) \right)
	+
	\bs{\smallO}_{p}( 1 ).
	\notag
\end{align}
Substituting Assumption \ref{AJKs} yields
\begin{align}
	r_{NT} \bs{h}( \hat{\bs{\varphi}} ) 
	=&\
	( \bs{I}_{m} \otimes \bs{G}( \bs{\varphi} ) ) \bs{z}_{NT} 
	+
	( \bs{A} \otimes \bs{G}( \bs{\varphi} ) ) \bs{\mu}_{NT}
	+
	\bs{\smallO}_{p}( 1 ).
	\notag 
\end{align}
Now, premultiplying by $( \bs{v}^{\ast\top} \otimes \bs{I}_{k} )$ gives
\begin{align}
	r_{NT} ( \bs{v}^{\ast\top} \otimes \bs{I}_{k} ) \bs{h}( \hat{\bs{\varphi}} ) 
	=&\
	( \bs{v}^{\ast\top} \otimes \bs{G}( \bs{\varphi} ) ) \bs{z}_{NT} 
	+
	\bs{\smallO}_{p}( 1 ),
	\notag 
\end{align}
since $\bs{v}^{\ast \top} \bs{A} = \bs{0}$. Similarly, for $l = 1,\ldots,q$,
\begin{align}
	r_{NT} ( \bs{u}^{\ast\top}_{l} \otimes \bs{I}_{k} ) \bs{h}( \hat{\bs{\varphi}} ) 
	=
	( \bs{u}^{\ast\top}_{l} \otimes \bs{G}( \bs{\varphi} ) ) \bs{z}_{NT} 
	+
	\bs{\smallO}_{p}( 1 ).
	\notag 
\end{align}
Let $\bs{\xi} \sim \mathcal{N}( \bs{0}_{mp}, ( \bs{C} \otimes \bs{\Sigma} ) )$. Then for any $m \times 1$ vectors $\bs{b}, \bs{d}$,
\begin{align}
	\text{Cov}\left(
	( \bs{b}^\top \otimes \bs{G}( \bs{\varphi} ) ) \bs{\xi},
	( \bs{d}^\top \otimes \bs{G}( \bs{\varphi} ) ) \bs{\xi}
	\right)
	=&\
	( \bs{b}^\top \otimes \bs{G}( \bs{\varphi} ) )(  \bs{C} \otimes \bs{\Sigma} ) ( \bs{d} \otimes \bs{G}^\top( \bs{\varphi} ) ) 
	\notag \\
	=&\
	( \bs{b}^\top \bs{C} \bs{d} ) \bs{G}( \bs{\varphi} ) \bs{\Sigma} \bs{G}( \bs{\varphi} )^\top.
	\notag 
\end{align}
Applying this with $\bs{b} = \bs{v}^{\ast}$ and $\bs{d} = \bs{u}_{l}^{\ast}$, and using 
\begin{align}
	\bs{u}^{\ast\top}_{l} \bs{C} \bs{v}^{\ast} = 0,
	\quad
	\bs{u}^{\ast\top}_{l} \bs{C} \bs{u}^{\ast}_{l} = \bs{v}^{\ast\top} \bs{C} \bs{v}^{\ast},
	\quad
	\text{and}
	\quad
	\bs{u}^{\ast\top}_{l_1} \bs{C} \bs{u}^{\ast}_{l_2} = 0\ \text{for}\ l_1 \neq l_2,
	\notag 
\end{align}
it then follows that
\begin{align}
	\begin{pmatrix}
		r_{NT} ( \bs{v}^{\ast\top} \otimes \bs{I}_{k} ) \bs{h}( \hat{\bs{\varphi}} ) \\
		r_{NT} ( \bs{u}^{\ast\top}_{1} \otimes \bs{I}_{k} ) \bs{h}( \hat{\bs{\varphi}} )  \\
		\vdots \\
		r_{NT} ( \bs{u}^{\ast\top}_{q} \otimes \bs{I}_{k} ) \bs{h}( \hat{\bs{\varphi}} ) 
	\end{pmatrix}
	\xrightarrow{d}
	\mathcal{N}\left( 
	\bs{0}_{(q+1)k},
	\bs{I}_{q+1} \otimes \left( \bs{v}^{\ast\top} \bs{C} \bs{v}^{\ast} \bs{G}( \bs{\varphi} ) \bs{\Sigma} \bs{G}( \bs{\varphi} )^\top \right) 
	\right),
	\label{eq11}
\end{align}
as $N,T \rightarrow \infty$ and $N / T \rightarrow \gamma^2 \in ( 0, \infty )$, where $\bs{v}^{\ast\top} \bs{C} \bs{v}^{\ast} > 0$ since $\mathcal{V}\ \cap\ \textup{null}( \bs{C} ) = \emptyset$, and $\bs{G}( \bs{\varphi} ) \bs{\Sigma} \bs{G}( \bs{\varphi} )^\top \succ \bs{0}_{k \times k}$ by assumption. Define
\begin{align}
	\bs{\omega}_{NT}^{(0)}
	\coloneqq
	\left( \bs{v}^{\ast\top} \bs{C} \bs{v}^{\ast} \bs{G}( \bs{\varphi} ) \bs{\Sigma} \bs{G}( \bs{\varphi} )^\top \right)^{-\frac{1}{2}}
	r_{NT} ( \bs{v}^{\ast\top} \otimes \bs{I}_{k} ) \bs{h}( \hat{\bs{\varphi}} ),
	\notag
\end{align}
and for $l=1,\ldots,q$,
\begin{align}
	\bs{\omega}_{NT}^{(l)}
	\coloneqq
	\left( \bs{v}^{\ast\top} \bs{C} \bs{v}^{\ast} \bs{G}( \bs{\varphi} ) \bs{\Sigma} \bs{G}( \bs{\varphi} )^\top \right)^{-\frac{1}{2}}
	r_{NT} ( \bs{u}^{\ast\top}_{l} \otimes \bs{I}_{k} ) \bs{h}( \hat{\bs{\varphi}} ).
	\notag
\end{align}
Then \eqref{eq11} implies
\begin{align}
	\begin{pmatrix}
		\bs{\omega}_{NT}^{(0)} \\
		\bs{\omega}_{NT}^{(1)} \\
		\vdots \\
		\bs{\omega}_{NT}^{(q)}
	\end{pmatrix}
	\xrightarrow{d}
	\mathcal{N}\left( \bs{0}_{(q+1)k}, \bs{I}_{(q+1)k} \right), \label{vecdis}
\end{align}
from which it follows that $\bs{\omega}_{NT}^{(0)}$ and $\bs{\omega}_{NT}^{(1)},\ldots,\bs{\omega}_{NT}^{(q)}$ are asymptotically identically and independently distributed. Now, by the above, it follows from the continuous mapping theorem that
\begin{align}
	\frac{1}{q}\sum_{l=1}^{q}\bs{\omega}_{NT}^{(l)}( \bs{\omega}_{NT}^{(l)})^\top
	\xrightarrow{d}
	\frac{1}{q} 
	\bs{W},
	\quad
	\text{with}
	\quad
	\bs{W}
	\sim
	\bs{\mathcal{W}}_k ( \bs{I}_k, q ), \notag 	
\end{align}
where $\bs{\mathcal{W}}_k ( \bs{I}_k, q )$ denotes the Wishart distribution with parameters $\bs{I}_{k}$ and $q$. Appealing once more to the continuous mapping theorem
\begin{align}
	\mu_{\min}\left( \frac{1}{q}\sum_{l=1}^{q}\bs{\omega}_{NT}^{(l)}( \bs{\omega}_{NT}^{(l)})^\top \right)
	\xrightarrow{d}
	\mu_{\min}\left( \frac{1}{q} \bs{W} \right), \notag 
\end{align}
as $N,T \rightarrow \infty$, where $
\mathbb{P}( \text{det}( \bs{W} ) \neq 0 ) = 1$. Hence, we may write  
\begin{align}
	\mathcal{J}^{2}_{q}
	=
	( \bs{\omega}_{NT}^{(0)} )^{\top}
	\left(
	\frac{1}{q}\sum_{l=1}^{q}\bs{\omega}_{NT}^{(l)}( \bs{\omega}_{NT}^{(l)})^\top
	\right)^{-1}
	\bs{\omega}_{NT}^{(0)}. \label{omeg} 
\end{align}
Now let $\bs{Z}_{0}, \bs{Z}_{1}, \ldots, \bs{Z}_{q}$ be independent with $\bs{Z}_{l} \sim \mathcal{N}( \bs{0}_{k}, \bs{I}_{k} )$ for $l = 0, 1, \ldots, q$. Using \eqref{vecdis} and \eqref{omeg},
\begin{align}
	\mathcal{J}^{2}_{q}
	\xrightarrow{d}
	\bs{Z}_{0}^{\top}
	\left(
	\frac{1}{q}\sum_{l=1}^{q}\bs{Z}_{l}\bs{Z}_{l}^\top
	\right)^{-1}
	\bs{Z}_{0}
	\sim 
	t^2_{q,k}, \notag 
\end{align}
using Theorem 5.2.2 in \cite{anderson2003multivariate}. 
\qed

\end{appendices}

\end{document}